\g@addto@macro\bfseries{\boldmath}
\title{Understanding model counting for $\beta$-acyclic CNF-formulas}
\newtheorem{theorem}{Theorem}
\newtheorem{definition}[theorem]{Definition}
\newtheorem{corollary}[theorem]{Corollary}
\newtheorem{lemma}[theorem]{Lemma}
\newcommand{\var}{\textsf{var}}
\newcommand{\supp}{\textsf{supp}}
\newcommand{\cla}{\textsf{cla}}
\newcommand{\N}{\mathbb{N}}
\newcommand{\Qp}{\mathbb{Q}_+}
\newcommand{\calH}{\mathcal{H}}
\newcommand{\calC}{\mathcal{C}}
\newcommand{\calT}{\mathcal{T}}
\renewcommand{\d}{\mu}
\newcommand{\tw}{\mathbf{tw}}
\newcommand{\mimw}{\mathbf{mimw}}
\newcommand{\mmw}{\mathbf{mmw}}
\newcommand{\psw}{\mathbf{psw}}
\newcommand{\sSAT}{\mathrm{\#SAT}} 
\newcommand{\mSAT}{\mathrm{MaxSAT}} 
\newcommand{\CSP}{\mathrm{CSP}} 
\newcommand{\sCSP}{\mathrm{\# CSP}} 
\newcommand{\sCSPd}{\mathrm{\# CSP_d}} 
\newcommand{\mCSPd}{\mathrm{MaxCSP_d}} 
\renewcommand{\SAT}{\mathrm{SAT}}
\newcommand{\sP}{\mathbf{\#P}}
\newcommand{\fourfrac}[5]{\frac{w(#1,#5)w(#2,#5)}{w(#3,#5)w(#4,#5)}}
\newcommand{\twofrac}[3]{\frac{w(#1,#3)}{w(#2,#3)}}
\newcommand{\twofracm}[3]{\frac{m(#1,#3)}{m(#2,#3)}}
\newcommand{\ck}[2]{#1^{(#2)}}
\begin{document}

\author{
Johann Brault-Baron\footnote{ LSV UMR 8643, ENS Cachan and Inria, France} \and Florent Capelli\footnote{IMJ UMR 7586  -  Logique,
Université Paris Diderot,
France} \and Stefan Mengel\footnote{ %Laboratoire d'Informatique,
LIX UMR 7161,
Ecole Polytechnique,
France}
}

\maketitle

\begin{abstract}
We extend the knowledge about so-called structural restrictions of $\sSAT$ by giving a polynomial time algorithm for $\beta$-acyclic $\sSAT$. In contrast to previous algorithms in the area, our algorithm does not proceed by dynamic programming but works along an elimination order, solving a weighted version of constraint satisfaction. Moreover, we give evidence that this deviation from more standard algorithm is not a coincidence, but that there is likely no dynamic programming algorithm of the usual style for $\beta$-acyclic $\sSAT$.
\end{abstract}

% \tableofcontents 

\section{Introduction}

The propositional model counting problem $\sSAT$ is, given a CNF-formula $F$, to count the satisfying assignments of $F$. $\sSAT$ is the canonical $\sP$-complete problem and is thus central to the area of counting complexity. Moreover, many important problems in artificial intelligence research reduce to $\sSAT$ (see e.g.~\cite{Roth96}), so there is also great interest in the problem from a practical point of view.

Unfortunately, $\sSAT$ is computationally very hard: even for very restricted CNF-formulas, e.g. monotone 2-CNF-formulas, the problem is $\sP$-hard and in fact even $\sP$-hard to approximate~\cite{Roth96}. Thus the focus of research in finding tractable classes of $\sSAT$-instances has turned to so-called \emph{structural} classes, which one gets by assigning a graph or hypergraph to a CNF-formula and then restricting the class of (hyper)graphs considered. The general idea is that if the (hyper)graph associated to an instance has a treelike decomposition that is ``nice'' enough, e.g.~a tree decomposition of small width, then there is a dynamic programming algorithm that solves $\sSAT$ for the instance. In the recent years, there has been a push for constructing such dynamic programming algorithms for ever more general classes of graphs and hypergraphs, see e.g.~\cite{FMR-08,Samer:2010wc,PaulusmaSS13,SlivovskyS13,CapelliDM14}.

Very recently, S{\ae}ther, Telle and Vatshelle, in a striking contribution~\cite{SaetherTV14}, have introduced a new width measure for CNF-formulas, that they call PS-width. Essentially, it is a measure for how much information has to be propagated from one step to the next in a natural formalization of the known dynamic programming algorithms. In our opinion, PS-width thus gives an upper bound on how far the dynamic programming techniques from the literature can be extended. Moreover, S{\ae}ther, Telle and Vatshelle have shown that if one is given a formula $F$ and a decomposition of small PS-width, one can efficiently count the number of satisfying assignments of $F$. Thus they have essentially turned the construction of dynamic programming algorithms into a question of graph theory: If, for a class of formulas, one can efficiently compute decompositions that have small PS-width for all formulas having these graphs, the dynamic programming of \cite{SaetherTV14} solves these instances. In fact, PS-width gives a uniform explanation for \emph{all} structural tractability results for $\sSAT$ from the literature that we are aware of.
On the other hand, since, in our opinion, the framework of \cite{SaetherTV14} is a very good formalization of dynamic programming, there is likely no efficient dynamic programming algorithm for a class of CNF-formulas, if it does not have decompositions of small PS-width, or if these decompositions cannot be constructed efficiently.

In this article, we focus on \emph{$\beta$-acyclic} CNF-formulas, i.e., formulas whose associated hypergraph is $\beta$-acyclic. There are several different reasonable ways of defining acyclicity of hypergraphs that have been proposed~\cite{Fagin83,Duris12}, and $\beta$-acyclicity is the most general acyclicity notion discussed in the literature for which $\sSAT$ could be tractable (see the discussions in~\cite{OrdyniakPS13,CapelliDM14}). The complexity of $\sSAT$ for $\beta$-acyclic formulas is interesting for several reasons: First, up to this paper, it was the only structural class of formulas for which we know that $\SAT$ is tractable \cite{OrdyniakPS13} without this directly generalizing to a tractability result for $\sSAT$. This is because the algorithm of \cite{OrdyniakPS13} does \emph{not} proceed by dynamic programming but uses resolution, a technique that is known to generally not generalize to counting. Moreover, $\beta$-acyclicity can be generalized to a width-measure~\cite{GottlobP01}, so there is hope that a good algorithm for $\beta$-acyclic formulas might generalize to wider classes for which even the status for $\SAT$ is left as an open problem in \cite{OrdyniakPS13}. Since decomposition techniques based on hypergraph acyclicity tend to be more general than graph-based techniques \cite{GottlobLS00}, this might lead to large, new classes of tractable $\sSAT$-instances.

The contribution of this paper is twofold: First, we show that $\sSAT$ on $\beta$-acyclic hypergraphs is tractable. In fact, we show that a more general counting problem which we call \emph{weighted counting for constraint satisfaction with default values}, short $\sCSPd$, is tractable on $\beta$-acyclic hypergraphs. We remark that there is another line of research on $\sCSP$, the counting problem related to constraint satisfaction, where dichotomy theorems for weighted $\sCSP$ depending on fixed constraint languages are proven, see e.g.~\cite{bulatov,chen}. We do \emph{not} assume that the relations of our instances are fixed but we consider them as part of the input. Thus our results and those on fixed constraint languages are completely unrelated. Instead, the structural restrictions we consider are similar to those considered e.g.~in~\cite{DalmauJ04}, but since we allow clauses, resp. relations, of unbounded arity, our results and those of \cite{DalmauJ04} are incomparable.

We note that our algorithm is in style very different from the algorithms for structural $\sSAT$ in the literature. Instead of doing dynamic programming along a decomposition, we proceed along a vertex elimination order which is more similar to the approach to $\SAT$ in \cite{OrdyniakPS13}. But in contrast to using well-understood resolution techniques, we develop from scratch a procedure to update the weights of our $\sCSPd$ instance along the elimination order. Our algorithm is non-obvious and novel, but it is relatively easy to write down and its correctness is easy to prove. Indeed most of the work in this paper is spent on showing the polynomial runtime bound which requires a thorough understanding of how the weights of instances evolve during the algorithm.

Our second contribution is that we show that our tractability result is not covered by the framework of S{\ae}ther, Telle and Vatshelle~\cite{SaetherTV14}, short STV-framework, which---as we show---covers all other known structural tractability results for $\sSAT$. We do this by showing that $\beta$-acyclic $\sSAT$-instances may have a PS-width so high that from \cite{SaetherTV14} we cannot even get subexponential runtime bounds.
This can be seen as an explanation for why the algorithm for $\beta$-acyclic $\sSAT$ is so substantially different from the algorithms from the literature. If one accepts the framework of \cite{SaetherTV14} as a good formalization of dynamic programming---which we do---then the deviation from the usual dynamic programming paradigm is not a coincidence but instead due to the fact that there \emph{is} no efficient dynamic programming algorithm in the usual style. Thus, our algorithm indeed introduces a new algorithmic technique for $\sSAT$ that allows the solution of instances that could not be solved with techniques known before. 

\section{Preliminaries and notation}

\subsection{Weighted counting for constraint satisfaction with default values}

Let $D$ and $X$ be two sets. $D^X$ denotes the set of functions from $X$ to $D$. We think of $X$ as a set of variables and of $D$ as a domain, and thus we call $a \in D^X$ an \emph{assignment} to the variables $X$. A \emph{partial assignment} to the variables $X$ is a mapping in $D^Y$ where $Y \subseteq X$. If $a \in D^X$ and $Y \subseteq X$, we denote by $a|_Y$ the \emph{restriction} of $a$ onto $Y$. For $a \in D^X$ and $b \in D^Y$, we write $a \sim b$ if $a|_{X \cap Y} = b|_{X \cap Y}$ and if $a \sim b$, we denote by $a \cup b$ the mapping in $D^{X \cup Y}$ with $(a \cup b)(x) = a(x)$ if $x \in X$ and $(a \cup b)(x) = b(x)$ otherwise. Let $a \in D^X$, $y \notin X$ and $d \in D$. We write $a \oplus_y d := a \cup \{y \mapsto d\}$.

% We are now ready to define weighted $\CSP$s. It is a natural generalization of $\CSP$s where each constraint have a value depending on the assignment of the variables. 

\begin{definition}
A {\em weighted constraint with default value} $c = (f,\d)$ on variables $X$ and domain $D$ is a function  $f: S \rightarrow \Qp$  with $S \subseteq D^X$ and $\d \in \Qp$. $S = \supp(c)$ is called the {\em support} of $c$, $\d(c) = \d$ its default value and we denote by $\var(c) = X$ the variables of $c$. We define the size $|c|$ of the constraint $c$ to be $|c|:= |S|\cdot |\var(c)|$. The constraint $c$ naturally induces a total function on $D^X$, also denoted by $c$, defined by $c(a) = f(a)$ if $a \in S$ and $c(a) = \d$ otherwise.
\end{definition}

Observe that we do not assume $\var(c)$ to be non-empty. A constraint whose set of variables is empty has only one possible value in its support: the value associated to the empty assignment (the assignment that assigns no variable).

Since we only consider weighted constraints with default value in this paper, we will only say \emph{weighted constraint} where the default value is always implicitly understood. Note that we have to restrict ourselves to non-negative weights, because non-negativity will be crucial in the proofs. This is not a problem in our context, non-negative numbers are sufficient to encode $\sSAT$ as we will see in Section~\ref{sct:otherproblems}.

\begin{definition}
The problem $\sCSPd$ is the problem of computing, given a finite set $I$ of weighted constraints on domain $D$, the {\em partition function} $$w(I) = \sum_{a \in D^{\var(I)}} \prod_{c \in I} c(a|_{\var(c)}),$$
where $\var(I) := \bigcup_{c \in I} \var(c)$. 

The {\em size} $\|I\|$ of a $\sCSPd$-instance $I$ is defined to be $\|I\| := \sum_{c \in I} |c|$. Its {\em structural size} $s(I)$ of $I$ is defined to be $s(I) := \sum_{c \in I} |\var(c)|$.
\end{definition}

Note that the size of an instance as defined above roughly corresponds to that of an encoding in which the non-default values, i.e., the values on the support, are given by listing the support and the associated values in one table for each relation. We consider this convention as very natural and indeed it is near to the conventions in database theory and artificial intelligence.

Given an instance $I$, it will be useful to refer to subinstances of $I$, that is a set $J \subseteq I$. We will also refer to partition function of subinstances under some partial assignment, that is, the partition function of $J$ where some of its variables are forced to a certain value. To this end, for $a \in D^W$, with $W \subseteq \var(I)$, and $J \subseteq I$ with $V' = \var(J)$ we define
$$ w(J,a) := \sum_{b \in D^{V'} \atop a \sim b} \prod_{c \in J} c(b|_{\var(c)}).$$

We use the following straightforward observation throughout this paper
$$ w(J,a) = \sum_{b \in D^{V' \setminus W}} \prod_{c \in J} c((a \cup b)|_{\var(c)}).$$

\subsection{Graphs and hypergraphs associated to CNF-formulas}

We use standard notation for graphs which can e.g.~be found in \cite{Diestel05}.
A {\em hypergraph} $\calH=(V,E)$ consists of a finite set $V$ and a set $E$ of non-empty subsets of $V$. The elements of $V$ are called \emph{vertices} while the elements of $E$ are called \emph{edges}. As usual for graphs, we sometimes denote the vertex set of a hypergraph $\calH$ by $V(\calH)$ and the edge set of $\calH$ by $E(\calH)$.
The size of a hypergraph is defined to be $\|\calH\| = \sum_{e \in E(\calH)} |e|$. 

A {\em subhypergraph} $\calH'$ of a hypergraph $\calH$ is a hypergraph such that $V(\calH')\subseteq V(\calH)$ and $E(\calH')\subseteq \{e\cap V(\calH') \mid e\in E(\calH), e\cap V(\calH') \ne \emptyset\}$. For $S \subseteq V(\calH)$, the {\em induced subhypergraph} of $\calH$ by $S$ is the hypergraph $\calH[S] = (S, \{ e \cap S \mid e \in E(\calH) \} \setminus \{ \emptyset \})$. We denote by $\calH\setminus S$  the hypergraph $\calH[V(\calH)\setminus S]$. If $S$ contains only one vertex $v$, we also write $\calH\setminus v$ for $\calH\setminus \{v\}$.

We are interested in structural restrictions of the problem $\sCSPd$. What we mean by structural restriction is that we restrict the way the variables interact in the different constraints. To formalize this notion, we introduce the hypergraph associated to an instance of $\sCSPd$:
The hypergraph $\calH(I)$ associated to $\sCSPd$-instance $I$ is the hypergraph $\calH(I) := (\var(I), E_I)$ where $E_I := \{\var(c) \mid c \in I\}$. The hypergraph of a CNF-formula is defined as $\calH(F):= (\var(F), E_F)$ where $E_F := \{\var(C) \mid C \in \cla(F)\}$ where $\var(F)$ denotes the set of variables of $F$ and $\cla(F)$ denotes the set of clauses of $F$.

The incidence graph $I(\calH)$ of a hypergraph $\calH=(V,E)$ is the bipartite graph with the vertex set $V\cup E$ and an edge between $v\in V, e\in E$ if and only if $v\in e$. Similarly, the incidence graph $I(F)$ of a CNF-formula $F$ has the vertex set $\var(F)\cup \cla(F)$ and $x\in \var(F)$ and $C\in \cla(F)$ are connected  by an edge if and only if $x$ appears in $C$. 

\subsection{Relation to $\sSAT$}\label{sct:otherproblems}

We show in this section how we can encode $\sSAT$ into $\sCSPd$-instances with the same hypergraphs.

The problem $\SAT$ differs from the classical $\CSP$ framework in the way the constraints are represented. Classically, in $\CSP$, all the solutions to a constraint are explicitly listed. For a CNF-formula however, each clause with $n$ variables has $2^n-1$ solutions, which would lead to a $\CSP$-representation exponentially bigger than the CNF-formula. One way of dealing with this is encoding CNF-formulas into $\CSP$-instances by listing all assignments that are {\em not} solution of a constraint, see e.g.~\cite{CohenGH09}. In this encoding, each clause has only one counter-example and the corresponding $\CSP$-instance is roughly of the same size as the CNF-formula.% \stefan{shorten this at least for the conference version?}

The strength of the $\CSP$ with default values is that it can easily embed both representations. This leads to a polynomial reduction of $\sSAT$ to $\sCSPd$.
\begin{lemma}
  \label{lem:sattocsp}
  Given a CNF-formula $F$ one can construct in polynomial time a $\sCSPd$-instance $I$ on variables $\var(F)$ and domain $\{0,1\}$ such that
  \begin{itemize}
  \item $\calH(F) = \calH(I)$,
  \item for all $a \in \{0,1\}^{\var(F)}$, $a$ is a solution of $F$ if and only if $w(I,a) = 1$, and otherwise $w(I,a) = 0$, and
  \item $s(I) = \|I\| = |F|$.
  \end{itemize}
\end{lemma}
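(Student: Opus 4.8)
The plan is to build the instance $I$ clause by clause, creating for each clause $C \in \cla(F)$ exactly one weighted constraint $c_C$ whose variable set is $\var(C)$, and then argue that the product of these constraints behaves like the characteristic function of the solution set of $F$. Since $\calH(I)$ has edge set $\{\var(c) \mid c \in I\} = \{\var(C) \mid C \in \cla(F)\} = E_F$ and vertex set $\var(F)$, the first bullet $\calH(F) = \calH(I)$ is immediate once the variable sets are chosen this way (we should also check no two clauses collapse into one constraint: if two clauses have the same variable set we simply keep both constraints, which does not change the hypergraph since $E$ is a set).

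For a fixed clause $C$, let $a_C \in \{0,1\}^{\var(C)}$ be the unique assignment to $\var(C)$ that falsifies $C$, i.e., it sets each variable occurring positively in $C$ to $0$ and each variable occurring negatively in $C$ to $1$. I would define $c_C := (f_C, \mu_C)$ with default value $\mu_C = 1$ and support $\supp(c_C) = \{a_C\}$ with $f_C(a_C) = 0$. Then the induced total function on $\{0,1\}^{\var(C)}$ is $c_C(b) = 0$ if $b = a_C$ and $c_C(b) = 1$ otherwise — exactly the truth value of $C$ under $b$. Set $I := \{c_C \mid C \in \cla(F)\}$.

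Now fix $a \in \{0,1\}^{\var(F)}$ and compute $w(I,a)$ using the observation $w(J,a) = \sum_{b \in D^{\var(I) \setminus \var(F)}} \prod_{c \in I} c((a\cup b)|_{\var(c)})$; here $\var(I) = \var(F)$, so the sum collapses to the single term $\prod_{C \in \cla(F)} c_C(a|_{\var(C)})$. Each factor $c_C(a|_{\var(C)})$ equals $1$ if $C$ is satisfied by $a$ and $0$ if it is falsified, so the product is $1$ when all clauses are satisfied — that is, $a$ is a solution of $F$ — and $0$ otherwise. This gives the second bullet.

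For the size bound, each $c_C$ has $|\supp(c_C)| = 1$ and $|\var(c_C)| = |\var(C)|$, so $|c_C| = 1 \cdot |\var(C)| = |\var(C)|$ and likewise $|\var(c_C)| = |\var(C)|$; summing, $\|I\| = s(I) = \sum_{C} |\var(C)| = |F|$ under the standard convention that the size of a CNF-formula is the sum of clause widths (we may need a sentence fixing this convention, or noting $|\var(C)| \le |C|$ and that repeated literals can be removed). The construction is clearly polynomial time. I do not expect a serious obstacle here; the only points requiring care are the bookkeeping about the definition of $|F|$ and making sure the empty clause (if allowed) is handled — an empty clause gives a constraint on no variables whose single value is $0$, forcing $w(I,a) = 0$ for all $a$, consistent with $F$ being unsatisfiable.
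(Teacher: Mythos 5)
Your construction is exactly the paper's: one constraint per clause with default value $1$, support consisting of the unique falsifying assignment with value $0$, and the verification of the three bullets is the routine check the paper leaves implicit. Correct, and essentially the same proof.
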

\begin{proof}
  For each clause $C$ of $F$, we define a constraint $c$ with default value $1$ whose variables are the variables of $C$ and such that $\supp(c) = \{a\}$ and $c(a) = 0$, where $a$ is the only assignment of $\var(C)$ that is not a solution of $C$. It is easy to check that this construction has the above properties.
\end{proof}

\subsection{$\beta$-acyclicity of hypergraphs}

In this section we introduce the characterizations of $\beta$-acyclicity of hypergraphs we will use in this paper. We remark that there are many more characterizations, see e.g.~\cite{BrandstaedtLS99,JBB14}.

\begin{definition}
  Let $\calH$ be a hypergraph. A vertex $x \in V(\calH)$ is defined to be a {\em nest point} if $ \{e \in E(\calH) \mid x\in e\}$ forms a sequence of sets increasing for inclusion, that is $\{e \in E(\calH) \mid x\in e\} = \{e_1, \ldots, e_k\}$ with $e_i \subseteq e_{i+1}$ for $i\in \{1, \ldots, k-1\}$.

A {\em $\beta$-elimination order} for $\calH$ is defined inductively as follows:
\begin{itemize}
\item If $\calH = \emptyset$, then only the empty tuple is a $\beta$-elimination order for $\calH$.
\item Otherwise, $(x_1, \ldots, x_n)$ is a $\beta$-elimination for $\calH$ if $x_1$ is a nest point of $\calH$ and $(x_2,\ldots,x_{n})$ is a $\beta$-elimination order for $\calH\setminus x_1$.
A hypergraph $\calH$ called $\beta$-acyclic if and only if there exists a $\beta$-elimination order for $\calH$.
\end{itemize}
\end{definition}

It is easy to see that one can test $\beta$-acyclicity of a graph in polynomial time and that one can compute a $\beta$-elimination order efficiently if it exists. 

We will also make use of another equivalent characterization of $\beta$-acyclic hypergraphs. A graph $G$ is defined to be \emph{chordal bipartite} if it is bipartite and every cycle of length at least $6$ in $G$ has a chord.

\begin{theorem}[\cite{AusielloDM86}]\label{thm:chorbipbeta}
 A hypergraph is $\beta$-acyclic if and only if its incidence graph is chordal bipartite.
\end{theorem}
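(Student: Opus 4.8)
The plan is to prove both implications; I will first isolate a few routine facts that both directions rest on. Since an incidence graph $I(\calH)$ has its edge set entirely between the two sides $V(\calH)$ and $E(\calH)$, it is always bipartite, so $I(\calH)$ fails to be chordal bipartite exactly when it contains an \emph{induced} (chordless) cycle of length at least $6$; such a cycle alternates between the two sides and therefore has the form $v_1 e_1 v_2 e_2 \cdots v_k e_k v_1$ with $k\ge 3$, the $v_i\in V(\calH)$ pairwise distinct, the $e_i\in E(\calH)$ pairwise distinct, $v_i\in e_{i-1}\cap e_i$ (indices mod $k$), and, by chordlessness, $v_i\notin e_j$ whenever $j\notin\{i-1,i\}$. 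Next, $\beta$-acyclicity is preserved under taking subhypergraphs: if $x$ is a nest point of $\calH$ then it is still a nest point of $\calH\setminus z$ for $z\ne x$ and of $\calH$ with any edge removed, because a subchain of a chain is a chain; an induction along a $\beta$-elimination order then shows that any hypergraph obtained from a $\beta$-acyclic one by deleting vertices and edges is again $\beta$-acyclic. Finally, $I(\calH\setminus x)$ is chordal bipartite whenever $I(\calH)$ is: $I(\calH\setminus x)$ is obtained from $I(\calH)$ by deleting the vertex $x$ and then identifying (or deleting) any edge-vertices $e,e'$ with $e\setminus\{x\}=e'\setminus\{x\}$, which after the deletion of $x$ are twins, so this amounts to passing to an induced subgraph.

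For the implication "$\beta$-acyclic $\Rightarrow$ $I(\calH)$ chordal bipartite", suppose $\calH$ is $\beta$-acyclic but $I(\calH)$ contains an induced cycle $v_1 e_1\cdots v_k e_k v_1$ of length $2k\ge 6$ as above. Put $S=\{v_1,\dots,v_k\}$ and consider $\calH'=(S,\{\{v_i,v_{i+1}\}:i\in\mathbb{Z}/k\mathbb{Z}\})$. From the chordlessness condition one gets $e_i\cap S=\{v_i,v_{i+1}\}$, so $\calH'$ is a subhypergraph of $\calH$ and hence $\beta$-acyclic; but $\calH'$ is the $k$-cycle hypergraph, in which each vertex $v_i$ lies in exactly the two edges $\{v_{i-1},v_i\}$ and $\{v_i,v_{i+1}\}$, and these are distinct (as $k\ge 3$) and incomparable, so $\calH'$ has no nest point. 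Since a nonempty $\beta$-acyclic hypergraph must have a nest point (the first entry of any $\beta$-elimination order), this is a contradiction, and $I(\calH)$ is chordal bipartite.

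For the converse, I would argue by induction on $|V(\calH)|$. The empty case is trivial, and if $\calH$ is nonempty with $I(\calH)$ chordal bipartite it suffices, by the two preservation facts above and the induction hypothesis applied to $\calH\setminus x$, to exhibit a nest point $x$ of $\calH$, since prepending $x$ to a $\beta$-elimination order of $\calH\setminus x$ yields one for $\calH$. Thus the theorem reduces to: \emph{a nonempty hypergraph $\calH$ with no nest point has an induced cycle of length $\ge 6$ in $I(\calH)$} (equivalently, $\calH$ contains what is usually called a $\beta$-cycle). Absence of nest points forces every vertex of $\calH$ to lie in at least two edges, and for each vertex $v$ to have two $\subseteq$-incomparable edges through it. I would then take a longest induced path $Q=(v_0,e_1,v_1,\dots,e_t,v_t)$ in $I(\calH)$ whose endpoints both lie in $V(\calH)$, pick an edge $e'\ne e_1$ through the endpoint $v_0$, and let $v_i$ ($i\ge 1$) be the vertex of smallest index on $Q$ that lies in $e'$ — such a vertex must exist, otherwise $e'$ would extend $Q$, contradicting maximality. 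Then $v_0\,e_1\,v_1\cdots e_i\,v_i\,e'\,v_0$ is an induced cycle of length $2(i+1)$, which is $\ge 6$ as soon as $i\ge 2$.

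The hard part is precisely the degenerate case $i=1$, that is, when every edge through the endpoint $v_0$ other than $e_1$ also contains $v_1$ (so that only $4$-cycles are produced), and symmetrically at $v_t$; here the no-nest-point hypothesis has to be exploited more carefully — for instance by choosing, among all longest induced paths, one that is also extremal in a secondary sense, or by running a separate extremal argument on a $\subseteq$-minimal nonempty intersection $e\cap f$ of two incomparable edges — to force a genuine $\beta$-cycle of length $\ge 3$. I expect this bookkeeping, rather than any conceptual difficulty, to be where the real work lies; the rest of the proof is short.
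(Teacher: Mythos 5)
The paper itself does not prove this theorem; it imports it from \cite{AusielloDM86}, so there is no in-paper proof to compare against, and your proposal has to stand on its own. Your forward direction does: hereditariness of $\beta$-acyclicity under subhypergraphs (nest points survive deletion of other vertices and of edges, then induct along a $\beta$-elimination order), together with the observation that an induced cycle $v_1e_1v_2\cdots v_ke_kv_1$ of length at least $6$ yields the $k$-cycle hypergraph $(\{v_1,\dots,v_k\},\{\{v_i,v_{i+1}\}\})$ as a subhypergraph, which is nonempty and has no nest point, is a complete argument modulo routine details.

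The converse, however, has a genuine gap, and it sits exactly where the substance of the theorem lies. Your reduction (via the two preservation facts and induction on $|V(\calH)|$) to the claim that a nonempty hypergraph with no nest point has an induced cycle of length at least $6$ in its incidence graph is correct, but the extremal argument you offer for that claim does not go through. First, the existence of the vertex $v_i$ is not justified: if $e'$ contains no $v_j$ with $j\ge 1$ but every vertex of $e'\setminus\{v_0\}$ lies on $Q$ or in some edge $e_j$ of $Q$, then $Q$ cannot be extended to a longer induced path with both endpoints in $V(\calH)$, yet no such $v_i$ exists, so maximality of $Q$ gives no contradiction. Second, as you concede yourself, the case $i=1$ (every edge through $v_0$ other than $e_1$ already meeting $Q$ at $v_1$, and symmetrically at $v_t$) is left open, with only a gesture towards a secondary extremal choice. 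That case is not mere bookkeeping: it is precisely the situation in which only $4$-cycles are produced locally and the global absence of nest points must be exploited to build a long induced cycle; this is the nontrivial content of the equivalence between the elimination-order definition and the exclusion of long chordless cycles, and it is the reason the result is cited from \cite{AusielloDM86} rather than reproved. As it stands, your proposal establishes only one of the two implications.
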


We say that a $\sCSPd$-instance $I$ is $\beta$-acyclic if $\calH(I)$ is $\beta$-acyclic and we use an analogous convention for $\sSAT$. Note that the incidence graph of an instance $I$ and that of its hypergraph in general do not coincide, because $I$ might contain several constraints with the same sets of variables. But with Theorem~\ref{thm:chorbipbeta}, it is not hard to see that the incidence graph of an instance $I$ is chordal bipartite if and only if the incidence graph of the hypergraph of $I$ is chordal bipartite, so we can interchangeably use both notions of incidence graphs in this paper without changing the class of instances.

\begin{corollary}
  \label{cor:sattocsp}
  $\sSAT$ is polynomial time reducible to $\sCSPd$. Moreover, $\sSAT$ restricted to $\beta$-acyclic formulas is polynomial time reducible to $\sCSPd$ restricted to $\beta$-acyclic instances. 
\end{corollary}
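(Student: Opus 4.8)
The plan is to derive Corollary~\ref{cor:sattocsp} essentially for free from Lemma~\ref{lem:sattocsp} and Theorem~\ref{thm:chorbipbeta}. First I would handle the unqualified reduction: given a CNF-formula $F$, apply Lemma~\ref{lem:sattocsp} to obtain in polynomial time a $\sCSPd$-instance $I$ on variables $\var(F)$ and domain $\{0,1\}$ with $w(I,a) = 1$ if $a$ satisfies $F$ and $w(I,a) = 0$ otherwise. Then the number of satisfying assignments of $F$ is
$$
\#F \;=\; \sum_{a \in \{0,1\}^{\var(F)}} w(I,a) \;=\; \sum_{a \in \{0,1\}^{\var(F)}} \prod_{c \in I} c(a|_{\var(c)}) \;=\; w(I),
$$
using the definition of $w(I,a)$ and the fact that $\{w(I,a) : a\}$ partitions the sum defining $w(I)$; more precisely, one should note that $\var(I) = \var(F)$ (which holds since $\calH(F) = \calH(I)$), so every assignment counted in $w(I)$ is an assignment to $\var(F)$, and conversely. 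Hence computing $w(I)$ solves $\sSAT$ on $F$, giving the polynomial-time reduction.

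For the second statement I would restrict attention to $\beta$-acyclic $F$, i.e., $F$ such that $\calH(F)$ is $\beta$-acyclic. The instance $I$ produced by Lemma~\ref{lem:sattocsp} satisfies $\calH(I) = \calH(F)$, so $\calH(I)$ is $\beta$-acyclic, which by the convention stated just after Theorem~\ref{thm:chorbipbeta} means precisely that $I$ is a $\beta$-acyclic $\sCSPd$-instance. (Alternatively, invoke Theorem~\ref{thm:chorbipbeta} directly: $\calH(F)$ being $\beta$-acyclic means its incidence graph is chordal bipartite, and since $\calH(I) = \calH(F)$ the same holds for $\calH(I)$; one may additionally note, as the paragraph before the corollary does, that the incidence graph of $I$ itself is chordal bipartite iff that of $\calH(I)$ is.) Thus the reduction $F \mapsto I$ maps $\beta$-acyclic formulas to $\beta$-acyclic instances, establishing the claim.

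I do not anticipate a genuine obstacle here: the corollary is a bookkeeping consequence of the lemma plus the incidence-graph remarks already in the text. The only point requiring a little care is making the identity $\#F = w(I)$ fully rigorous — specifically checking that $\var(I)$ equals $\var(F)$ so that the domains of summation in $\#F$ and in $w(I)$ coincide, and that the third bullet of Lemma~\ref{lem:sattocsp} (or at least the equality of hypergraphs) is what licenses this. A subtlety worth a sentence is that Lemma~\ref{lem:sattocsp} as stated guarantees $\calH(F) = \calH(I)$ but the reduction's correctness for the $\beta$-acyclic case really only needs that, together with Theorem~\ref{thm:chorbipbeta}; so the proof is short, and I would keep it to a couple of lines invoking the lemma, the displayed computation of $w(I)$, and the hypergraph equality.
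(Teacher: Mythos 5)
Your proposal is correct and follows essentially the same route as the paper: invoke the construction of Lemma~\ref{lem:sattocsp}, observe that the number of satisfying assignments of $F$ equals $w(I)$, and note that since the hypergraph is unchanged, $\beta$-acyclicity of $F$ carries over to $I$. The extra remarks about Theorem~\ref{thm:chorbipbeta} and about $\var(I)=\var(F)$ are harmless elaborations of points the paper treats as immediate.
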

\begin{proof}
Taking the construction of Lemma~\ref{lem:sattocsp}, it is clear that the number of solution of $F$ is equal to $w(I)$. The rest follows from the fact that the hypergraph remains unchanged during the reduction.
\end{proof}

\subsection{Width measures of graphs and CNF-Formulas}

In this section we introduce several width measures on graphs and CNF-formulas that are used when relating our algorithm for $\beta$-acyclic $\sCSPd$ to the framework of S{\ae}ther, Telle and Vatshelle~\cite{SaetherTV14}. Readers only interested in the algorithmic part of this paper may safely skip to Section~\ref{sec:cspnest}.

We consider several width notions that are mainly defined by branch decompositions. For an introduction into this area and many more details see~\cite{Vatshelle12}.
For a tree $T$ we denote by $L(T)$ the set of the leaves of $T$. A \emph{branch decomposition} $(T,\delta)$ of a graph $G=(V,E)$ consists of a subcubic tree $T$, i.e., a tree in which every vertex has at most degree $3$, and a bijection $\delta$ between $L(T)$ and $V$. For convenience we often identify $L(T)$ and $V$. Moreover, it is often convenient to see a branch decomposition as rooted tree, and as this does not change any of the notions we define (see~\cite{Vatshelle12}), we generally follow this convention. For every $x\in V(T)$ we define $T_x$ be the subtree of $T$ rooted in $x$. From $x$ we get a partition or \emph{cut} of $V$ into two sets defined by $(L(T_x), V \setminus L(T_x))$. For a set $X\subseteq V$ we often write $\bar{X}$ for $V\setminus X$.

Given a symmetric function $f : 2^V\times 2^V \rightarrow \mathbb{R}$ we define the $f$-width of a branch decomposition $(T, \delta)$ to be $\max_{x\in V(T)} f(L(T_x), V \setminus L(T_x))$, i.e., the $f$-width is the maximum value of $f$ over all cuts of the vertices of $T$. The $f$-branch width of a graph $G$ is defined as the minimum $f$-width of all branch decompositions of $G$.

Given a graph $G=(V,E)$ and a cut $(X, \bar{X})$ of $V$, we define $G[X, \bar{X}]$ to be the graph with vertex set $V$ and edge set $\{uv\mid u\in X, v\in \bar X, uv\in E\}$. 

%\stefan{move this in conference version?}
We will use at several places the well-known notion of treewidth of a graph $G$, denoted by $\tw(G)$. Instead of working with the usual definition of treewidth (see e.g.~\cite{Bodlaender93}), it is more convenient for us to work with the strongly related notion of \emph{Maximum-Matching-width} (short \emph{MM-width}) introduced by Vatshelle~\cite{Vatshelle12}. The MM-width of a graph $G$, denoted by $\mmw(G)$, is defined as the $f$-branch width of $G$ for the function $f$ that, given a cut $(X, \bar{X})$ of $G$, computes the size of the maximum matching of $G[X, \bar X]$.
MM-width and treewidth are linearly related \cite[p. 28]{Vatshelle12}.

\begin{lemma}\label{lem:MMvsTW}
 Let $G$ be a graph, then $\frac{1}{3} (\tw(G)+1) \le \mmw(G) \le \tw(G)+1$.
\end{lemma}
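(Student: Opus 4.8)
The plan is to derive both inequalities from the standard connection between branch decompositions over which MM-width is defined and tree decompositions. Recall that $\mmw(G)$ is the minimum over all branch decompositions $(T,\delta)$ of $G$ of the maximum, over nodes $x$ of $T$, of the size $\mu(X,\bar X)$ of a maximum matching in $G[X,\bar X]$, where $(X,\bar X)$ is the cut induced by $x$. So for the upper bound $\mmw(G)\le\tw(G)+1$, I would start from an optimal tree decomposition $(\mathcal{T},(B_t)_t)$ of $G$ of width $\tw(G)$, and turn it into a branch decomposition: after making $\mathcal{T}$ have bounded degree and putting the vertices of $G$ at the leaves (the usual routine surgery on tree decompositions — subdividing high-degree nodes, attaching a fresh leaf for each vertex at a bag containing it), each edge of the resulting tree induces a cut $(X,\bar X)$ of $V(G)$. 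The key point is that for such a cut there is a bag $B$ of the original decomposition separating $X$ from $\bar X$, i.e.\ every edge of $G[X,\bar X]$ has an endpoint in $B$; hence $B$ is a vertex cover of $G[X,\bar X]$, and by König's theorem (since $G[X,\bar X]$ is bipartite by construction, all its edges going between $X$ and $\bar X$) the maximum matching of $G[X,\bar X]$ has size at most $|B|\le\tw(G)+1$. Taking the max over all cuts gives $\mmw(G)\le\tw(G)+1$.

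For the lower bound $\tfrac13(\tw(G)+1)\le\mmw(G)$, equivalently $\tw(G)\le 3\mmw(G)-1$, I would go the other direction: take an optimal branch decomposition $(T,\delta)$ witnessing $\mmw(G)=k$, and build a tree decomposition from it. Root $T$; for each node $x$ of $T$ with cut $(L(T_x),\overline{L(T_x)})$, one uses a vertex cover of the bipartite graph $G[L(T_x),\overline{L(T_x)}]$ — of size $\mu(L(T_x),\overline{L(T_x)})\le k$ by König again — as (part of) the bag. The standard argument (as in Vatshelle's thesis) assigns to each internal node $x$ with children $x_1,x_2$ a bag consisting of the union of the three vertex covers associated to the three cuts at $x$, $x_1$, $x_2$, which has size at most $3k$; one then checks the two tree-decomposition axioms: every edge $uv$ of $G$ is cut at some node of $T$ (namely the highest node whose subtree contains exactly one of $u,v$), and at that node at least one of $u,v$ lies in the corresponding vertex cover — a bit of care is needed here, and it is cleanest to add to each leaf's bag the single vertex it represents so that edges incident to a leaf are handled; and the nodes whose bag contains a fixed vertex $v$ form a connected subtree, which follows from the monotone/laminar way the cuts nest along $T$. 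This yields a tree decomposition of width at most $3k-1$, hence $\tw(G)+1\le 3\mmw(G)$.

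The main obstacle is the bookkeeping in the lower-bound direction: verifying the connectedness axiom for the constructed bags and making sure every edge is actually covered requires choosing the vertex covers coherently (or at least arguing that any choice works) and handling the leaves and the root carefully. This is exactly the kind of argument carried out in detail in~\cite{Vatshelle12}, so in the write-up I would lean on that reference for the technical verification, presenting the construction and the size bounds and citing~\cite[p.~28]{Vatshelle12} for the routine checks. Both directions crucially use that the graphs $G[X,\bar X]$ are bipartite, so that König's theorem lets us pass freely between maximum matchings and minimum vertex covers; that is the conceptual heart of why a factor-$3$ loss (from the three cuts meeting at an internal node) is all that is incurred.
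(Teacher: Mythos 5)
The paper never proves this lemma itself --- it is imported verbatim from \cite{Vatshelle12} (p.~28) --- and your outline is exactly the argument from that source: turn an optimal tree decomposition into a branch decomposition and bound each cut's matching by a bag acting as a vertex cover for the upper bound, and for the lower bound build bags from K\"onig vertex covers of the three cuts meeting at each node of an optimal branch decomposition, so this is essentially the same approach as the one the paper relies on. One caution on your parenthetical fallback: it is \emph{not} true that ``any choice works'' --- with arbitrary minimum vertex covers the set of tree nodes whose bags contain a fixed vertex can be disconnected (e.g.\ a crossing edge $uv$ may be covered by $u$ at some cuts and by $v$ at others along the same root-to-leaf path), so the coherent choice/repair step you delegate to \cite{Vatshelle12} is genuinely needed rather than optional.
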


Another width measure of graphs that we will use extensively is \emph{Maximum-Induced-Matching-width} (short \emph{MIM-width}): The MIM-width of a graph $G$, denoted by $\mimw(G)$, is defined as the $f$-branch width of $G$ for the function $f$ that, given a cut $(X, \bar{X})$ of $G$, computes the size of the maximum induced matching of $G[X, \bar X]$. 

Given a CNF-formula~$F$, we say that a set of clauses $\calC \subseteq \cla(F)$ is \emph{precisely satisfiable} if there is an assignment to $F$ that satisfies all clauses in $\calC$ and no clause in $\cla(F)\setminus \calC$. The PS-value of $F$ is defined to be the number of precisely satisfiable subsets of $\cla(F)$. Let $F$ be a CNF-formula, $X\subseteq \var(F)$ and $\calC \subseteq \cla(F)$. Then we denote by $F_{X,\calC}$ the formula we get from $F$ by deleting first every clause not in $\calC$ and then every variable not in $X$.

Let $I(F)$ be the incidence graph of $F$ and let $(A, \bar{A})$ be a cut of $I(F)$. Let $X:= \var(F) \cap A$, $\bar X := \var(F) \cap \bar A$, $\calC := \cla(F) \cap A$ and $\bar \calC := \cla(F)\cap \bar A$. Let $ps(A,\bar A)$ be the maximum of the PS-values of $F_{X, \bar \calC}$ and $F_{\bar X, \calC}$. Then the PS-width of a branch decomposition $(T, \delta)$ of $G$ is defined as the $ps$-branch width of $(T, \delta)$. Moreover, the PS-width of $F$, denoted $\psw(F)$, is defined to be the $ps$-branch width of $I(F)$.

Let us try to give an intuition why we believe that PS-width is a good notion to model the limits of tractable dynamic programming for $\sSAT$: The dynamic programming algorithms in the literature typically proceed by cutting instances into subinstances and then iteratively solving the instance along these cuts. During this process, some information has to be propagated between the subinstances. Intuitively, a minimum amount of such information is which sets of clauses are already satisfied by certain assignments and which clauses still have to be satisfied later in the process. In doing this, the individual clauses can be ``bundled together'' if they are satisfied by an assignment simultaneously. The number such bundles is exactly the PS-width of a cut, so we feel that PS-width is a good formalization of the minimum amount of information that has to be propagated during dynamic programming in the style of the algorithms from the literature. 

Not only is PS-width in our opinion a good measure for the limits of dynamic programming, but S{\ae}ther, Telle and Vatshelle also showed that it allows tractable solving of $\sSAT$.

\begin{theorem}[\cite{SaetherTV14}]\label{thm:dynamic}
 Given a CNF-formula $F$ of $n$ variables and $m$ clauses and of size $s$, and a branch decomposition $(T,\delta)$ of the incidence graph $I(F)$ of $F$ with PS-width $k$, one can count the number of satisfying assignments of $F$ in time $O(k^3 s(m+n))$.
\end{theorem}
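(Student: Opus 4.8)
This is a theorem of S{\ae}ther, Telle and Vatshelle, and the plan is to reprove it by the natural dynamic programming along the given branch decomposition $(T,\delta)$, which we view as rooted (say at an arbitrary leaf); $T$ has $n+m$ leaves, hence $O(n+m)$ nodes. For a node $x$ write $V_x$ for the set of leaves below $x$ and consider the induced cut $(V_x,\overline{V_x})$ of $\var(F)\cup\cla(F)$; put $X=\var(F)\cap V_x$, $\calC=\cla(F)\cap V_x$, and let $\bar X,\bar\calC$ be the complementary variable and clause sets. Call a clause $C$ \emph{active at $x$} if $\var(C)$ meets both $X$ and $\bar X$, and call $a\in\{0,1\}^X$ \emph{admissible} at $x$ if it satisfies every clause $C$ with $\var(C)\subseteq X$. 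The basic observation is that for any extension $a\cup b$ with $b\in\{0,1\}^{\bar X}$, whether $a\cup b\models F$ depends on $a$ only through (i) admissibility of $a$ and (ii) the set $N(a)$ of clauses active at $x$ that $a$ fails through its literals on $X$: indeed $a\cup b\models F$ iff $a$ is admissible, $b$ satisfies every clause with $\var(C)\subseteq\bar X$, and $b$ satisfies every clause in $N(a)$. Hence the admissible partial assignments of $X$ fall into classes indexed by the value of $N(a)$, and at $x$ the algorithm stores a table mapping each occurring value of $N(a)$ to the number of admissible $a$ realizing it. At the root $\bar X=\emptyset$, no clause is active, so the only class is $N=\emptyset$ and its count is exactly the number of satisfying assignments of $F$, which is the desired output.

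First I would handle the base cases: a leaf labelled by a variable $v$ has $X=\{v\}$, $\calC=\emptyset$, so the table has at most two rows, computed directly; a leaf labelled by a clause $C$ has $X=\emptyset$, so the unique (empty) assignment is admissible, fails $C$, and the table is the single row $N=\{C\}$ with count $1$. For an internal node $x$ with children $y,z$ one has $V_x=V_y\sqcup V_z$, hence $X=X_y\sqcup X_z$ and $\calC=\calC_y\sqcup\calC_z$, and every admissible $a$ at $x$ restricts to admissible $a_y,a_z$ at the children. The join iterates over all pairs of a row of the $y$-table (value $N_y$, count $c_y$) and a row of the $z$-table (value $N_z$, count $c_z$); it checks admissibility of $a_y\cup a_z$ (only the clauses moving from ``active'' to ``contained in $X$'' must be re-examined, and this is read off from $N_y,N_z$ and the local incidence structure), computes the resulting $N$ the same way, and adds $c_yc_z$ to the matching row of the $x$-table. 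Correctness of the whole procedure is then an immediate induction on $T$ using the observation above.

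The real content, and the main obstacle, is bounding the table sizes and the cost of the join so as to obtain $O(k^3 s(m+n))$. Here one must use the definition of $\psw$ essentially: splitting $N(a)$ into its part in $\bar\calC$ and its part in $\calC$, the first is, by construction, the complement (within the active clauses above $x$) of a precisely satisfiable subset of the clauses of $F_{X,\bar\calC}$, so it takes at most $k$ distinct values; the delicate point is that the second part, the still-open clauses \emph{below} $x$, must also be shown to contribute only $O(k)$ classes, which one argues by noting that what actually matters about such a set is the family of completions $b$ satisfying all of it, and that family is controlled by the precisely satisfiable subsets of $F_{\bar X,\calC}$, of which there are again at most $k$. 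Granting that every table has $O(k)$ rows, each join processes $O(k^2)$ pairs; with a suitable representation of the patterns $N$, reducing the freshly built table back to its $O(k)$ distinct rows together with the per-pair bookkeeping contributes, amortised over the $O(m+n)$ nodes of $T$, the size factor $s$ and one further factor $k$, giving $O(k^3 s(m+n))$. Both this counting lemma — which is precisely where one sees why PS-width rather than some cruder measure is the right parameter — and the careful implementation of the join are the technical work carried out in \cite{SaetherTV14}.
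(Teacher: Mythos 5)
The paper imports this theorem from \cite{SaetherTV14} without proof, so there is no in-paper argument to compare against; judged on its own, your sketch has the right overall shape (dynamic programming up the rooted branch decomposition, grouping partial assignments on $X$ by which clauses they leave unsatisfied, and invoking PS-width to bound the number of groups), but the part you yourself flag as ``the real content'' is where the argument actually breaks down rather than merely being deferred. First, your own analysis gives at most $k$ values for the $\bar\calC$-part of $N(a)$ and (granting the compression) at most $k$ classes for the $\calC$-part, which yields tables with $O(k^2)$ rows, not the $O(k)$ you then assume; with $O(k^2)$-row tables the naive pairwise join costs $O(k^4)$ pairs per node and overshoots the claimed $O(k^3)$ factor. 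In \cite{SaetherTV14} the $k^3$ comes from a combine step that iterates over \emph{triples} of precisely satisfiable sets (one for the parent cut and one for each child), not from pairing rows of two small tables, so the accounting has to be redone.

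Second, and more fundamentally, the compression of the $\calC$-part is not just a counting statement that can be waved at $\mathrm{PS}(F_{\bar X,\calC})$: you need that identifying two states $N_1\cap\calC$ and $N_2\cap\calC$ whenever they admit the same family of satisfying completions is \emph{compatible with the join}, i.e., that after extending by any $a_z$ at a sibling the merged states still agree on which clauses remain open, for every ancestor cut. Your definition of $N(a)$ as a literal subset of clauses does not obviously have this property, and the table at a clause-leaf ($N=\{C\}$ for a clause none of whose variables lie below the leaf) is not even consistent with your definition of ``active''. The fix in \cite{SaetherTV14} is to index states by representatives drawn from $\mathrm{PS}(F_{X,\bar\calC})\times\mathrm{PS}(F_{\bar X,\calC})$ from the start and to prove an invariance lemma for this indexing under the combine operation; without that lemma (or a substitute), correctness of the recursion, not just its running time, is unproven. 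So the proposal is an accurate high-level description of the cited algorithm but not a proof: the two load-bearing steps are asserted and delegated back to the source.
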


We admit that the intuition explained above is rather vague and informal, so the reader might or might not share it, but we stress that it is supported more rigorously by the fact that all known tractability results from the literature that were shown by dynamic programming can be explained by a combination of PS-width and Theorem~\ref{thm:dynamic}.

S{\ae}ther, Telle and Vatshelle showed the following connection between the PS-width of a CNF-formula~$F$ and the MIM-width of the incidence graph $G$ of $F$.

\begin{theorem}[\cite{SaetherTV14}]\label{thm:MIMvsPS} 
For any CNF-formula $F$ over $m$ clauses, any branch decomposition of the incidence graph $I(F)$ of $F$ with MIM-width $k$ has PS-width at most $m^{k}$.
\end{theorem}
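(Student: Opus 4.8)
The plan is to prove Theorem~\ref{thm:MIMvsPS}, namely that a branch decomposition of $I(F)$ with MIM-width $k$ has PS-width at most $m^k$, by a counting argument at each cut. Fix a cut $(A, \bar A)$ of the branch decomposition and write $X := \var(F) \cap A$, $\bar X := \var(F)\cap \bar A$, $\calC := \cla(F)\cap A$, $\bar\calC := \cla(F)\cap\bar A$. By definition of $ps$, it suffices to bound the PS-value of $F_{X, \bar\calC}$ (the case of $F_{\bar X, \calC}$ is symmetric) by $m^k$. So I need to count the precisely satisfiable subsets of $\cla(F_{X,\bar\calC})$, and $F_{X,\bar\calC}$ has clauses among $\bar\calC$ (at most $m$ of them) and variables in $X$, with the bipartite graph being exactly the induced subgraph $I(F)[X \cup \bar\calC]$ of the incidence graph $G := I(F)$.

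The key step is to use the induced matching bound. Let $M$ be a maximum induced matching of $G[A, \bar A]$; by the definition of MIM-width, $|M| \le k$. The edges of $M$ pair up certain variables of $X$ with certain clauses of $\bar\calC$; call the set of these distinguished variables $X_M \subseteq X$ (so $|X_M| = |M| \le k$). The claim I would aim for is that every precisely satisfiable subset $\calC' \subseteq \bar\calC$ of $F_{X,\bar\calC}$ is already determined by the restriction to $X_M$ of any assignment witnessing precise satisfiability — more precisely, that the map sending a witnessing assignment $a \in \{0,1\}^X$ to the pair $(a|_{X_M}, \text{which } M\text{-clauses } a \text{ satisfies})$, or even just to $a|_{X_M}$ together with a little extra bookkeeping, injects the set of precisely satisfiable subsets into a set of size at most $m^k$. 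The cleanest route: show that for each clause $c \in \bar\calC$, whether $c$ is satisfied by a witnessing assignment $a$ depends only on $a|_{X_M}$ together with the choice, for $c$, of one of its at most-$m$ neighbours among the clauses — actually the standard argument is that the ``trace'' of the cut is captured by, for each of the $\le k$ matched clauses, a choice among $\le m$ possibilities, giving $m^k$. I would make this precise by arguing that if $a, a' \in \{0,1\}^X$ agree on $X_M$ and both precisely satisfy the same behaviour on the matched clauses, then they precisely satisfy the same subset of $\bar\calC$, using that any clause $c$ containing a variable $x \in X$ whose satisfaction status changes must, by maximality of the induced matching, have $x$ adjacent (in $G[A,\bar A]$) to some variable matched by $M$ or $c$ itself matched — carefully exploiting that $M$ is a \emph{maximum induced} matching so no edge can be added.

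Concretely, I would structure the argument as follows. First, recall that the relevant bipartite graph for $F_{X,\bar\calC}$ is $H := G[A,\bar A]$ restricted to $X \cup \bar\calC$ (edges only go between $X$ and $\bar\calC$ in $H$). Second, let $M = \{x_1 c_1, \ldots, x_t c_t\}$ be a maximum induced matching in $H$ with $t \le k$, $x_i \in X$, $c_i \in \bar\calC$. Third, define for an assignment $a \in \{0,1\}^X$ the \emph{signature} $\sigma(a)$ to be the function assigning to each $i \in \{1,\ldots,t\}$ either the index of a variable in $c_i \cap X$ on which the literal of $c_i$ is satisfied by $a$ (if $c_i$ is satisfied via its $X$-part), or a fixed symbol $\bot$ otherwise; the number of possible signatures is at most $(m)^t \le m^k$ since each $c_i$ has at most $m$... — here I must be slightly careful, the bound should come out as: each coordinate has at most $|\var(c_i)\cap X| + 1 \le $ (number of variables) possibilities, but to get exactly $m^k$ one instead counts, for each matched clause, which of the $\le m$ clauses ``certifies'' it; I will follow whichever bookkeeping yields $m^k$ cleanly. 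Fourth, and this is the heart, prove: if $a, a'$ have the same signature then for every clause $c \in \bar\calC$, $a$ satisfies $c$ iff $a'$ satisfies $c$. This forces the map $\calC' \mapsto \sigma(a_{\calC'})$ (for a chosen witness $a_{\calC'}$ of precise satisfiability of $\calC'$) to be injective into a set of size $\le m^k$, completing the proof.

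The main obstacle will be step four: showing that the signature on the matched clauses controls the satisfaction status of \emph{all} clauses in $\bar\calC$. The tool is maximality of the induced matching: if some clause $c$ is satisfied by $a$ via a variable $x$ but not by $a'$, then $x$ is ``active'' for $c$, and I must show the edge $xc$ together with $M$ either already lies in $M$ or its endpoints are dominated by $M$ in the induced sense — i.e.\ either $x$ equals some $x_i$ (so the signature already records $a(x)$, contradiction) or $c$ equals some $c_i$ (so the signature records how $c_i$ is satisfied, contradiction) or $xc$ could be added to $M$ keeping it induced (contradicting maximality), where the last case needs the observation that $x$ has no $M$-neighbour and $c$ has no $M$-neighbour precisely because $x \notin \{x_i\}$, $c \notin \{c_i\}$ and any adjacency would again let the signature pin down the relevant value. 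Handling these cases without circularity, and getting the counting constant to land exactly on $m^k$ rather than something slightly larger, is where the care is needed; everything else is routine unwinding of definitions.
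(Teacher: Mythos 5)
First, note that the paper does not prove this statement at all: it is imported from \cite{SaetherTV14}, so your argument has to stand on its own. It does not, and the problem is exactly at your step four. The claim ``same signature (values of the matched variables plus how the matched clauses are satisfied) implies the same precisely satisfied subset'' is false. Take $F$ with clauses $c_1=(x)$ and $c_2=(x\vee y)$ and the cut with $X=\{x,y\}$, $\bar\calC=\{c_1,c_2\}$; the crossing graph is the path $c_1$--$x$--$c_2$--$y$, whose maximum induced matching is $M=\{xc_1\}$, so $k=1$ here. The assignments $a=(x{=}0,y{=}1)$ and $a'=(x{=}0,y{=}0)$ agree on $X_M=\{x\}$ and give $c_1$ the same status (unsatisfied), hence have the same signature under any bookkeeping of the kind you describe, yet $a$ precisely satisfies $\{c_2\}$ and $a'$ precisely satisfies $\emptyset$. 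The failure is precisely the case you wave at: $c_2$ and its satisfying variable $y$ are both unmatched, maximality of $M$ is witnessed by the adjacency $x\in\var(c_2)$, but knowing $a(x)=a'(x)$ says nothing about whether $c_2$ is satisfied, since it is satisfied through $y$. So ``any adjacency lets the signature pin down the relevant value'' is wrong, and the injectivity claim collapses. The same example also shows your counting target is unattainable: there are $3$ precisely satisfiable subsets but $m^k=2^1=2$, so no $m^k$-valued signature can exist; the bound one can actually prove (and which suffices for everything in this paper) is $\sum_{i\le k}\binom{m}{i}\le(m+1)^k$, so the statement as quoted should be read with this lower-order slack. (A cosmetic point: a maximum induced matching of $G[A,\bar A]$ need not pair $X$ with $\bar\calC$; you would want one of the $X$--$\bar\calC$ subgraph.)

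The missing idea is to certify a precisely satisfiable set by \emph{unsatisfied} clauses rather than by data attached to a matching fixed in advance. For an assignment $a$ to $X$, a clause $C$ of $F_{X,\bar\calC}$ is unsatisfied if and only if $a|_{\var(C)\cap X}$ equals the unique falsifying assignment $\beta_C$ of $C$, so the unsatisfied set $U$ is governed by variable sets alone. Choose $T\subseteq U$ inclusion-minimal with $\bigcup_{D\in T}(\var(D)\cap X)=\bigcup_{C\in U}(\var(C)\cap X)$. Minimality gives each $D\in T$ a private variable $x_D$ lying in no other member of $T$, and the edges $x_D D$ form an induced matching of the cut graph (privacy is exactly inducedness, since variable--variable and clause--clause edges do not exist), so $|T|\le k$. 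Moreover $U$ is recoverable from $T$ alone: all $\beta_D$ with $D\in T$ are restrictions of $a$, so $b:=\bigcup_{D\in T}\beta_D$ is a well-defined partial assignment and $U=\{C\in\bar\calC \mid \beta_C\subseteq b\}$, using that $T$ covers the variables of every clause of $U$. Hence each precisely satisfiable subset is determined by a set of at most $k$ clauses, giving the $\sum_{i\le k}\binom{m}{i}$ bound. The two points your proposal lacks are exactly these: exploit that clauses have a unique falsifying assignment (so unsatisfied clauses are controlled by variable sets), and obtain the induced matching from minimality of a covering subfamily of unsatisfied clauses, instead of fixing a maximum induced matching first and hoping its maximality controls the unmatched clauses --- which, as the example shows, it does not.
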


Theorem \ref{thm:MIMvsPS} and Theorem \ref{thm:dynamic} essentially turn finding structural classes of tractable $\sSAT$-instances into a problem of graph theory: it suffices to show that certain classes of formulas have sufficiently small MIM-width or PS-width to show that they are tractable. We will see that all tractability results from the literature can be explained this way. Unfortunately, deciding if a class of formulas has small MIM-width or PS-width seems to be tricky. In fact, even the complexity of deciding if a given graph has MIM-width $1$ in polynomial time is left as an open problem in \cite{Vatshelle12}.

\section{The algorithm and its correctness}\label{sec:cspnest}

In this section we describe an algorithm that, given an instance $I$ of $\sCSPd$ on domain $D$ and a nest point $x$ of $\calH(I)$, constructs in a polynomial number of arithmetic operations an instance $I'$ such that $\calH(I') = \calH(I)\setminus x$, $\|I'\| \leq \|I\|$ and $w(I) = |D|w(I')$. We then explain that if $I$ is $\beta$-acyclic, we can iterate the procedure to compute $w(I)$ in a polynomial number of arithmetic operations.

In the following, for $x \in \var(I)$, we denote by $I(x) = \{c \in I \mid x \in \var(c) \}$.

\begin{theorem}
\label{thm:cspalgo}
  Let $I$ be a set of weighted constraints on domain $D$ and $x$ a nest point of $\calH(I)$. Let $I(x) = \{c_1, \ldots, c_p\}$ with $\var(c_1) \subseteq \ldots \subseteq \var(c_p)$. Let $I' = \{c' \mid c \in I\}$ where
  \begin{itemize}
  \item if $c \notin I(x)$ then $c' := c$
  \item if $c = c_i$, then $c_i' := (f_i',\d)$ is the weighted constraint on variables $\var(c_i') = \var(c) \setminus \{x\}$, with default value $\d(c_i)$ and $\supp(c_i') := \{a \in D^{\var(c_i')} \mid \exists d \in D, (a \oplus_x d) \in \supp(c_i)\}$. Moreover, for all $a \in \supp(c_i')$, let 
  $P_i(a,d) := \prod_{j=1}^i c_j((a \oplus_x d)|_{\var(c_j)})$ and $P_0(a,d) = 1$. We define:
$$ f_i'(a) := \frac{\sum_{d \in D} P_{i}(a,d)}{\sum_{d \in D} P_{i-1}(a,d)}$$
if $\sum_{d \in D} P_{i-1}(a,d) \neq 0$ and $f_i'(a) := 0$ otherwise.
\end{itemize}
Then $\calH(I') = \calH(I)\setminus x$, $\|I'\| \leq \|I\|$ and $w(I) = |D|w(I')$. Moreover, one can compute $I'$ with a $O(p\|I(x)\|)$ arithmetic operations.
\end{theorem}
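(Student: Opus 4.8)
The plan is to verify each of the four assertions in turn, starting with the cheap structural ones and then doing the real work on the partition-function identity.

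First I would check $\calH(I') = \calH(I) \setminus x$. Every constraint $c \notin I(x)$ keeps its variable set, and by definition of nest point $x \notin \var(c)$ for such $c$; every $c_i \in I(x)$ has its variable set replaced by $\var(c_i) \setminus \{x\}$. Since $x$ lies in exactly the edges $\var(c_1) \subsetneq \dots \subsetneq \var(c_p)$ (up to equality), removing $x$ from each of these and leaving the others untouched is exactly the edge set of $\calH(I)\setminus x$, modulo the fact that some $\var(c_i)\setminus\{x\}$ might now coincide with another edge or become empty — but $\calH$ is a set of edges, so this is harmless and matches the definition of $\calH[S]$ given in the preliminaries. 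Next, $\|I'\| \le \|I\|$: for $c \notin I(x)$ we have $|c'| = |c|$; for $c_i$ we have $\var(c_i') = \var(c_i)\setminus\{x\}$, so $|\var(c_i')| \le |\var(c_i)|$, and $\supp(c_i')$ is the projection of $\supp(c_i)$ onto $\var(c_i')$, which has size at most $|\supp(c_i)|$. Hence $|c_i'| \le |c_i|$ and summing gives the bound. The arithmetic-complexity claim $O(p \|I(x)\|)$ I would get by noting that for each $i$ and each $a \in \supp(c_i')$ one evaluates the sums $\sum_d P_i(a,d)$ and $\sum_d P_{i-1}(a,d)$; these can be computed incrementally in $i$ (the product $P_i$ differs from $P_{i-1}$ by one factor $c_i$), each evaluation of a constraint $c_j$ on an assignment costs $O(|\var(c_j)|)$, and the number of relevant pairs $(a,d)$ is controlled by $|\supp(c_i)| \le \|I(x)\|$; careful bookkeeping yields the stated bound.

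The heart of the proof is $w(I) = |D| \, w(I')$. I would write $w(I) = \sum_{a' \in D^{\var(I)\setminus\{x\}}} \sum_{d \in D} \prod_{c \in I} c\big((a' \oplus_x d)|_{\var(c)}\big)$, split the product over $c \notin I(x)$ (which do not see $x$ and factor out of the $\sum_d$) and over $c \in I(x)$, so the inner object is $\big(\prod_{c \notin I(x)} c(a'|_{\var(c)})\big) \cdot \sum_{d\in D} P_p(a', d)$ where $a'$ is restricted appropriately to each $\var(c_j)\setminus\{x\}$. Now I claim the telescoping identity
\begin{equation*}
\sum_{d \in D} P_p(a', d) \;=\; |D| \cdot \prod_{i=1}^{p} f_i'\big(a'|_{\var(c_i')}\big),
\end{equation*}
which after substituting $c_i'(a'|_{\var(c_i')}) = f_i'(a'|_{\var(c_i')})$ on the support (and handling the default value, see below) gives exactly $|D|\,w(I')$. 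To prove the telescoping identity, observe $\prod_{i=1}^p f_i'(a'|_{\var(c_i')})$ is a product of fractions $\frac{\sum_d P_i}{\sum_d P_{i-1}}$ that collapses to $\frac{\sum_d P_p}{\sum_d P_0} = \frac{\sum_d P_p}{|D|}$, provided no intermediate denominator vanishes. The subtle point — and the main obstacle — is the vanishing denominators and the default values. When $\sum_d P_{i-1}(a',d) = 0$, we have set $f_i'(a') = 0$; here I need that non-negativity of all weights forces $P_{i-1}(a',d) = 0$ for every $d$, hence $P_j(a',d) = 0$ for all $j \ge i-1$ and all $d$ (adding more non-negative factors keeps it zero), so $\sum_d P_p(a',d) = 0$ as well and both sides of the identity are $0$. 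This is precisely where the restriction to $\Qp$-valued weights is used, and it must be invoked carefully. I would also need to check the bookkeeping between "$a'$ lies in $\supp(c_i')$ or not": if $a'|_{\var(c_i')} \notin \supp(c_i')$ then no $d$ makes $(a'\oplus_x d) \in \supp(c_i)$, so $c_i$ contributes its default value $\d(c_i)$ for every $d$, which is exactly the default value assigned to $c_i'$, and the telescoping still goes through with these constant factors — this is why $c_i'$ is given default value $\d(c_i)$ rather than something derived.

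In summary, the structural claims and the complexity bound are routine case analyses; the real content is the telescoping identity for the partition function, and within that the only genuinely delicate case is when some partial product $\sum_d P_{i-1}(a',d)$ is zero, which is handled by exploiting non-negativity of the weights to propagate the zero through all later products. I would present the proof by first disposing of the three easy items, then proving the partition-function identity by the telescoping computation with the zero-denominator case treated separately and explicitly.
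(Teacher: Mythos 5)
Your proposal is correct and follows essentially the same route as the paper: your telescoping identity $\sum_d P_p(a,d)=|D|\prod_i c_i'(a)$ is exactly the identity the paper establishes by induction on $i$, with the same treatment of the two delicate cases (vanishing denominators handled via non-negativity, off-support assignments handled via the inherited default value $\d(c_i)$), and the structural and complexity claims are argued the same way. No substantive differences to report.
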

\begin{proof}
  First, we explain why $I'$ is well-defined. As $x$ is a nest point, we can write $I(x) = \{c_1, \ldots, c_m\}$ with $\var(c_1) \subseteq \ldots \subseteq \var(c_m)$. If two constraints have the same variables, we choose an arbitrary order for them. Note that in Section~\ref{sec:analysis} we will choose a specific order that ensures that the algorithm runs in polynomial time on a RAM, but in this proof any order will do. Finally, remark that $P_i(a,d)$ is well defined since for $a \in \supp(c_i')$, $d \in D$ and $j \leq i$, $(a \oplus_x d)$ assigns all variables of $c_j$ since $\var(c_j) \subseteq \var(c_i)$. Thus writing $c_j((a \oplus_x d)|_{\var(c_j)})$ is correct. We insist on the fact that it is only because $x$ is a nest point that this definition works.

  $\calH(I') = \calH(I)\setminus x$ is obvious because for a constraint in $I$ with variable set $V$, there exists a constraint in $I'$ with variable set $V \setminus \{x\}$.

  $\|I'\| \leq \|I\|$ because for all $c \in I$, $|c'| \leq |c|$ since $|\{a \in D^{\var(c')} \mid \exists d \in D, (a \oplus_x d) \in \supp(c)\}| \leq |\supp(c)|$.

We now show by induction on $i$ that for all $a \in D^{\var(c_i')}$, 
$$ |D| \prod_{j = 1}^i c_j'(a) = \sum_{d \in D} P_i(a,d).$$

For $i=1$, let $a \in D^{\var(c_1')}$. If $a \in \supp(c_1')$, then by definition:
$$ c_1'(a) = \frac{\sum_{d \in D} P_1(a,d)}{\sum_{d \in D} P_0(a,d)}.$$
Since $P_0(a,d) = 1$ for all $d$, we have the expected result. 

If $a \notin \supp(c_1')$, then for all $d$, $a \oplus_x d \notin \supp(c_1)$. Thus $P_1(a,d) = \d_1$ for all $d$ and finally 
$$ \sum_{d \in D} P_1(a,d) = |D|\d_1 = |D|c_1'(a).$$

Now suppose that the result holds for $i$. Let $a \in D^{\var(c_i')}$. Then we get by induction
$$ |D| \prod_{j=1}^{i+1} c_j'(a) = (\sum_{d \in D} P_i(a,d))c_{i+1}'(a).$$

First, assume that $\sum_{d \in D} P_i(a,d) = 0$. Since this sum is a sum of positive rationals, we have that for all $d$, $P_i(a,d) = 0$. Thus, $P_{i+1}(a,d) = 0$ for all $d$, that is $\sum_{d \in D} P_{i+1}(a,d) = 0$ which confirm the induction hypothesis.

Now assume that $\sum_{ d\in D} P_i(a,d) \neq 0$. If $a \in \supp(c_{i+1}')$, by definition of $c_{i+1}'$, the induction hypothesis trivially holds.

If $a \notin \supp(c_{i+1})$, we have $P_{i+1}(a,d) = \d_{i+1}P_i(a,d)$ for all $d$. Thus $\sum_{d \in D} P_{i+1}(a,d) = \d_i \sum_{d \in D} P_i(a,d) = c'_{i+1}(a) \sum_{d \in D} P_i(a,d)$ which establish the induction hypothesis for $i+1$. 

Applying the result for $i = p$, we find:

$$ |D| \prod_{c \in I(x)} c'(a) = \sum_{d \in D} \prod_{c \in I(x)} c((a \oplus_x d)|_{\var(c)})$$

Now, it is sufficient to remark that for $c \notin I(x)$, for all $d \in D$, $c((a \oplus_x d)|_{\var(c)}) = c(a|_{\var(c)}) = c'(a|_{\var(c)})$ since $x \notin \var(c)$ and $c = c'$. Thus:

$$ |D|w(I') = \sum_{a \in D^{\var(I) \setminus \{x\}}} \sum_{d \in D} \prod_{c \in I'} c((a \oplus_x d)|_{\var(c)}) = w(I).$$

We now analyze the number of arithmetic operations we make in the construction of $I'$. Clearly, if we have computed the $\sum_{d \in D} P_i(a,d)$ for all $i \leq p$ and $a \in \supp(c_i')$ then we can compute $c_i'(a)$ with one division. Thus we need to do $p$ divisions. Now remark that if we have computed $P_i(a,d)$, then we only need one more multiplication to compute $P_{i+1}(a,d)$. 

Now, we prove by induction on $i$ that $P_i(a,d)$ could take at least $1 + \sum_{j=1}^i |c_j|$ different values. It is trivial for $i = 0$. Now remark that if $a \oplus_x d \notin \supp(c_i)$, then $P_i(a,d) = \d_i P_{i-1}(a,d)$, thus by induction, it gives $1 + \sum_{j=1}^{i-1}|c_j|$ different values for $P_i$. And there is at most $|\supp(c_i)| \leq |c_i|$ other values for $a \oplus_x d \in \supp(c_i)$, which prove the induction.

In the end, we have to compute at most $O(p \times \|I(x)\|)$ different values for the $P_i$ which can be done with a $O(p \times \|I(x)\|)$ multiplications. Now if $i$ is fixed, for all $a$, $\sum_{d \in D} P_i(a,d)$ have at most $1 + \sum_{j=1}^i |c_j|$ different terms that are already computed. Thus we only need $O(\|I(x)\|)$ operations to compute each of them. As there is $p$ different sums to compute, we can do everything with a $O(p\|I(x)\|)$ arithmetic operations.
\end{proof}

\begin{theorem}
\label{thm:csparith}
  If $I$ is a $\beta$-acyclic instance of $\sCSPd$, we can compute $w(I)$ with a $O(s(I)^2\|I\|)$ arithmetic operations.
\end{theorem}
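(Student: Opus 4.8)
The plan is to iterate the single-step reduction of Theorem~\ref{thm:cspalgo} along a $\beta$-elimination order of $\calH(I)$. Since $I$ is $\beta$-acyclic, we can compute such an order $(x_1,\ldots,x_n)$ in polynomial time, with $n = |\var(I)|$. Set $I_0 := I$ and, for $k = 1,\ldots,n$, let $I_k$ be the instance obtained by applying Theorem~\ref{thm:cspalgo} to $I_{k-1}$ and its nest point $x_k$ (this is legitimate because $x_k$ is a nest point of $\calH(I_{k-1}) = \calH(I)\setminus\{x_1,\ldots,x_{k-1}\}$, by definition of a $\beta$-elimination order and the fact that each step satisfies $\calH(I_k) = \calH(I_{k-1})\setminus x_k$). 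After $n$ steps, $\calH(I_n)$ has no vertices, so $I_n$ consists only of constraints on the empty variable set, and $w(I_n)$ is just the product of the (single) values of these constraints, computable with at most $|I_n| \le |I|$ multiplications. Theorem~\ref{thm:cspalgo} gives $w(I_{k-1}) = |D|\, w(I_k)$, so $w(I) = |D|^n w(I_n)$, which we obtain with $n$ further multiplications (or one exponentiation). Correctness is therefore immediate from Theorem~\ref{thm:cspalgo}; the real content is the arithmetic-operation count.

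For the count, the key invariant is $\|I_k\| \le \|I_{k-1}\|$, which is part of the conclusion of Theorem~\ref{thm:cspalgo}; hence $\|I_k\| \le \|I\|$ for all $k$. The cost of step $k$ is $O(p_k \|I_k(x_k)\|)$ arithmetic operations, where $p_k = |I_k(x_k)|$ is the number of constraints of $I_k$ containing $x_k$ and $I_k(x_k)$ is that subinstance. We bound $\|I_k(x_k)\| \le \|I_k\| \le \|I\|$ trivially. For $p_k$, note that each constraint ever created is the ``descendant'' of exactly one original constraint $c \in I$ (the map $c \mapsto c'$ is a bijection $I_{k-1}\to I_k$ at every step), and the number of constraints never changes: $|I_k| = |I|$ for all $k$. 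A cruder but sufficient bound is $p_k \le |I_k| = |I| \le s(I)$ (assuming, as we may, that every constraint has at least one variable, after discarding or pre-evaluating any variable-free constraints; alternatively $p_k$ is at most the number of edges of $\calH(I)$). Summing over the $n \le s(I)$ steps gives $\sum_{k=1}^n O(p_k \|I_k(x_k)\|) = O(n \cdot s(I) \cdot \|I\|) = O(s(I)^2 \|I\|)$ arithmetic operations, and the final product contributes only $O(\|I\|)$ more, so the total is $O(s(I)^2\|I\|)$ as claimed.

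The step I expect to require the most care is pinning down exactly which quantity plays the role of $p_k$ in the per-step bound and why it is controlled by $s(I)$ rather than by something that could grow. One must check that iterating the construction does not blow up the number of constraints (it does not: $I_k$ is in bijection with $I$) nor their sizes (controlled by $\|I_k\| \le \|I\|$), and that ``$p_k$'' — the number of constraints touching the eliminated vertex — is at most, say, the number of edges of the current hypergraph, which is at most the number of edges of $\calH(I)$, which is at most $s(I)$. A minor subtlety worth a sentence: after eliminating $x_k$, several constraints of $I_{k-1}(x_k)$ may collapse to the same variable set in $I_k$, so the hypergraph genuinely can have fewer edges than $I_k$ has constraints, but this only helps the bound; what matters is that $|I_k|$ and $\|I_k\|$ never increase, which Theorem~\ref{thm:cspalgo} already provides. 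Everything else is a routine summation.
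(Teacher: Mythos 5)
Your proposal is correct and follows essentially the same route as the paper: iterate the elimination step of Theorem~\ref{thm:cspalgo} along a $\beta$-elimination order, use $\calH(I')=\calH(I)\setminus x$, $\|I'\|\le\|I\|$ and $w(I)=|D|w(I')$ for correctness, and sum the per-step costs $O(p_k\|I_{k-1}(x_k)\|)$. The only (harmless) difference is bookkeeping: you bound each $p_k$ crudely by $|I|\le s(I)$ and the number of steps by $s(I)$, whereas the paper uses $\sum_{x}p_x=s(I)$; both give the stated $O(s(I)^2\|I\|)$ bound.
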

\begin{proof}
We iterate the algorithm of Theorem~\ref{thm:cspalgo} on a $\beta$-elimination order of the variables of $I$ to transform it into an instance $I^*$. After all variables are eliminated, every constraint of $I^*$ has an empty set of variables, thus $w(I^*) = \prod_{c \in I^*} c(\epsilon)$, where $\epsilon$ denotes the empty assignment. Moreover, by Theorem~\ref{thm:cspalgo}, $w(I) = |D|^{|\var(I)|}w(I^*)$. Thus $w(I)$ can be computed with $O(s(I))$ additionnal multiplications.

If we denote by $p_x = |\{c \in I \mid x \in \var(c)\}|$, we have a total complexity of $\sum_{x \in \var(I)} O(p_x\|I(x)\|)$, that is $O((\sum_{x \in \var(I)} p_x) |\var(I)| \|I\|)$. It is easy to see that $\sum_{x \in \var(I)} p_x = s(I)$ and since $|\var(I)| \leq s(I)$, we have a total number of arithmetic operations that is a $O(s(I)^2 \|I\|)$.
\end{proof}

\section{Runtime analysis of the algorithm}\label{sec:analysis}

The analysis of Theorem~\ref{thm:csparith} shows that our algorithm uses only a polynomial number of arithmetic operations. Unfortunately, this does not guarantee that the algorithm runs in polynomial time on a RAM. The problem is that, due to the many multiplications and divisions, the bitsize of the new (rational) weights computed by the algorithm at each step could grow exponentially, leading to an overall superpolynomial runtime. In this section we will prove that this is in fact not the case. We will show that at each step of the algorithm, numerous cancellations occur, leading to weights of polynomial bitsize. Combining this with Theorem~\ref{thm:csparith}, it will follow that the algorithm runs in polynomial time.

\subsection{Some technical lemmas}
\label{sec:techlem}
In this section, we will show some rather technical lemmas we will use later on. Throughout this paper, we follow the convention that for all assignment $a$, we have $w(\emptyset,a) = 1$. This is motivated by the following lemma.

\begin{lemma}
  \label{lem:sub0}
  Let $I$ be a set of weighted constraints, $J \subseteq I$ and $a$ a partial assignment of $\var(I)$. If $w(J,a) = 0$ then $w(I,a) = 0$.
\end{lemma}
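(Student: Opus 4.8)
The statement to prove is Lemma~\ref{lem:sub0}: if $w(J,a) = 0$ for some $J \subseteq I$ and a partial assignment $a$ of $\var(I)$, then $w(I,a) = 0$. The plan is to expand both partition functions as sums of products of non-negative weights and exploit the fact that, over $\Qp$ (resp.\ $\Rp$), such a sum vanishes only if every summand vanishes, and a product vanishes only if one of its factors does.

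First I would set $V = \var(I)$, $V' = \var(J)$, and let $W = \dom(a) \subseteq V$. By definition,
$$ w(J,a) = \sum_{b \in D^{V' \setminus W}} \prod_{c \in J} c\big((a \cup b)|_{\var(c)}\big). $$
Since all constraint values lie in $\Qp$, every term in this sum is non-negative, so $w(J,a) = 0$ forces every term to be $0$: for each $b \in D^{V' \setminus W}$ there is a constraint $c_b \in J$ with $c_b\big((a \cup b)|_{\var(c_b)}\big) = 0$. Next I would expand $w(I,a) = \sum_{b' \in D^{V \setminus W}} \prod_{c \in I} c\big((a \cup b')|_{\var(c)}\big)$ and argue termwise that each summand is $0$: given $b' \in D^{V \setminus W}$, restrict it to $b'|_{V' \setminus W} \in D^{V' \setminus W}$ and take the corresponding $c_{b'|_{V'\setminus W}} \in J \subseteq I$; since $\var(c_{b'|_{V'\setminus W}}) \subseteq V'$, the value $c\big((a \cup b')|_{\var(c)}\big)$ for this constraint equals $c\big((a \cup b'|_{V'\setminus W})|_{\var(c)}\big) = 0$, so the product over $c \in I$ vanishes. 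Hence every term of $w(I,a)$ is $0$, giving $w(I,a) = 0$.

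The one genuine subtlety—really the only place any care is needed—is the handling of variables: I must make sure that restricting an assignment $a \cup b'$ on $V$ down to the variables of a constraint $c \in J$ gives the same value as restricting $a \cup (b'|_{V' \setminus W})$, which holds because $\var(c) \subseteq V' \subseteq V$ and restriction commutes with taking sub-assignments; also $a \sim b'$ and $a \sim b'|_{V'\setminus W}$ both hold trivially since the domains $V \setminus W$ (resp.\ $V' \setminus W$) are disjoint from $W$. I also need the degenerate cases: if $J = \emptyset$ then $w(\emptyset, a) = 1 \neq 0$ by convention, so the hypothesis is vacuous; and if some $\var(c)$ for $c \in I \setminus J$ reaches outside $V'$, that is irrelevant since we only use a constraint from $J$ to kill the product. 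No heavy machinery is required—the whole argument is a termwise non-negativity observation—so I do not anticipate a real obstacle, only the bookkeeping of restrictions.
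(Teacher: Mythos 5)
Your proof is correct and follows essentially the same route as the paper: expand both partition functions, use non-negativity of the weights to conclude every summand of $w(J,a)$ vanishes, and then note that each summand of $w(I,a)$ contains a vanishing factor coming from $J$ (the paper phrases this as splitting the product over $I$ into the product over $J$ times the product over $I\setminus J$, rather than isolating a single zero factor, but the argument is the same). The extra bookkeeping about restrictions and the degenerate case $J=\emptyset$ is fine and only makes explicit what the paper leaves implicit.
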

\begin{proof}
  We have $w(J,a) = 0 = \sum_{b \simeq a} \prod_{c \in J} c(b|_{\var(c)})$. Since every term of the sum is non-negative, we have that for all $b \simeq a$ it holds $\prod_{c \in J} c(b|_{\var(c)}) = 0$. Thus, $$w(I,a) = \sum_{b \simeq a} \prod_{c \in J} c(b|_{\var(c)}) \prod_{c \in I \setminus J} c(b|_{\var(c)}) = 0.$$
\end{proof}

One key ingredient in our analysis will be understanding how two subinstances interact under a partial assignment. 

\begin{lemma}
\label{lem:cspmult}
Let $I$ be a set of weighted constraints on domain $D$, $J_1 \subseteq I$, $J_2 \subseteq I$ and $a \in D^W$ for $W\subseteq \var(I)$. Let $V_1 = \var(J_1)$, $V_2 = \var(J_2)$. If $V_1 \cap V_2 \subseteq W$ and $J_1 \cap J_2 = \emptyset$, then 
$$ w(J_1 \cup J_2, a) = w(J_1,a)w(J_2,a)$$
\end{lemma}
\begin{proof}
Let $V = V_1 \cup V_2$. Since $V \setminus W = (V_1 \setminus W) \cup (V_2 \setminus W)$ and this union is disjoint by definition, there is a natural bijection between $D^{V_1 \setminus W} \times D^{V_2 \setminus W}$ and $D^{V \setminus W}$ that associates to $(b_1,b_2)$ the assignment $b_1 \cup b_2$. Moreover, if $c \in J_1$, then $(b_1 \cup b_2)|_{\var(c)} = b_1|_{\var(c)}$ since $\var(c) \subseteq V_1$. Similarly, for $c \in J_2$, $(b_1 \cup b_2)|_{\var(c)} = b_2|_{\var(c)}$. Consequently,
$$ 
\begin{aligned}
w(J_1 \cup J_2, a) & = \sum_{b_1 \in D^{V_1 \setminus W}} \sum_{b_2 \in D^{V_2 \setminus W}}\prod_{c \in J_1} c((a \cup b_1)|_{\var(c)}) \prod_{c \in J_2} c((a \cup b_2)|_{\var(c)}) \\
& = w(J_1,a)w(J_2,a)
\end{aligned}
$$
\end{proof}

\begin{corollary}
\label{cor:cspdiv}
Let $I$ be a set of weighted constraints on domain $D$, $J_1 \subseteq I$, $J_2 \subseteq I$ and $a \in D^W$ for $W\subseteq \var(I)$. Let $V_1 = \var(J_1 \setminus J_2)$, $V_2 = \var(J_2 \setminus J_1)$ and $V_0 = \var(J_1 \cap J_2)$. If $V_0 \cap V_1 \subseteq W$ and $V_0 \cap V_2 \subseteq W$. If $w(J_2,a) \neq 0$, we have:
$$ \twofrac{J_1}{J_2}{a} = \twofrac{J_1 \setminus J_2}{J_2 \setminus J_1}{a}$$
\end{corollary}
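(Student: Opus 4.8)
The plan is to decompose each of $J_1$ and $J_2$ into the "private" part and the "shared" part, apply Lemma~\ref{lem:cspmult} to factor both $w(J_1,a)$ and $w(J_2,a)$, and then cancel the common factor $w(J_1 \cap J_2, a)$. Concretely, write $J_1 = (J_1 \setminus J_2) \cup (J_1 \cap J_2)$ as a disjoint union; since $\var(J_1\setminus J_2) \cap \var(J_1 \cap J_2) = V_1 \cap V_0 \subseteq W$ by hypothesis, Lemma~\ref{lem:cspmult} gives $w(J_1,a) = w(J_1 \setminus J_2, a)\, w(J_1 \cap J_2, a)$. Symmetrically, $J_2 = (J_2 \setminus J_1) \cup (J_1 \cap J_2)$ is a disjoint union with $\var(J_2 \setminus J_1) \cap \var(J_1 \cap J_2) = V_2 \cap V_0 \subseteq W$, so Lemma~\ref{lem:cspmult} yields $w(J_2,a) = w(J_2 \setminus J_1, a)\, w(J_1 \cap J_2, a)$.

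Now form the quotient. Since $w(J_2,a) \neq 0$ and $w(J_2,a) = w(J_2 \setminus J_1, a)\, w(J_1 \cap J_2, a)$, both factors on the right are nonzero; in particular $w(J_1 \cap J_2, a) \neq 0$, so it may be cancelled:
$$\twofrac{J_1}{J_2}{a} = \frac{w(J_1 \setminus J_2, a)\, w(J_1 \cap J_2, a)}{w(J_2 \setminus J_1, a)\, w(J_1 \cap J_2, a)} = \twofrac{J_1 \setminus J_2}{J_2 \setminus J_1}{a}.$$
This is exactly the claimed identity.

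There is essentially no hard step here; the only thing to be careful about is the bookkeeping for the disjointness and variable-intersection hypotheses of Lemma~\ref{lem:cspmult}. For the first application, the two pieces $J_1 \setminus J_2$ and $J_1 \cap J_2$ are disjoint by construction, and their union is $J_1$; the variable condition $V_1 \cap V_0 \subseteq W$ is one of the two hypotheses of the corollary. For the second application, the analogous facts hold with $V_2 \cap V_0 \subseteq W$. One should also note the edge case $J_1 \cap J_2 = \emptyset$: then $w(J_1 \cap J_2, a) = w(\emptyset, a) = 1$ by the stated convention, and the identity holds trivially (indeed it reduces to the plain statement $\twofrac{J_1}{J_2}{a} = \twofrac{J_1}{J_2}{a}$). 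So the argument goes through uniformly.
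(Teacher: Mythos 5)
Your proof is correct and follows essentially the same route as the paper: factor the numerator via Lemma~\ref{lem:cspmult} applied to $J_1 \setminus J_2$ and $J_1 \cap J_2$, factor the denominator symmetrically, and cancel $w(J_1 \cap J_2, a)$, with the nonvanishing of the cancelled factor (and of $w(J_2\setminus J_1,a)$) guaranteed by $w(J_2,a)\neq 0$. The only cosmetic difference is that the paper invokes Lemma~\ref{lem:sub0} for the nonzeroness of $w(J_2\setminus J_1,a)$, whereas you read it off the factorization itself; both are fine.
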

\begin{proof}
First, remark that $w(J_2 \setminus J_1, a) \neq 0$ by Lemma~\ref{lem:sub0} since $J_2 \setminus J_1 \subseteq J_2$ and $w(J_2,a) \neq 0$.

Apply Lemma~\ref{lem:cspmult} on $J_1 \setminus J_2$ and $J_1 \cap J_2$ for the numerator and on $J_2 \setminus J_1$ and $J_2 \cap J_1$ for the denominator and observe that $w(J_1 \cap J_2,a)$ cancels.
\end{proof}

We will use the following corollary heavily in Section~\ref{sec:analysis}.
\begin{corollary}
\label{cor:cspcancel}
Let $I$ be a set of weighted constraints on domain $D$, $J_1,J_2,J_3,J_4 \subseteq I$ and $a \in D^W$ for $W\subseteq \var(I)$. Assume that $w(J_3,a) \neq 0$ and $w(J_4,a) \neq 0$ and
\begin{enumerate}[label=(\roman{enumi})]
\item \label{condi} $J_1 \cap J_2 \subseteq J_3$ and $J_3 \cap J_4 \subseteq J_1$,
\item \label{condii} $\var(J_1 \setminus J_3) \cap \var(J_1 \cap J_3) \subseteq W$,
\item \label{condiii} $\var(J_3 \setminus J_1) \cap \var(J_1 \cap J_3) \subseteq W$,
\item \label{condiv} $\var(J_1 \setminus J_3) \cap \var(J_2) \subseteq W$, and
\item \label{condv} $\var(J_3 \setminus J_1) \cap \var(J_4) \subseteq W$.
\end{enumerate}
Then $$ \fourfrac{J_1}{J_2}{J_3}{J_4}{a} = \twofrac{(J_1 \setminus J_3) \cup J_2}{(J_3 \setminus J_1) \cup J_4}{a}.$$
\end{corollary}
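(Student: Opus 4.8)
The plan is to derive the identity purely by composing Corollary~\ref{cor:cspdiv} with Lemma~\ref{lem:cspmult}, so the whole argument is a short chain of rewritings of the left-hand fraction. First I would apply Corollary~\ref{cor:cspdiv} to the pair $J_1, J_3$: conditions~\ref{condii} and~\ref{condiii} are exactly the hypotheses $\var(J_1\setminus J_3)\cap\var(J_1\cap J_3)\subseteq W$ and $\var(J_3\setminus J_1)\cap\var(J_1\cap J_3)\subseteq W$ needed there (taking $V_0 = \var(J_1\cap J_3)$), and $w(J_3,a)\neq 0$ is given, so
$$\twofrac{J_1}{J_3}{a} = \twofrac{J_1\setminus J_3}{J_3\setminus J_1}{a}.$$
Multiplying numerator and denominator by $w(J_2,a)$ and $w(J_4,a)$ respectively turns the left-hand side of the claim into $\dfrac{w(J_1\setminus J_3,a)\,w(J_2,a)}{w(J_3\setminus J_1,a)\,w(J_4,a)}$.

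Next I would recombine the two products in this fraction using Lemma~\ref{lem:cspmult}. For the numerator, apply it to $J_1\setminus J_3$ and $J_2$: the variable-disjointness hypothesis $\var(J_1\setminus J_3)\cap\var(J_2)\subseteq W$ is condition~\ref{condiv}, and the set-disjointness hypothesis $(J_1\setminus J_3)\cap J_2=\emptyset$ follows from condition~\ref{condi}, since any $c\in(J_1\setminus J_3)\cap J_2$ would lie in $J_1\cap J_2\subseteq J_3$, contradicting $c\notin J_3$. Hence $w(J_1\setminus J_3,a)\,w(J_2,a) = w((J_1\setminus J_3)\cup J_2,a)$. Symmetrically, applying Lemma~\ref{lem:cspmult} to $J_3\setminus J_1$ and $J_4$ with condition~\ref{condv} for variable-disjointness and condition~\ref{condi} again for $(J_3\setminus J_1)\cap J_4=\emptyset$ (an element there would lie in $J_3\cap J_4\subseteq J_1$, contradicting $c\notin J_1$) gives $w(J_3\setminus J_1,a)\,w(J_4,a) = w((J_3\setminus J_1)\cup J_4,a)$. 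Substituting both into the fraction yields exactly $\twofrac{(J_1\setminus J_3)\cup J_2}{(J_3\setminus J_1)\cup J_4}{a}$, as claimed.

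The only points requiring a word of care are the well-definedness of the intermediate fractions: Corollary~\ref{cor:cspdiv} needs $w(J_3,a)\neq 0$, which is a hypothesis, and the final denominator $w((J_3\setminus J_1)\cup J_4,a)$ must be nonzero, which I would note equals $w(J_3\setminus J_1,a)\,w(J_4,a)$, a product of two nonzero factors — $w(J_4,a)\neq 0$ by hypothesis and $w(J_3\setminus J_1,a)\neq 0$ by the contrapositive of Lemma~\ref{lem:sub0} applied to $J_3\setminus J_1\subseteq J_3$. There is no real obstacle here; the bulk of the (minor) effort is just checking that each invocation of the two preceding results has its hypotheses met, which the five conditions are precisely engineered to guarantee.
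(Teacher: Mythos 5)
Your proof is correct and follows essentially the same route as the paper's: apply Corollary~\ref{cor:cspdiv} to the pair $J_1,J_3$, then recombine with $J_2$ and $J_4$ via Lemma~\ref{lem:cspmult}, using condition~(i) for the disjointness hypotheses and Lemma~\ref{lem:sub0} for the non-vanishing of the denominator. You simply spell out the details that the paper's one-line proof leaves implicit.
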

\begin{proof}
Apply Corollary~\ref{cor:cspdiv} on $J_1$ and $J_3$ and Lemma~\ref{lem:cspmult} on $J_1 \setminus J_3$ and $J_2$ for the numerator and on $J_3 \setminus J_1$ and $J_4$ for the denominator. Remark that Condition~\ref{condi} ensures that $(J_1 \setminus J_3) \cap J_2 = \emptyset$ and $(J_3 \setminus J_1) \cap J_4 = \emptyset$ and that the denominator is not null because $w((J_3 \setminus J_1) \cup J_4,a) = w(J_3 \setminus J_1,a)w(J_4,a)$ and $w(J_4,a) \neq 0$ by assumption and $w(J_3 \setminus J_1,a) \neq 0$ by Lemma~\ref{lem:sub0} and $w(J_3,a) \neq 0$.
\end{proof}

\subsection{Defining partial orders}
\label{sec:partorder}

The algorithm of Theorem~\ref{thm:cspalgo} transforms an instance into a new one with the same number of constraints but with one variable less. In this section we will give an explicit description of the weight of a constraint $c \in I$ after $k$ such elimination steps. In the following, $I$ is a $\beta$-acyclic $\CSP$-instance and $\{x_1, \ldots, x_n\} = \var(I)$ is a $\beta$-elimination order of $\calH(I)$. We assume that we will perform the elimination along this order. Let $X_k = \{x_1, \ldots, x_k\}$ and for $c \in I$, we denote by $\ck{c}{k}$ the constraint $c$ after the elimination of $x_k$. By convention, $\ck{c}{0} = c$. Remark that $\var(\ck{c}{k}) = \var(c) \setminus X_k$. 

In the following, we will introduce for each $k$, a partial order $\prec_k$ on $I$. The intuition for this partial order is that for $c,d \in I$, $c \prec_k d$ means that $\ck{d}{k}$ ``depends on'' $\ck{c}{0}$. For example, assume that $x_1 \in \var(c) \subseteq \var(d)$. When we eliminate $x_1$, we see---in the formula of Theorem~\ref{thm:cspalgo}---that the weight of $c$ appears in the definition of $\ck{d}{1}$. Hence, we would like to have $c \prec_1 d$. 

To simplify the proofs, we make one more assumption on $I$: If $c,d \in I$ and $c \neq d$, then $\var(c) \neq \var(d)$. We may assume this w.l.o.g.~since it is easy to merge two constraints with the same variables without increasing $\|I\|$. Observe that we make this assumption only on the initial instance $I$. During the elimination process, constraints with the same set of variables might appear, but we can easily handle them. 

\begin{definition}
  For two constraints $c,d \in I$, we write $c \prec d$ if there exists $k$ such that $\var(c) \setminus X_k \subsetneq \var(d) \setminus X_k$. We write $c \preceq d$ if $c\prec d$ or $c= d$.
\end{definition}

\begin{lemma}\label{lem:totalorder}
  $\preceq$ is a total order on $I$.
\end{lemma}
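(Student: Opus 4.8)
The plan is to verify the three defining properties of a total order for $\preceq$: antisymmetry, transitivity, and totality. Reflexivity is immediate from the definition since $c = d$ gives $c \preceq d$. Antisymmetry amounts to showing that $\prec$ is irreflexive (i.e., $c \not\prec c$) and that we cannot have both $c \prec d$ and $d \prec c$ for $c \neq d$. Irreflexivity is clear: $\var(c) \setminus X_k \subsetneq \var(c) \setminus X_k$ is never true. For the second part, suppose $c \prec d$ via some $k$ and $d \prec c$ via some $\ell$; I would argue using the nest point structure that this forces a contradiction — essentially, once one constraint's variable set is strictly contained in another's after removing a prefix $X_k$, the nest point property of the eliminated vertices prevents the containment from reversing.

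The key structural fact I would extract first is the following monotonicity: if $x_{k+1}$ is a nest point of $\calH(I) \setminus X_k$, then the sets $\{\var(c) \setminus X_k : c \in I, x_{k+1} \in \var(c) \setminus X_k\}$ form a chain under inclusion. More generally, I would want a lemma saying that the relative containment behaviour of $\var(c) \setminus X_k$ and $\var(d) \setminus X_k$ is ``well-behaved'' as $k$ varies: specifically, that for any $c, d$, there is at most one ``type'' of strict containment that can occur across all values of $k$, and that the first index $k$ at which $\var(c) \setminus X_k$ and $\var(d) \setminus X_k$ become comparable determines things. The natural approach is: let $k$ be the largest index such that $x_k \in \var(c) \cup \var(d)$ (or handle the edge case where the symmetric difference is already empty, which forces $c = d$ by the w.l.o.g.\ assumption that distinct constraints have distinct variable sets). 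Right before eliminating the last vertex of $\var(c) \triangle \var(d)$, that vertex is a nest point, so the edges containing it form a chain; since $\var(c)$ and $\var(d)$ (restricted appropriately) both contain it, they are comparable, and this comparability is preserved once we know it holds.

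For transitivity, suppose $c \prec d$ and $d \prec e$. I would pick witnessing indices and then use the chain structure at the appropriate elimination step to conclude $\var(c) \setminus X_k \subsetneq \var(e) \setminus X_k$ for a suitable $k$. The cleanest route is probably to first prove a normal-form statement: $c \prec d$ if and only if, writing $k^*$ for the largest index with $x_{k^*} \in \var(c) \triangle \var(d)$, we have $\var(c) \setminus X_{k^*-1} \subsetneq \var(d) \setminus X_{k^*-1}$ (equivalently $\var(c) \setminus X_{k^*} \subseteq \var(d) \setminus X_{k^*}$, with the strictness coming from $x_{k^*}$); and moreover that once $\var(c)\setminus X_k \subseteq \var(d)\setminus X_k$ holds it holds for all larger $k$ too (removing more vertices preserves $\subseteq$). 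This makes totality almost immediate — at step $k^*$ the relevant vertex is a nest point so the two sets are comparable — and makes transitivity a matter of chasing the maximum index among the three pairwise comparisons and invoking the chain property there.

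The main obstacle I anticipate is the bookkeeping around \emph{which} elimination index to use as the canonical witness and showing containment is genuinely preserved forward in $k$, together with carefully handling the case where $\var(c) \setminus X_k = \var(d) \setminus X_k$ for all sufficiently large $k$: there we must invoke the standing assumption that the initial instance has no two distinct constraints with equal variable sets to conclude $c = d$, since two distinct constraints could a priori have variable sets that coincide after deleting a prefix. I expect the nest point definition to do essentially all the real work — the content is that ``being a nest point'' exactly says the relevant edge-traces are linearly ordered by inclusion — and the rest is careful case analysis on the position of the last differing variable.
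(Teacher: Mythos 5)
Your overall route coincides with the paper's: antisymmetry and transitivity follow from the observation that an inclusion $\var(c)\setminus X_k\subseteq\var(d)\setminus X_k$ persists for every larger index (for transitivity one takes the maximum of the two witnesses), and totality is obtained by looking at the largest index $k^*$ with $x_{k^*}\in\var(c)\,\triangle\,\var(d)$, which exists by the standing assumption that distinct constraints of $I$ have distinct variable sets. Those parts of your sketch are sound and match the paper.

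The flaw is in how you justify comparability at that last step. You argue that right before eliminating the last vertex of $\var(c)\,\triangle\,\var(d)$ this vertex is a nest point and that ``$\var(c)$ and $\var(d)$ (restricted appropriately) both contain it'', so the chain property makes the two traces comparable. But a vertex of the symmetric difference lies in exactly one of $\var(c)$ and $\var(d)$, so the chain of edges through that nest point contains only one of the two traces, and the nest-point property tells you nothing about the pair. (Your alternative choice, the largest index with $x_k\in\var(c)\cup\var(d)$, does not help either: if that vertex lies in both sets, the two traces at that stage are equal and remain equal afterwards, so no strict inclusion can be extracted there.) The repair is simpler than what you attempt and is exactly the paper's argument: with $x_{k^*}\in\var(d)\setminus\var(c)$, maximality of $k^*$ gives $\var(c)\setminus\var(d)\subseteq X_{k^*-1}$, hence $\var(c)\setminus X_{k^*-1}\subseteq\var(d)\setminus X_{k^*-1}$, and the inclusion is strict because $x_{k^*}$ belongs to the right-hand side but not to the left. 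Note also that your parenthetical ``equivalently $\var(c)\setminus X_{k^*}\subseteq\var(d)\setminus X_{k^*}$'' is vacuous: at index $k^*$ the two traces coincide, since the whole symmetric difference lies inside $X_{k^*}$. In short, the lemma is pure set manipulation plus the distinct-variable-sets assumption; nest points and $\beta$-acyclicity play no role in it, and the places where you invoke them are precisely where your sketch goes wrong.
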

\begin{proof}
We first show that $\preceq$ is antisymmetric. So let $c,d$ be constraints such that $c \preceq d$ and $d\preceq c$. By way of contradiction, assume that $c\ne d$, so $c\prec d$ and $d\prec c$. By definition there are $k, k'$ such that $\var(c) \setminus X_k \subsetneq \var(d) \setminus X_k$ and $\var(d) \setminus X_{k'} \subsetneq \var(c) \setminus X_{k'}$. W.l.o.g.~assume that $k<k'$. Then $\var(c) \setminus X_\ell \subseteq \var(d) \setminus X_\ell$ for all $\ell \ge k$ which is a contradiction to $d\prec c$. It follows that $\preceq$ is antisymmetric.
  
We now show transitivity of $\preceq$. So let $c,d,e \in I$ with $c \preceq d$ and $d \preceq e$. If we have $c=d$ or $d=e$, then we get immediately $c\preceq d$. Thus we may assume that $c\prec d$ and $d\prec e$. By definition, there exist $k,\ell$ such that $\var(c) \setminus X_k \subsetneq \var(d) \setminus X_k$ and $\var(d) \setminus X_\ell \subsetneq \var(e) \setminus X_\ell$. For $m := \max(k,\ell)$ we get $\var(c) \setminus X_m \subseteq \var(d) \setminus X_m \subseteq \var(e) \setminus X_m$ and one of these inclusions is strict. Thus $\var(c) \setminus X_m \subsetneq  \var(e) \setminus X_m$, that is $c \prec e$ and it follows that $\preceq$ is transitive.

We now show that $\preceq$ is total. So let $c,d \in I$. If $c=d$, then by definition $c\preceq d$. So we assume that $c\ne d$. Let $k = \max \{j \mid x_j (\in \var(c) \setminus \var(d)) \cup (\var(d) \setminus \var(c))\}$. Observe that $k$ is well-defined, since $\var(d) \neq \var(c)$ by assumption on $I$. Assume first that $x_k \in \var(d) \setminus \var(c)$. Then $\var(c) \setminus \var(d) \subseteq X_{k-1}$ by maximality of $k$. It follows that  $\var(c) \setminus X_{k-1} \subseteq \var(d) \setminus X_{k-1}$ and since $x_k\in \var(d) \setminus X_{k-1}$, we have $\var(c) \setminus X_{k-1} \subsetneq \var(d) \setminus X_{k-1}$. Thus $c \prec d$. Analogously, we get for $x_k \in \var(c) \setminus \var(d)$ that $d \prec c$. Hence $\prec$ is total. 
\end{proof}

\begin{definition}
  For $k\in \{0, \ldots , n\}$, we define the relation $\prec_k \subseteq I \times I$ inductively on $k$ as
  \begin{itemize}
  \item $\prec_0 = \emptyset$
  \item for all $c,d \in I$, $c \prec_{k+1} d$ if and only if $c \prec_k d$ or there exists $e \in I$ such that $c \preceq_k e \prec d$ and $x_{k+1} \in \var(d) \cap \var(e)$,
  \end{itemize}
  where we denote by $c \preceq_k d$ if $c = d$ or $c \prec_k d$.
\end{definition}

Observe that the definition of $\prec_k$ is compatible with the informal discussion of $c \prec_k d$ at the beginning of this section: If $\ck{d}{k}$ depends on $\ck{c}{k}$. For $k=0$, no constraint depends on another, thus $\prec_0 = \emptyset$. Then, when eliminating $x_{k+1}$, if $x_{k+1} \notin \var(d)$, then the dependencies of $d$ do not change since $d$ remains the same. But if $x_{k+1} \in \var(d)$, then the weight of each constraint $e$ whose variables are included in $\var(d)$ and $x_{k+1} \in \var(e)$ will appear in $\ck{d}{k+1}$. And if $e$ depends on $c$ at step $k$, that is $c \prec_k e$, then $d$ will also depend on $c$ after the elimination of $x_{k+1}$.

We now show some properties of $\prec_k$ that are crucial for the understanding of how the weights of constraints interact with each other.

\begin{lemma}
\label{lem:orderprop}
  \begin{enumerate}
  \item[a)] $(\prec_k) \subseteq (\prec_{k+1})$.
  \item[b)] For all $c,d \in I$, $c \prec_{k+1} d$ implies $c \prec d$ and $\var(c) \setminus X_k \subseteq \var(d) \setminus X_k$.
  \item[c)] $(\preceq_k)$ is a partial order.
  \end{enumerate}
\end{lemma}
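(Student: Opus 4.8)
The three statements are naturally proved together by induction on $k$, since the inductive definition of $\prec_{k+1}$ refers to $\preceq_k$ and hence to the partial order property from part c) at the previous level. I would set up a single induction where the inductive hypothesis at stage $k$ bundles all three claims for all indices up to $k$. The base case $k=0$ is immediate: $\prec_0 = \emptyset$ is trivially contained in anything, the implication in b) is vacuous, and $\preceq_0$ (equality) is a partial order.

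\textbf{Part a).} This should be direct from the definition of $\prec_{k+1}$: the first disjunct in the definition is literally ``$c \prec_k d$'', so any pair in $\prec_k$ is in $\prec_{k+1}$. No induction is really needed here beyond having the definition available, though stating it inside the combined induction keeps things clean.

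\textbf{Part b).} I would argue by unwinding the two cases in the definition of $c \prec_{k+1} d$. If $c \prec_k d$, then by the inductive hypothesis (part b) at stage $k$) we get $c \prec d$ and $\var(c)\setminus X_{k-1} \subseteq \var(d)\setminus X_{k-1}$, from which $\var(c)\setminus X_k \subseteq \var(d)\setminus X_k$ follows by removing one more element. If instead there is $e$ with $c \preceq_k e \prec d$ and $x_{k+1}\in \var(d)\cap\var(e)$: from $c\preceq_k e$ we get (by IH, or trivially if $c=e$) $c \preceq e$, hence $c \prec d$ by transitivity of $\preceq$ (Lemma~\ref{lem:totalorder}) since $e \prec d$; and for the variable inclusion, from $c \preceq_k e$ we get $\var(c)\setminus X_{k-1}\subseteq \var(e)\setminus X_{k-1}$ (IH, or equality), and from $e \prec d$ together with $x_{k+1}\in\var(e)\cap\var(d)$ one needs $\var(e)\setminus X_k \subseteq \var(d)\setminus X_k$. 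The latter is where the $\beta$-elimination structure enters: $e\prec d$ means $\var(e)\setminus X_j \subsetneq \var(d)\setminus X_j$ for some $j$; combined with the fact that $x_{k+1}$ lies in both $\var(e)$ and $\var(d)$ and the monotonicity argument used in Lemma~\ref{lem:totalorder} (inclusions of the form $\var(e)\setminus X_\ell \subseteq \var(d)\setminus X_\ell$ persist for $\ell$ above the witnessing index), one should be able to conclude that the inclusion $\var(e)\setminus X_k\subseteq\var(d)\setminus X_k$ holds. Chaining the two inclusions gives $\var(c)\setminus X_k\subseteq\var(d)\setminus X_k$. This ``variable-set tracking'' step is the one I expect to require the most care.

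\textbf{Part c).} Reflexivity of $\preceq_k$ holds by definition ($c=d$ is allowed). For antisymmetry and transitivity I would reduce to the corresponding properties of $\preceq$ using part b): if $c \prec_k d$ then $c \prec d$, so $\prec_k$ is contained in the strict order $\prec$, which is a strict total order by Lemma~\ref{lem:totalorder}; a relation contained in a strict partial order is itself a strict partial order (irreflexive, transitive — transitivity of $\prec_k$ still needs checking directly from the definition, but it again follows by the same case analysis as in b), pushing the work onto transitivity of $\prec$). Hence $\preceq_k$ is a partial order.

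\textbf{Main obstacle.} The genuinely non-trivial point is the variable-inclusion half of part b) in the second case: showing $\var(e)\setminus X_k \subseteq \var(d)\setminus X_k$ from $e \prec d$ plus $x_{k+1}\in\var(e)\cap\var(d)$. Everything else is bookkeeping around the definitions and an appeal to Lemma~\ref{lem:totalorder}; but this step is exactly where one must use that $x_{k+1}$ has not yet been eliminated and that the inclusion witnessing $e\prec d$ cannot ``flip'' before $x_{k+1}$ is removed — this is the $\beta$-acyclicity / nest-point monotonicity being exploited.
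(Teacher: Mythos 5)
Your skeleton matches the paper's: a joint induction on $k$, part a) read off the definition, part b) by the two-case analysis, part c) via reflexivity, antisymmetry from b), and a direct check of transitivity. The genuine problem is in the step you yourself flag as the main obstacle: deducing $\var(e)\setminus X_k \subseteq \var(d)\setminus X_k$ from $e \prec d$ and $x_{k+1}\in\var(e)\cap\var(d)$. The mechanism you propose --- take a witnessing index $j$ with $\var(e)\setminus X_j \subsetneq \var(d)\setminus X_j$ and use that such inclusions persist at indices above $j$ --- does not suffice, because nothing forces the witness $j$ to be at most $k$; if $j>k$, upward persistence says nothing about level $k$. Purely order-theoretically the implication is false: for $\var(e)=\{x_{k+1},x_{k+2}\}$ and $\var(d)=\{x_{k+1},x_{k+3}\}$ one has $e\prec d$ (witness $j=k+2$) and $x_{k+1}\in\var(e)\cap\var(d)$, yet $\var(e)\setminus X_k\not\subseteq\var(d)\setminus X_k$. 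What excludes such configurations is exactly the $\beta$-elimination order, and your sketch never actually brings it to bear.

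The paper closes this step differently and more directly: since $x_{k+1}$ is a nest point of $\calH(I)\setminus X_k$, the edges of the reduced hypergraph containing $x_{k+1}$ form a chain under inclusion; as $\var(e)\setminus X_k$ and $\var(d)\setminus X_k$ both contain $x_{k+1}$, they are comparable, and the reverse strict inclusion $\var(d)\setminus X_k\subsetneq\var(e)\setminus X_k$ would give $d\prec e$, contradicting $e\prec d$ by antisymmetry of $\prec$ (Lemma~\ref{lem:totalorder}). So the missing ingredient is nest-point comparability plus antisymmetry, not persistence of the witnessing inclusion; you gesture at ``nest-point monotonicity'' but never invoke the comparability of the two edges through $x_{k+1}$. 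The same argument is also needed in your base case: part b) for $\prec_1$ is not vacuous (it asserts $\var(c)\subseteq\var(d)$ and uses that $x_1$ is a nest point). Finally, in c), transitivity of $\prec_{k+1}$ cannot be pushed onto transitivity of $\prec$ alone: in the case $c\prec_k d$, $d\not\prec_k e$ one needs the inductive transitivity of $\prec_k$, and in the case $c\not\prec_k d$ one needs part b) applied to $d\prec_{k+1}e$ to transfer $x_{k+1}\in\var(d)$ into $\var(e)$; these are precisely the ingredients of the paper's case analysis, so this part of your plan needs the same strengthening, though it is repairable along the lines you indicate.
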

\begin{proof}
  a)~follows directly from the definition of $\prec_{k+1}$. 
  
  We prove b)~by induction on $k$. For $k=0$, let $c,d \in I$ such that $c \prec_1 d$. Since $\prec_0 = \emptyset$, $c \not \prec_0 d$. Thus, by definition, there exists $e$ such that: $c \preceq_0 e \prec d$ and $x_1 \in \var(c) \cap \var(d)$. Again, since $\prec_0 = \emptyset$, we have $c = e$. Thus $c \prec d$ and since $x_1$ is a nest point, $\var(c) \subseteq \var(d)$, which is the induction hypothesis for $k=0$ since $X_0 = \emptyset$.

Now assume that $k \geq 0$ and that the statement is true for $k$. Let $c,d \in I$ such that $c \prec_{k+1} d$. If $c \prec_k d$, then we get from the induction hypothesis that $c \prec d$ and $\var(c) \setminus X_{k-1} \subseteq \var(d) \setminus X_{k-1}$. This directly yields $\var(c) \setminus X_{k} \subseteq \var(d) \setminus X_{k}$. Now, if $c \not \prec_k d$, then there exists $e$ such that $c \preceq_k e$, $e \prec d$ and $x_{k+1} \in \var(e) \cap \var(d)$. By induction $c \preceq e$ and thus $c \prec d$ since $\prec$ is transitive by Lemma \ref{lem:totalorder}. As $x_{k+1}$ is a nest point after eliminating $X_k$, we have $\var(e) \setminus X_k \subseteq \var(d) \setminus X_k$. By induction we get $\var(c) \setminus X_k \subseteq \var(e) \setminus X_k$ and thus $\var(c) \setminus X_k \subseteq \var(d) \setminus X_k$ as desired.

For c), observe that $\preceq_k$ reflexive by definition. Furthermore, $\preceq_k$ is antisymmetric since it is a subrelation of the order $\prec$ by b). It remains to show that $\prec_k$ is transitive. We do this by induction on $k$. The case $k = 0$ is trivial since $(\prec_0) = \emptyset$. Now suppose that $(\prec_k)$ is transitive for $k \geq 0$. Let $c,d,e \in I$ such that $c \prec_{k+1} d$ and $d \prec_{k+1} e$. If $c \prec_k d \prec_k e$, then by induction $c \prec_k e$ and then $c \prec_{k+1} e$ since $(\prec_k) \subseteq (\prec_{k+1})$. 

Now assume that $c \not \prec_k d$. Then by definition, there exists $c'$ such that $c \preceq_k c' \prec d$ and $x_{k+1} \in \var(c') \cap \var(d)$. Since $d \prec_{k+1} e$, we also have $c' \prec e$ and $x_{k+1} \in \var(d) \setminus X_k \subseteq \var(e) \setminus X_k$. Thus $x_{k+1} \in \var(c') \cap \var(e)$ and $c \preceq_k c' \prec e$, that is $c \prec_{k+1} e$.

Finally assume that $c \prec_k d$ and $d \not \prec_{k} e$. Since $d \prec_{k+1} e$, there exists $d'$ such that $d \preceq_k d' \prec e$ and $x_{k+1} \in \var(d') \cap \var(e)$. By induction, $(\prec_k)$ is transitive. Thus $c \prec_k d' \prec e$ and $x_{k+1} \in \var(d') \cap \var(e)$. That is $c \prec_{k+1} e$. 
\end{proof}

Again, from our intuitive understanding of $\prec_k$, the transitivity is obvious: if $\ck{d}{k}$ depends on $\ck{c}{k}$ and $\ck{e}{k}$ depends on $\ck{d}{k}$, then  $\ck{e}{k}$ should depend on $\ck{c}{k}$. An other informal observation is that if $c$ and $d$ have no common dependencies at step $k$, then they should not share a variable in $X_k$ since sharing a nest point automatically induces a dependency:

\begin{lemma}
\label{lem:cspinter}
  For all $c,d \in I$, if $c \prec d$ but $c \not \prec_k d$, then $\var(c) \cap \var(d) \cap X_k = \emptyset$.
\end{lemma}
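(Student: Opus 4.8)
The plan is to prove this by induction on $k$, exploiting the monotonicity of the orders $\prec_k$ (Lemma~\ref{lem:orderprop}~a)) together with the reflexivity built into the definition of $\prec_{k+1}$.

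For the base case $k = 0$ we have $X_0 = \emptyset$, so $\var(c) \cap \var(d) \cap X_0 = \emptyset$ holds vacuously for any $c, d$. For the inductive step, suppose the statement holds for $k$, and let $c, d \in I$ with $c \prec d$ and $c \not\prec_{k+1} d$. Since $(\prec_k) \subseteq (\prec_{k+1})$ by Lemma~\ref{lem:orderprop}~a), we also have $c \not\prec_k d$, so the induction hypothesis gives $\var(c) \cap \var(d) \cap X_k = \emptyset$. Because $X_{k+1} = X_k \cup \{x_{k+1}\}$, it remains only to rule out $x_{k+1} \in \var(c) \cap \var(d)$.

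This is where the one genuine idea of the proof comes in: suppose for contradiction that $x_{k+1} \in \var(c) \cap \var(d)$. Take $e := c$ in the recursive clause of the definition of $\prec_{k+1}$. Then $c \preceq_k e$ holds trivially (indeed $c = e$), we have $e = c \prec d$ by assumption, and $x_{k+1} \in \var(d) \cap \var(e) = \var(d) \cap \var(c)$. Hence $c \prec_{k+1} d$, contradicting our hypothesis. Therefore $x_{k+1} \notin \var(c) \cap \var(d)$, and combining this with the induction hypothesis yields $\var(c) \cap \var(d) \cap X_{k+1} = \emptyset$, completing the induction.

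There is essentially no serious obstacle here; the proof is a short induction whose only subtlety is recognizing that the witness $e$ in the definition of $\prec_{k+1}$ may be taken to be $c$ itself, which turns ``$c \prec d$ and $x_{k+1}$ is shared'' directly into ``$c \prec_{k+1} d$''. The intuition is exactly the informal remark preceding the lemma: two constraints that share a nest point must acquire a dependency when that nest point is eliminated, so the only way to avoid $c \prec_k d$ while having $c \prec d$ is for $c$ and $d$ to share no variable among the first $k$ eliminated ones.
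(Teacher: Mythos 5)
Your proof is correct and uses the same key idea as the paper: the witness $e := c$ in the definition of $\prec_{k+1}$ (valid since $c \preceq_k c$) turns ``$c \prec d$ and $x_{k+1} \in \var(c) \cap \var(d)$'' into $c \prec_{k+1} d$. The paper argues directly by contradiction (if some $x_j \in \var(c)\cap\var(d)$ with $j \le k$, then $c \prec_j d$ and hence $c \prec_k d$ by Lemma~\ref{lem:orderprop}~a)), while you package the same observation as an induction on $k$; this is only a cosmetic difference.
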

\begin{proof}
  By way of contradiction. If for $j \leq k$, $x_j \in \var(c) \cap \var(d) \cap X_k$ then $c \preceq_j c \prec d$ and $x_j \in \var(c) \cap \var(d)$. That is $c \prec_j d$ and by Lemma~\ref{lem:orderprop} we get $c \prec_k d$.
\end{proof}

We need one final property: if $d \prec e$ both depend on $c$ at step $k$, then these dependencies were induced by the elimination of at most two nest points. During the elimination of the second nest point, $e$ will get both the dependencies of $c$ but also the dependencies of $d$. Thus $e$ should depend on $d$. This is formalized by the following lemma:
\begin{lemma}
  \label{lem:cspsup}
  Let $c,d,e \in I$. If $c \prec_k d$, $c \prec_k e$ and $d \prec e$ then $d \prec_k e$.
\end{lemma}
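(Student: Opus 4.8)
\textbf{Proof plan for Lemma~\ref{lem:cspsup}.}

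The plan is to do an induction on $k$, mirroring the inductive definition of $\prec_k$ and the style already used for Lemma~\ref{lem:orderprop} and Lemma~\ref{lem:cspsup}'s siblings. The base case $k=0$ is vacuous since $(\prec_0)=\emptyset$, so there are no $c,d,e$ with $c\prec_0 d$. For the inductive step, assume the statement holds for $k$ and suppose $c\prec_{k+1}d$, $c\prec_{k+1}e$ and $d\prec e$; we must produce $d\prec_{k+1}e$. The natural case split is on whether $c\prec_k d$ and whether $c\prec_k e$, using the definition of $\prec_{k+1}$ to ``unfold'' whichever of these fails.

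First I would dispose of the easy case: if both $c\prec_k d$ and $c\prec_k e$ hold, then the induction hypothesis gives $d\prec_k e$, and hence $d\prec_{k+1}e$ by Lemma~\ref{lem:orderprop}a). Next, suppose $c\not\prec_k e$. Then unfolding $c\prec_{k+1}e$ yields some $e'$ with $c\preceq_k e'\prec e$ and $x_{k+1}\in\var(e')\cap\var(e)$. The idea is to show $d\preceq_k e'$, which combined with $e'\prec e$ and $x_{k+1}\in\var(e')\cap\var(e)$ gives exactly $d\prec_{k+1}e$ by definition. To get $d\preceq_k e'$: if $c=e'$ then from $c\prec_{k+1}d$ and Lemma~\ref{lem:orderprop}b) we have $c\prec d$, i.e.\ $e'\prec d$; but we also need this to interact with $x_{k+1}$ — actually here I should be more careful and instead argue that $c\prec_k e'$ or $c=e'$ both let us invoke the induction hypothesis on the triple $(c,d,e')$ once we know $c\prec_k d$ and $d\prec e'$, giving $d\prec_k e'$. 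So the subgoal becomes: show $c\prec_k d$ (or handle $c\not\prec_k d$ separately), and show $d\prec e'$. The ordering $d\prec e'$ should follow from totality of $\prec$ (Lemma~\ref{lem:totalorder}) together with ruling out $e'\preceq d$: if $e'\preceq d$ then since $x_{k+1}\in\var(e')$, and $e'\preceq_k e'$, we would get $e'\prec_{k+1}$-type information pushing $e'$'s dependency into $d$, ultimately contradicting $c\not\prec_k e$ via $c\preceq_k e'$ and transitivity... this chain is the delicate part and will need Lemma~\ref{lem:cspinter} and the nest-point property ($\var(e')\setminus X_k\subseteq\var(d)\setminus X_k$ when $x_{k+1}$ lies in both) to control where the shared variable $x_{k+1}$ sits.

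The symmetric remaining case is $c\not\prec_k d$ (with $c\prec_k e$): unfold $c\prec_{k+1}d$ to get $d'$ with $c\preceq_k d'\prec d$ and $x_{k+1}\in\var(d')\cap\var(d)$. Since $x_{k+1}$ is a nest point of $\calH(I)\setminus X_k$ and $x_{k+1}\in\var(d)$, the sets of edges through $x_{k+1}$ form a chain, so $\var(d')\setminus X_k$ and the edge-neighborhood of $x_{k+1}$ are comparable; combined with $d\prec e$ we should be able to locate $x_{k+1}$ inside $\var(e)$ as well (using Lemma~\ref{lem:orderprop}b) applied to $c\prec_{k+1}e$, which gives $\var(c)\setminus X_k\subseteq\var(e)\setminus X_k$, and chaining through $d'$). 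Then $c\preceq_k d'\prec e$ with $x_{k+1}\in\var(d')\cap\var(e)$ would give $c\prec_{k+1}e$ for free — but we already have that; what we actually want is $d\prec_{k+1}e$, so instead I would aim to show $d\preceq_k d'$ is false (indeed $d'\prec d$) and pivot to showing $d\prec_k e$ directly via the induction hypothesis: we have $c\prec_k e$, and we need $c\prec_k d$ — but that is exactly what we assumed fails. So in this branch the right move is different: use $d'$ with $d'\prec d$, $c\preceq_k d'$, $x_{k+1}\in\var(d')\cap\var(d)$, together with $c\prec_k e$ and (hopefully) $d'\prec e$ to get, by the induction hypothesis on $(c,d',e)$, that $d'\prec_k e$; then since additionally $d'\prec d$... hmm, that gives dependency of $e$ on $d'$, not of $e$ on $d$.

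The honest summary is: I expect the proof to reduce, via the $\prec_{k+1}$ definition, to a couple of applications of the induction hypothesis plus one genuinely combinatorial lemma about nest points — namely that when $x_{k+1}$ is shared by the relevant constraints, the three constraints' variable sets (minus $X_k$) are linearly ordered by inclusion, which lets $e$ ``inherit'' $d$'s dependency during the $(k+1)$-st elimination. \textbf{The main obstacle} will be the bookkeeping in the case where \emph{both} $c\prec_k d$ and $c\prec_k e$ are created freshly at step $k+1$ (via possibly different intermediaries $d'$ and $e'$): there one must show these intermediaries, which both contain $x_{k+1}$, are $\prec$-comparable, determine which is larger, and feed the right triple into the induction hypothesis; ruling out the ``wrong'' orderings is where Lemma~\ref{lem:cspinter}, Lemma~\ref{lem:orderprop}b), and the nest-point chain property all get used together. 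Once that comparability is pinned down, closing the argument is routine.
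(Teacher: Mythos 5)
Your skeleton (induction on $k$, case split on which of $c\prec_k d$, $c\prec_k e$ holds, easy case by the induction hypothesis) is the same as the paper's, but the proposal has a genuine gap: you never isolate the one idea that closes all the remaining cases, namely that it suffices to get $x_{k+1}\in\var(d)\cap\var(e)$, because then $d$ \emph{itself} serves as the intermediary in the definition of $\prec_{k+1}$ (take $e:=d$ there: $d\preceq_k d\prec e$ and $x_{k+1}\in\var(d)\cap\var(e)$ give $d\prec_{k+1}e$). With this observation the case you single out as the ``main obstacle''---both relations created freshly at step $k+1$ via intermediaries $d'$ and $e'$---is in fact the easiest: the two unfoldings already give $x_{k+1}\in\var(d)$ and $x_{k+1}\in\var(e)$, and you are done immediately; no comparison of $d'$ and $e'$ is needed at all.

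The two mixed cases also do not go through as you sketched them. In the case $c\prec_k d$, $c\not\prec_k e$ (intermediary $c'$ with $c\preceq_k c'\prec e$ and $x_{k+1}\in\var(c')\cap\var(e)$), your plan to \emph{rule out} $c'\preceq d$ by contradiction fails, because that ordering is genuinely possible; the paper instead splits on the total order $\prec$ (Lemma~\ref{lem:totalorder}): if $d\preceq c'$, the induction hypothesis on $(c,d,c')$ gives $d\preceq_k c'$ and $c'$ witnesses $d\prec_{k+1}e$; if $c'\prec d$, the induction hypothesis on $(c,c',d)$ gives $c'\prec_k d$, Lemma~\ref{lem:orderprop}~b) then puts $x_{k+1}\in\var(d)$, and $d$ itself is the witness. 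In the case $c\not\prec_k d$, $c\prec_k e$ (intermediary $d'$ with $c\preceq_k d'\prec d$ and $x_{k+1}\in\var(d')\cap\var(d)$), you correctly obtain $d'\prec_k e$ from the induction hypothesis on $(c,d',e)$ but then stop at ``that gives dependency of $e$ on $d'$, not of $e$ on $d$''; the missing finishing move is again Lemma~\ref{lem:orderprop}~b): $d'\prec_k e$ and $x_{k+1}\in\var(d')$ force $x_{k+1}\in\var(e)$, so $x_{k+1}\in\var(d)\cap\var(e)$ and $d$ is its own witness. Note also that the heavier machinery you anticipate (Lemma~\ref{lem:cspinter}, nest-point chain arguments) is not needed anywhere; Lemma~\ref{lem:orderprop}~b), totality of $\prec$, and the definition of $\prec_{k+1}$ suffice.
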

\begin{proof}
  The proof is by induction on $k$. The case $k = 0$ is trivial since the precondition cannot hold. Assume the result holds for $k \geq 0$ and let $c,d,e \in I$ be constraints such that $c \prec_{k+1} d$, $c \prec_{k+1} e$ and $d \prec e$. If both $c \prec_k d$ and $c \prec_k e$, then the induction gives $d \prec_k e$ thus $d \prec_{k+1} e$.

  Otherwise, assume that $c \not \prec_k d$ and $c \not \prec_k e$. Then by definition $x_{k+1} \in \var(d) \cap \var(e)$. Since $d \prec e$, it gives $d \prec_{k+1} e$.

  Now assume $c \not \prec_k d$ but $c \prec_k e$. By definition, there exists $c'$ such that $c \preceq_k c' \prec d$ and $x_{k+1} \in \var(c') \cap \var(d)$. Since $c' \prec d \prec e$, we have $c' \prec e$ and by induction $c \prec_k c'$  and $c \prec_k e$ gives $c' \prec_k e$. Thus $x_{k+1} \in \var(e)$ and $d \prec_{k+1} e$.

  Finally assume that $c \prec_k d$ but $c \not \prec_k e$. By definition, there exists $c'$ such that $c \preceq_k c' \prec e$ and $x_{k+1} \in \var(c') \cap \var(e)$. As in the previous case, by induction, we can deduce that $d \preceq_k c'$ or $c' \prec_k d$. Both cases lead to $d \prec_{k+1} e$.
\end{proof}

We now define for every $k$ and every constraint $c$ a subinstance $I_k(c)$ of $I$ that intuitively contains the relations of $I$ that have an influence on the weights of $c$ after the first $k$ variables have been eliminated. 

\begin{definition}
 For every $k\in \{0, \ldots, n\}$ and $c\in I$ we define $I_k(c) := \{d \in I \mid d \preceq_k c\}$.
\end{definition}

We will now prove a lemma that helps us understand how $I_{k}(c)$ is evolving during the algorithm. Again, the behaviour is intuitively very natural: If $x_{k+1} \notin \var(c)$, then $c$ will have no new dependencies, thus $I_{k+1}(c) = I_k(c)$. If $x_{k+1} \in \var(c)$ however, $c$ will take all the dependencies of the constraints $d$ such that $x_{k+1} \in \var(d)$ and $d \prec c$.

\begin{lemma}
\label{lem:cspik}
  For $k \leq 0$, if  $x_{k+1} \notin \var(c)$ then $I_{k+1}(c) = I_k(c)$. Otherwise, let $I(x_{k+1}) := \{c_1, \ldots, c_m\}$ with $c_1 \prec \ldots \prec c_m$. Then we have
$$ I_{k+1}(c_1) = I_{k}(c_1)$$
and for $i < m$ 
$$ I_{k+1}(c_{i+1}) = I_k(c_{i+1}) \cup I_{k+1}(c_i).$$
\end{lemma}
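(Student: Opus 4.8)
The plan is to unwind the inductive definition of $\prec_{k+1}$ in terms of $\prec_k$ and $\prec$, and to treat the two cases ($x_{k+1} \notin \var(c)$ and $x_{k+1} \in \var(c)$) separately. In both cases the inclusion $I_k(\cdot) \subseteq I_{k+1}(\cdot)$ is immediate from Lemma~\ref{lem:orderprop}~a), so the real content is the reverse inclusion, i.e.~showing that no spurious new dependencies are created.

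For the first case, suppose $x_{k+1} \notin \var(c)$ and let $d \prec_{k+1} c$ with $d \not\prec_k c$. By definition of $\prec_{k+1}$ there is $e$ with $d \preceq_k e \prec c$ and $x_{k+1} \in \var(c) \cap \var(e)$; but $x_{k+1} \in \var(c)$ contradicts the hypothesis, so $I_{k+1}(c) = I_k(c)$. For the second case, assume $x_{k+1} \in \var(c)$ and write $I(x_{k+1}) = \{c_1, \ldots, c_m\}$ with $c_1 \prec \cdots \prec c_m$; since $x_{k+1}$ is a nest point (after eliminating $X_k$), the sets $\var(c_j) \setminus X_k$ are linearly ordered by inclusion, so the order on $I(x_{k+1})$ induced by $\prec$ is indeed total and the enumeration makes sense. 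For $c_1$: if $d \prec_{k+1} c_1$ with $d \not\prec_k c_1$, there is $e$ with $d \preceq_k e \prec c_1$ and $x_{k+1} \in \var(e)$; then $e \in I(x_{k+1})$ and $e \prec c_1$, contradicting minimality of $c_1$ in $I(x_{k+1})$. Hence $I_{k+1}(c_1) = I_k(c_1)$. For the recursion, fix $i < m$ and consider $d \in I_{k+1}(c_{i+1})$. If $d \preceq_k c_{i+1}$ we are done ($d \in I_k(c_{i+1})$); otherwise $d \prec_{k+1} c_{i+1}$ but $d \not\prec_k c_{i+1}$, so there is $e$ with $d \preceq_k e \prec c_{i+1}$ and $x_{k+1} \in \var(e) \cap \var(c_{i+1})$. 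Then $e \in I(x_{k+1})$ and $e \prec c_{i+1}$, so $e = c_j$ for some $j \le i$; hence $e \preceq c_i$. The key point is then to upgrade $d \preceq_k e = c_j$ and $e \preceq c_i$ to $d \in I_{k+1}(c_i)$: if $e = c_i$ this is immediate; if $c_j \prec c_i$ then, since $x_{k+1} \in \var(c_j) \cap \var(c_i)$ (both contain the nest point $x_{k+1}$), the definition of $\prec_{k+1}$ with the witness $e := c_j$ gives $c_j \prec_{k+1} c_i$, and transitivity of $\preceq_{k+1}$ (Lemma~\ref{lem:orderprop}~c)) together with $d \preceq_k c_j$ yields $d \prec_{k+1} c_i$, i.e.~$d \in I_{k+1}(c_i)$. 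Conversely, if $d \in I_{k+1}(c_i)$ then $d \preceq_{k+1} c_i$, and since $c_i \prec c_{i+1}$ with $x_{k+1} \in \var(c_i) \cap \var(c_{i+1})$ we get $c_i \prec_{k+1} c_{i+1}$, hence $d \prec_{k+1} c_{i+1}$ by transitivity; so $I_{k+1}(c_i) \subseteq I_{k+1}(c_{i+1})$, and $I_k(c_{i+1}) \subseteq I_{k+1}(c_{i+1})$ by monotonicity. This establishes both inclusions.

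The main obstacle is the bookkeeping in the second case: one must carefully use that every constraint $e$ containing $x_{k+1}$ with $e \prec c_{i+1}$ appears as some $c_j$ with $j \le i$, and that all the $c_j$ pairwise share the variable $x_{k+1}$, so that the "$\prec$" relations among them lift to "$\prec_{k+1}$" relations via the defining clause of $\prec_{k+1}$. Once that observation is in place, transitivity of $\preceq_{k+1}$ does the rest, and no induction on $k$ beyond what is already packaged in Lemma~\ref{lem:orderprop} is needed. (Note the statement's "$k \le 0$" should read "$k \ge 0$"; we prove it for all $k \in \{0,\ldots,n-1\}$.)
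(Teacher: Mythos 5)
Your proof is correct and follows essentially the same route as the paper's: monotonicity from Lemma~\ref{lem:orderprop}~a) for the easy inclusions, minimality of $c_1$ in $I(x_{k+1})$ for the base case, and for the recursion the observation that any witness $e$ must equal some $c_j$ with $j\le i$, which yields $d\in I_{k+1}(c_i)$. Your small detour in the case $j<i$ (first deriving $c_j\prec_{k+1}c_i$ and then invoking transitivity, rather than applying the definition of $\prec_{k+1}$ directly to $d$ and $c_i$ with witness $c_j$) is an equivalent rephrasing, and you rightly flag the typo $k\le 0$ for $k\ge 0$.
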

\begin{proof}
First, assume that $x_{k+1} \notin \var(c)$. Since $(\prec_k) \subseteq (\prec_{k+1})$ by Lemma~\ref{lem:orderprop}, it follows that $I_k(c) \subseteq I_{k+1}(c)$. Now, if $d \in I_{k+1}(c)$ and $d \neq c$, then either $d \prec_k c$ or there exists $e$ such that $d \preceq_k e \prec c$ and $x_{k+1} \in \var(c) \cap \var(e)$. Since $x_{k+1} \notin \var(c)$, we necessarily have $d \prec_k c$, that is $d \in I_k(c)$. This implies $I_k(c) =I_{k+1}(c)$.

For the second equality, $I_{k}(c_1) \subseteq I_{k+1}(c_1)$ still follows from Lemma~\ref{lem:orderprop}. For the other direction, consider $d \in I_{k+1}(c_1)$, that is $d \preceq_{k+1} c_1$. By way of contradiction, assume that $d \not \preceq_{k} c$. By definition of $\prec_{k+1}$, there exists $e \in I$ such that $d \preceq_k e \prec c_1$ and $x_{k+1} \in \var(e)$. However, by definition, $c_1$ is the minimal constraint with respect to $\preceq$ whose variables contain $x_{k+1}$. Thus such an $e\in I$ cannot exist. Consequently, $d \in I_k(c_1)$ and it follows $I_{k+1}(c_1) = I_k(c_1)$.

Now fix $i<m$. By definition of $c_{i+1}$, we have $x_{k+1}\in \var(c_{i+1})$. We first prove that $I_k(c_{i+1}) \cup I_{k+1}(c_i) \subseteq I_{k+1}(c_{i+1})$. By Lemma~\ref{lem:orderprop} again, $I_k(c_{i+1}) \subseteq I_{k+1}(c_{i+1})$. Now let $d \in I_{k+1}(c_i)$. We have $c_i \preceq_k c_i \prec c_{i+1}$ and $x_{k+1} \in \var(c_i) \cap \var(c_{i+1})$ and thus, by definition of $\prec_{k+1}$ this implies $c_i \prec_{k+1} c_{i+1}$. This yields $d \preceq_{k+1} c_i \prec_{k+1} c_{i+1}$ and thus $d \in I_{k+1}(c_{i+1})$. 

Finally, we prove that $I_{k+1}(c_{i+1}) \subseteq I_k(c_{i+1}) \cup I_{k+1}(c_i)$. So let $d \in I_{k+1}(c_{i+1})$. If $d \preceq_k c_{i+1}$, then, by definition, we have $d \in I_k(c_{i+1})$. So assume now that $d \not \preceq_k c_{i+1}$. By definition of $\prec_{k+1}$, there exists $e$ such that $d \preceq_k e \prec c_{i+1}$ and $x_{k+1} \in \var(e) \cap \var(c_{i+1})$. Since $x_{k+1} \in \var(e)$ and $e \prec c_{i+1}$, it follows that $e = c_j$ for a $j < i+1$. If $j = i$, then $d \preceq_k e = c_i$ and thus $d \preceq_{k+1} c_i$ which implies $d \in I_{k+1}(c_i)$. Otherwise, $j < i$ and we have $d \preceq_k c_j \prec c_i$ and $x_{k+1} \in \var(c_j) \cap \var(c_i)$, which gives $d \prec_{k+1} c_i$. Thus $d \in I_{k+1}(c_i)$ as well.
\end{proof}

\subsection{Proof of the runtime bound}

In this section, we will prove that for each $c \in I$ and $a$ an assignment of $\var(\ck{c}{k})$, $\ck{c}{k}(a)$ is proportional to
$$ \twofrac{I_k(c)}{I_k(c) \setminus \{c\}}{a}.$$
Since $I_k(c)$ is a subinstance of $I$, the bitsize of the computed rational number is polynomial in the size of the input. Thus, it will follow that the weight of $\ck{c}{k}$ is a rational number of polynomial bitsize and thus all arithmetic operations of the algorithm can be done in polynomial time. 

Remember that by convention $w(\emptyset, a)=1$ and that for $x \in \var(I)$, $I(x) = \{c \in I \mid x \in \var(c)\}$.

% We are now ready to explain how the cancellations are happening during the elimination process in the algorithm of Section~\ref{sec:cspnest}.
\begin{lemma}
  \label{lem:cspbigprod}
  Let $k \geq 0$ and $I(x_{k+1}) = \{c_1, \ldots, c_m\}$ with $c_1 \prec \ldots \prec c_m$. For all $j \leq m$ and $a : \var(c_j) \setminus X_k \rightarrow D$ we have
$$ \prod_{i=1}^j \twofrac{I_k(c_i)}{I_k(c_i) \setminus \{c_i\}}{a} = \twofrac{I_{k+1}(c_j)}{I_{k+1}(c_j) \setminus \{c_1, \ldots, c_j\}}{a}.$$
\end{lemma}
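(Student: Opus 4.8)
The plan is to prove the identity by induction on $j$, mirroring the inductive structure of Lemma~\ref{lem:cspik} which describes how $I_{k+1}(c_i)$ is built up from the $I_k(c_i)$. For the base case $j=1$, Lemma~\ref{lem:cspik} gives $I_{k+1}(c_1) = I_k(c_1)$, and since $c_1$ is the $\prec$-minimal constraint containing $x_{k+1}$, the set $\{c_1,\ldots,c_j\}$ on the right is just $\{c_1\}$; so both sides equal $\twofrac{I_k(c_1)}{I_k(c_1)\setminus\{c_1\}}{a}$ and there is nothing to prove (one should check $w(I_k(c_1)\setminus\{c_1\},a)\neq 0$, or rather set up the statement so that the convention $w(\emptyset,a)=1$ and Lemma~\ref{lem:sub0} handle the degenerate cases).

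For the inductive step, suppose the claim holds for $j$ and consider $j+1 \le m$. By the induction hypothesis and the definition of the product,
$$\prod_{i=1}^{j+1} \twofrac{I_k(c_i)}{I_k(c_i)\setminus\{c_i\}}{a} = \twofrac{I_{k+1}(c_j)}{I_{k+1}(c_j)\setminus\{c_1,\ldots,c_j\}}{a}\cdot\twofrac{I_k(c_{j+1})}{I_k(c_{j+1})\setminus\{c_{j+1}\}}{a}.$$
I would now apply Corollary~\ref{cor:cspcancel} with a careful choice of $J_1,J_2,J_3,J_4$ so that the ``overlap'' subinstance cancels between the two fractions and we are left with $\twofrac{I_{k+1}(c_{j+1})}{I_{k+1}(c_{j+1})\setminus\{c_1,\ldots,c_{j+1}\}}{a}$. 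The natural choice is to take the two numerators and two denominators of the product above and set, say, $J_1 = I_{k+1}(c_j)$ together with the other three sets being $I_k(c_{j+1})$, $I_{k+1}(c_j)\setminus\{c_1,\ldots,c_j\}$, $I_k(c_{j+1})\setminus\{c_{j+1}\}$ in the appropriate slots; one rewrites the product of two fractions as a single fraction $\frac{w(J_1,a)w(J_2,a)}{w(J_3,a)w(J_4,a)}$ and invokes the corollary. The target identity $I_{k+1}(c_{j+1}) = I_k(c_{j+1})\cup I_{k+1}(c_j)$ from Lemma~\ref{lem:cspik} is exactly what makes $(J_1\setminus J_3)\cup J_2$ collapse to $I_{k+1}(c_{j+1})$ (and similarly on the denominator side, where one tracks that the removed constraints accumulate to $\{c_1,\ldots,c_{j+1}\}$).

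The main obstacle will be verifying the five hypotheses of Corollary~\ref{cor:cspcancel}, in particular the variable-disjointness conditions \ref{condii}--\ref{condv} with the present partial assignment $a$ on $W = \var(c_{j+1})\setminus X_k$. This is where the structural lemmas about $\prec_k$ do the real work: condition~\ref{condi} ($J_1\cap J_2 \subseteq J_3$ and $J_3\cap J_4\subseteq J_1$) should follow from the inclusion relations among the $I_k$'s and $I_{k+1}$'s together with the fact (Lemma~\ref{lem:cspsup}) that shared dependencies propagate, while the variable conditions will use Lemma~\ref{lem:cspinter} — if two constraints are $\prec$-comparable but not $\prec_k$-comparable then they share no variable in $X_k$ — combined with Lemma~\ref{lem:orderprop}b) (if $c\prec_{k+1}d$ then $\var(c)\setminus X_k\subseteq\var(d)\setminus X_k$) to show that any variable shared between the relevant difference-sets lies in $\var(c_{j+1})\setminus X_k = W$. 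I expect each of the five conditions to reduce, after unfolding the definitions of the four $J$'s, to one of these three lemmas applied to the appropriate pair of constraints, but matching up the indices correctly and handling the case analysis ($d \preceq_k c_{j+1}$ versus $d \not\preceq_k c_{j+1}$, and $j = i$ versus $j < i$ as in the proof of Lemma~\ref{lem:cspik}) will be the delicate bookkeeping step. I would also keep an eye on the non-vanishing hypotheses $w(J_3,a),w(J_4,a)\neq 0$: if one of them is zero the corresponding $f'$-value is defined to be $0$ in Theorem~\ref{thm:cspalgo}, so the statement must be read with the convention that both sides are $0$ in that case, and Lemma~\ref{lem:sub0} is the tool to propagate the vanishing consistently.
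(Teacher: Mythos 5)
Your skeleton is exactly the paper's proof: induction on $j$, base case from $I_{k+1}(c_1)=I_k(c_1)$ (Lemma~\ref{lem:cspik}), and in the step an application of Corollary~\ref{cor:cspcancel} with $W=\var(c_{j+1})\setminus X_k$, $J_1=I_{k+1}(c_j)$, $J_2=I_k(c_{j+1})$, $J_3=I_k(c_{j+1})\setminus\{c_{j+1}\}$, $J_4=I_{k+1}(c_j)\setminus\{c_1,\ldots,c_j\}$, where Lemma~\ref{lem:cspik} gives $(J_1\setminus J_3)\cup J_2=I_{k+1}(c_{j+1})$ and $(J_3\setminus J_1)\cup J_4=I_{k+1}(c_{j+1})\setminus\{c_1,\ldots,c_{j+1}\}$. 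So the decomposition and the supporting lemmas you name are the right ones, and your caveat about vanishing denominators is harmless (the zero case is handled where the lemma is invoked, in Theorem~\ref{thm:explicit}).

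The gap is that the verification of conditions \ref{condi}--\ref{condv}, which is essentially the entire content of the proof, is only announced as ``delicate bookkeeping I expect to work''. It is not routine: for conditions \ref{condiii}--\ref{condv} one must prove that every $c\in J_3\setminus J_1$ and $d\in J_1$ (resp.\ $c\in J_1\setminus J_3$, $d\in J_2$ for \ref{condiv}) are \emph{incomparable} under $\prec_k$ before Lemma~\ref{lem:cspinter} and Lemma~\ref{lem:orderprop}\,b) can be used to push the shared variables into $W$, and the two directions need different, specific arguments. Excluding $c\prec_k d$ is easy (transitivity would force $c$ into $J_1$, resp.\ into $J_3$), but excluding $d\prec_k c$ requires either Lemma~\ref{lem:cspsup} directly (as in \ref{condiv}) or the following maximality argument: from $d\prec_k c$, $d\prec_{k+1} c_j$ and $c\not\preceq_{k+1} c_j$ one deduces via Lemma~\ref{lem:cspsup} and totality of $\prec$ that $c_j\prec_{k+1} c\prec_{k+1} c_{j+1}$, hence $c_j\prec c\prec c_{j+1}$; then Lemma~\ref{lem:orderprop}\,b) transfers $x_{k+1}\in\var(c_j)$ to $x_{k+1}\in\var(c)$, contradicting the fact that $c_j$ is the $\prec$-maximal element of $I(x_{k+1})$ below $c_{j+1}$. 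This use of the consecutiveness of $c_j$ and $c_{j+1}$ inside $I(x_{k+1})$ is the one genuinely non-obvious idea of the proof, and it does not appear in your sketch; without it the incomparability, and hence the hypotheses of Corollary~\ref{cor:cspcancel}, are not established. (Condition \ref{condi} is the easy part: $J_3\cap J_4\subseteq J_1$ since $J_4\subseteq J_1$, and $J_1\cap J_2\subseteq J_3$ because no element of $J_1$ can equal $c_{j+1}$; condition \ref{condii} reduces to \ref{condiv}.) Also note, as the paper does, that applying the induction hypothesis to your $a$, which assigns $\var(c_{j+1})\setminus X_k$ rather than $\var(c_j)\setminus X_k$, needs the observation $\var(c_i)\setminus X_k\subseteq\var(c_{j+1})\setminus X_k$ for $i\le j$.
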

\begin{proof}
  The proof is by induction on $j$. For $j = 1$, it is a consequence of Lemma~\ref{lem:cspik} since $I_{k+1}(c_1) = I_k(c_1)$. Assume the result holds for $j \geq 1$. Fix $a : \var(c_{j+1}) \setminus X_k \rightarrow D$. Observe first that by Lemma~\ref{lem:orderprop} we have $\var(c_i)\setminus X_k \subseteq \var(c_{j+1}) \setminus X_k$ for $i\le j$ (this could alternatively be seen from the fact that $x_{k+1}$ is a nest point after removing $x_1, \ldots, x_k$). Thus we can use induction for $a$ and get
$$ \prod_{i=1}^{j+1} \twofrac{I_k(c_i)}{I_k(c_i) \setminus \{c_i\}}{a} = \twofrac{I_{k+1}(c_j)}{I_{k+1}(c_j) \setminus \{c_1, \ldots, c_j\}}{a} \twofrac{I_k(c_{j+1})}{I_k(c_{j+1}) \setminus \{c_{j+1}\}}{a}.$$

We will apply Corollary~\ref{cor:cspcancel} with $J_1 := I_{k+1}(c_j)$, $J_2 := I_k(c_{j+1})$, $J_3 := I_k(c_{j+1}) \setminus \{c_{j+1}\}$ and $J_4 := I_{k+1}(c_j) \setminus \{c_1, \ldots, c_j\}$ and $W := \var(c_{j+1}) \setminus X_k$.

Observe that by Lemma~\ref{lem:cspik} and by the fact that $J_3 \subseteq J_2$, $(J_1 \setminus J_3) \cup J_2 = J_1 \cup J_2 = I_{k+1}(c_{j+1})$. Moreover,  $(J_3 \setminus J_1) \cup J_4 = (J_3 \setminus J_1) \cup (J_1 \setminus \{c_1, \ldots, c_j\}) = (J_1 \cup J_3) \setminus \{c_1, \ldots, c_j\} = I_{k+1}(c_{j+1}) \setminus \{c_1, \ldots, c_{j+1}\}$ since $c_{j+1} \notin J_1$. Hence, if the conditions of Corollary~\ref{cor:cspcancel} are met, the lemma will follow.

We now verify each conditions of Corollary~\ref{cor:cspcancel}:

\begin{enumerate}[label=(\roman{enumi})]
\item if $c \in J_1 \cap J_2$, then $c \preceq_{k+1} c_{j+1}$ (it is in $J_2$) and $c \neq c_{j+1}$ since $c \preceq c_j \prec c_{j+1}$. Thus $c \in J_3$. Moreover $J_4 \subseteq J_1$, thus $J_3 \cap J_4 \subseteq J_1$.
\item since $J_3 \subseteq J_2$, this condition is implied by condition (iv).

%  let $c \in J_1 \setminus J_3$ and $d \in J_1 \cap J_3$. We want to show that $\var(c) \cap \var(d) \subseteq \var(c_{j+1}) \setminus X_k = W$. Since both $c,d \in J_1$, we have $c \prec_{k+1} c_j$ and $d \prec_{k+1} c_j$. By Lemma~\ref{lem:orderprop}, $\var(c) \setminus X_k \subseteq \var(c_j) \setminus X_k \subseteq W$ and $\var(d) \setminus X_k \subseteq W$. 

% If $c \prec_k d$, as $d \prec_k c_{j+1}$, $c \prec_k c_{j+1}$ by transitivity, that is $c \in J_3$ which is absurd. If $d \prec_k c$, as $d \prec_k c_{j+1}$, we would have $c \prec_k c_{j+1}$ by Lemma~\ref{lem:cspsup} (and by remarking that $c \prec c_j \prec c_{j+1}$), which is absurd again. Thus $c$ and $d$ are not comparable for $\prec_k$. By Lemma~\ref{lem:cspinter}, we have $\var(c) \cap \var(d) \cap X_k = \emptyset$. Thus $\var(c) \cap \var(d) \subseteq W$. Consequently, $\var(J_1 \setminus J_3) \cap \var(J_1 \cap J_3) \subseteq W$. 

\item Let $c \in J_3 \setminus J_1$ and $d \in J_1$. Since both $c \in J_3$ and $d \in J_1$, we have $c \prec_k c_{j+1}$ and $d \prec_{k+1} c_{j} \prec_{k+1} c_{j+1}$. By Lemma~\ref{lem:orderprop}, $\var(c) \setminus X_k \subseteq W$ and $\var(d) \setminus X_k \subseteq W$.

We claim that $c$ and $d$ are incomparable with respect to $\prec_k$.

First, if $c \prec_k d$, then $c \prec_{k+1} d \prec_{k+1} c_j$ that is $c \in J_1$ which is a contradiction. Consequently, $c\nprec_k d$.

Now, if $d \prec_k c$, then $d \prec_{k+1} c$ and $d \prec_{k+1} c_j$. Thus, since $c \not \preceq_{k+1} c_j$, we have $c_j \prec_{k+1} c \prec_{k+1} c_{j+1}$ thus $c_j \prec c \prec c_{j+1}$. We have that $x_{k+1}\in \var(c_j)$, and by Lemma~\ref{lem:orderprop} b) we get $x_{k+1} \in \var(c)$. But this contradicts the definition of $c_j$ as the maximal constraint with respect to $\preceq$ that is less than $c_{j+1}$ and holds $x_{k+1}$. Hence this is a contradiction and we get $d\nprec c$.

Thus, $c$ and $d$ are indeed incomparable with respect to $\prec_k$. Since $\prec$ is a total order we have either $d\prec c$ or $c\prec d$ and thus by Lemma~\ref{lem:cspinter} we have $\var(c) \cap \var(d) \cap X_k = \emptyset$. Since by Lemma~\ref{lem:orderprop} b) we have that $\var(c)\setminus X_k \subseteq \var(c_{j+1})$ and $\var(d)\setminus X_k \subseteq \var(c_{j+1})$, it follows that $\var(c) \cap \var(d) \subseteq W$. Since this is true for all combinations of $c$ and $d$, it follows that $\var(J_3 \setminus J_1) \cap \var(J_1 \cap J_3) \subseteq W$ as desired.

\item let $c \in J_1 \setminus J_3$ and $d \in J_2$. We have $c \prec_{k+1} c_j$ and $d \preceq_{k} c_{j+1}$. By Lemma~\ref{lem:orderprop}, $\var(c) \setminus X_k \subseteq \var(c_j) \setminus X_k \subseteq W$ and $\var(d) \setminus X_k \subseteq W$. 

We again show that $c$ and $d$ are incomparable with respect to $\prec_k$.

If $c \prec_k d$, we get with $d \preceq_k c_{j+1}$ and transitivity $c \prec_k c_{j+1}$. Thus $c \in J_3$ which is a contradiction. Consequently, $c\nprec_k d$.

Now assume that $d \prec_k c$. We have $c \prec c_j \prec c_{j+1}$ and $d\preceq_k c_{j+1}$ and thus with Lemma~\ref{lem:cspsup} we get $c \prec_k c_{j+1}$. But then $c\in J_3$ which is a contradiction again. 

Thus $c$ and $d$ are indeed incomparable with respect to $\prec_k$. Now the claim follows as in (iii).

\item since $J_4 \subseteq J_1$, this is implied by our proof of condition (iii) (we have not assumed $d \in J_3$ there).
\end{enumerate}
\end{proof}

We can now state the main theorem of this section. Remember that $c^{(k)}$ is the weighted constraint we get from $c$ after $k$ steps of our elimination procedure.

\begin{theorem}
\label{thm:explicit}
  For all $c \in I$ and $k \geq 0$, there exists $\alpha_k(c) \in \N \setminus \{0\}$ such that for all $a : \var(c) \setminus X_k \rightarrow D$, either
$$ \ck{c}{k}(a) = 0$$
or
$$ \ck{c}{k}(a) = \frac{1}{\alpha_k(c)} \cdot \twofrac{I_k(c)}{ I_k(c) \setminus \{c\}}{a}$$
and $\alpha_k(c) \leq |D|^k$.
\end{theorem}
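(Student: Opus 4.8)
The plan is to prove the statement by induction on $k$, following the evolution of the weights under the elimination procedure of Theorem~\ref{thm:cspalgo}. For $k=0$ the claim is immediate: $\ck{c}{0}=c$, $I_0(c)=\{c\}$, $I_0(c)\setminus\{c\}=\emptyset$, and $w(\emptyset,a)=1$ by convention, so $\alpha_0(c)=1$ works. For the inductive step from $k$ to $k+1$ we distinguish the two cases in the formula of Theorem~\ref{thm:cspalgo}. If $x_{k+1}\notin\var(c)$, then $\ck{c}{k+1}=\ck{c}{k}$ as a constraint, and by Lemma~\ref{lem:cspik} we have $I_{k+1}(c)=I_k(c)$, so we may take $\alpha_{k+1}(c)=\alpha_k(c)\le|D|^k\le|D|^{k+1}$ and there is nothing to do. The interesting case is $x_{k+1}\in\var(c)$, so $c=c_i$ for some $i$ in the enumeration $I(x_{k+1})=\{c_1\prec\ldots\prec c_m\}$.

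In that case I would unwind the defining formula $f_i'(a)=\frac{\sum_{d\in D}P_i(a,d)}{\sum_{d\in D}P_{i-1}(a,d)}$. The key observation, which has to be proved by a secondary induction on $i$ (or extracted from the $i$-induction inside the proof of Theorem~\ref{thm:cspalgo}), is that for $a:\var(c_i)\setminus X_k\to D$ we have $\sum_{d\in D}P_i(a,d)=|D|\prod_{\ell=1}^i \ck{c_\ell}{k+1}(a)$ — this is exactly the identity $|D|\prod_{j=1}^i c_j'(a)=\sum_{d\in D}P_i(a,d)$ from Theorem~\ref{thm:cspalgo}, read one elimination-step later. Hence $\prod_{\ell=1}^i \ck{c_\ell}{k+1}(a)=\tfrac{1}{|D|}\sum_{d\in D}P_i(a,d)$, and by the outer induction hypothesis applied to each $c_\ell$ ($\ell\le i$) with assignment $a$ (legitimate since $\var(c_\ell)\setminus X_k\subseteq\var(c_i)\setminus X_k\subseteq W$ by Lemma~\ref{lem:orderprop}), the left-hand side equals, whenever no factor vanishes, $\bigl(\prod_{\ell=1}^i\alpha_k(c_\ell)\bigr)^{-1}\prod_{\ell=1}^i\twofrac{I_k(c_\ell)}{I_k(c_\ell)\setminus\{c_\ell\}}{a}$. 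Now Lemma~\ref{lem:cspbigprod} collapses this telescoping product to $\twofrac{I_{k+1}(c_i)}{I_{k+1}(c_i)\setminus\{c_1,\ldots,c_i\}}{a}$. Combining the two expressions for $\prod_{\ell=1}^i\ck{c_\ell}{k+1}(a)$ and dividing by the analogous identity for $i-1$ (exactly as $f_i'$ is defined as that quotient), the contributions of $c_1,\ldots,c_{i-1}$ cancel, leaving $\ck{c_i}{k+1}(a)=\tfrac{\alpha_k(c_1)\cdots\alpha_k(c_{i-1})}{\alpha_k(c_1)\cdots\alpha_k(c_i)\,|D|^{?}}\cdot\twofrac{I_{k+1}(c_i)}{I_{k+1}(c_i)\setminus\{c_i\}}{a}$; tracking the $|D|$ factors carefully (one $|D|$ from each of the two applications of the $P$-identity, so they cancel in the quotient) gives $\alpha_{k+1}(c_i)=\alpha_k(c_i)$, and $\alpha_k(c_i)\le|D|^k\le|D|^{k+1}$, except that I expect one extra factor of $|D|$ to survive somewhere — more precisely, I anticipate $\alpha_{k+1}(c_i)=|D|\cdot\alpha_k(c_i)$ or $\alpha_{k+1}(c_i)=\alpha_k(c_i)$ depending on how the normalization in $f_1'$ (where $P_0\equiv1$) behaves, and the bound $\alpha_{k+1}(c_i)\le|D|^{k+1}$ follows either way.

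The bookkeeping of the vanishing cases and of the $|D|$-powers is where I expect the real work to lie. One must check that the "or $\ck{c}{k}(a)=0$" alternative propagates correctly: if any $\ck{c_\ell}{k+1}(a)=0$ for $\ell<i$, or if $\sum_d P_{i-1}(a,d)=0$, then $f_i'(a)=0$ by definition and the zero-branch of the conclusion holds; Lemma~\ref{lem:sub0} and the non-negativity of all weights (used already throughout Section~\ref{sec:techlem}) are what make the denominator $w(I_{k+1}(c_i)\setminus\{c_i\},a)$ behave consistently with the $P_{i-1}$ sum being nonzero. The only genuinely delicate point is verifying that Lemma~\ref{lem:cspbigprod} is applicable, i.e.\ that its hypotheses — which are in turn the five conditions of Corollary~\ref{cor:cspcancel}, established there using Lemmas~\ref{lem:totalorder},~\ref{lem:orderprop},~\ref{lem:cspinter},~\ref{lem:cspsup} — match the instantiation dictated by the elimination step; but this is precisely what Lemma~\ref{lem:cspbigprod} was set up to provide, so modulo that, the proof is a careful but routine combination of the outer induction hypothesis, the $P$-identity from Theorem~\ref{thm:cspalgo}, Lemma~\ref{lem:cspik}, and Lemma~\ref{lem:cspbigprod}.
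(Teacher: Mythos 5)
Your overall strategy is the same as the paper's: induction on $k$, treating the constraints of $I(x_{k+1})$ in $\prec$-order, using the identity $|D|\prod_{j\le i}\ck{c_j}{k+1}(a)=\sum_{d\in D}P_i(a,d)$ from Theorem~\ref{thm:cspalgo}, the outer induction hypothesis, Lemma~\ref{lem:cspbigprod}, and your two candidate updates for $\alpha$ (the paper indeed takes $\alpha_{k+1}(c_1)=|D|\alpha_k(c_1)$ for the $\prec$-minimal constraint of $I(x_{k+1})$ and $\alpha_{k+1}(c_i)=\alpha_k(c_i)$ for the others, so the bound $|D|^{k+1}$ holds either way). One repairable slip first: the induction hypothesis must be applied to the factors $\ck{c_j}{k}\bigl((a\oplus_{x_{k+1}}d)|_{\var(\ck{c_j}{k})}\bigr)$ inside $P_i(a,d)$, at the assignments $a\oplus_{x_{k+1}}d$, not to $\ck{c_\ell}{k+1}(a)$ as you write --- the hypothesis only covers step $k$, and Lemma~\ref{lem:cspbigprod} likewise requires assignments defined on $\var(c_j)\setminus X_k$, i.e.\ assigning $x_{k+1}$, so your intermediate identity at the assignment $a$ is not licensed as stated, though the endpoints can be connected by expanding inside the $d$-sum and pulling out the $d$-independent denominators.

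The genuine gap is at the decisive cancellation. Proceeding correctly, the quotient defining $\ck{c_i}{k+1}(a)$ becomes, up to the factor $\alpha_k(c_i)^{-1}$,
$$\fourfrac{I_{k+1}(c_i)}{I_{k+1}(c_{i-1})\setminus\{c_1,\ldots,c_{i-1}\}}{I_{k+1}(c_{i-1})}{I_{k+1}(c_i)\setminus\{c_1,\ldots,c_i\}}{a},$$
and your claim that here ``the contributions of $c_1,\ldots,c_{i-1}$ cancel'' is not automatic: $w$ is not multiplicative over unions of subinstances, so this four-term ratio collapses to $\twofrac{I_{k+1}(c_i)}{I_{k+1}(c_i)\setminus\{c_i\}}{a}$ only through a second, separate application of Corollary~\ref{cor:cspcancel}, with the fresh instantiation $J_1=I_{k+1}(c_{i-1})\setminus\{c_1,\ldots,c_{i-1}\}$, $J_2=I_{k+1}(c_i)$, $J_3=I_{k+1}(c_i)\setminus\{c_1,\ldots,c_i\}$, $J_4=I_{k+1}(c_{i-1})$ and $W=\var(c_i)\setminus X_{k+1}$, whose five conditions must be verified from scratch. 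In particular condition (v) requires showing that any $c\in J_3\setminus J_1$ and $d\in J_4$ are incomparable with respect to $\prec_{k+1}$, which needs Lemma~\ref{lem:cspsup}, Lemma~\ref{lem:cspinter}, the totality of $\prec$, Lemma~\ref{lem:orderprop}, and the maximality of $c_{i-1}$ among the constraints of $I(x_{k+1})$ below $c_i$; this verification occupies roughly half of the paper's proof. Your assessment that the only delicate point is the applicability of Lemma~\ref{lem:cspbigprod}, ``which is precisely what Lemma~\ref{lem:cspbigprod} was set up to provide,'' misses this: that lemma only collapses the numerator and the denominator separately, and does not justify the cross-cancellation between them.
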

\begin{proof}
  The proof is by induction on $k$. Note that $\prec_0 =\emptyset$ by definition and by convention $w(\emptyset,a) = 1$. So taking $\alpha_0(c) = 1$, proves the result for $k=0$.

  Now assume that the result holds for $k \geq 0$. To lighten the notations, we will denote $x_{k+1}$ by $x$. 
  
  If $x \notin \var(c)$, then $\ck{c}{k} = \ck{c}{k+1}$. By Lemma~\ref{lem:cspik}, we also know that $I_{k+1}(c) = I_k(c)$. Thus, if by choosing $\alpha_{k+1}(c) = \alpha_k(c)$, the result follows.

So consider now $I(x)$, i.e.~the constraints that contain $x$ as a variable. Let $I(x) = \{c_1, \ldots, c_m\}$ with $c_1 \prec \ldots \prec c_m$. We will prove the result for all of the $c_i$ by induction on $i$. For $i=1$, we have by definition, for all $a : \var(c_1) \setminus X_{k+1} \rightarrow D$, either $\ck{c_1}{k+1}(a) = 0$ and there is nothing to prove, or
$$ \ck{c_1}{k+1}(a) = \frac{\sum_{d \in D} \ck{c_1}{k}(a \oplus_x d)}{|D|}.$$
By induction on $k$, we get
$$\ck{c_1}{k+1}(a) = \frac{1}{|D|\alpha_k(c_1)}\sum_{d \in D'} \twofrac{I_k(c_1)}{I_k(c_1) \setminus \{c_1\}}{a \oplus_x d}$$
where $D' = \{d \in D \mid c_1(a \oplus_x d) \neq 0\}$. As there is no constraint in $I_k(c_1) \setminus \{c_1\}$ having the variable $x$, the denominator in the sum does not depends on $d$. Moreover, $I_{k+1}(c_1) = I_k(c_1)$ by Lemma~\ref{lem:cspik}. If $d \notin D'$ then $c_1(a \oplus_x d) = 0$ and hence $w(I_k(c_1),a \oplus_x d) = 0$. Thus, if we set $\alpha_{k+1}(c_1) = |D|\alpha_k(c_1)$, we have
$$
\begin{aligned}
  \ck{c_1}{k+1}(a) & = \frac{\alpha_{k+1}(c_1)^{-1}}{w(I_{k+1}(c_1) \setminus \{c_1\},a)} \sum_{d \in D} w(I_{k+1}(c_1),a \oplus_x d) \\
  & = \frac{1}{\alpha_{k+1}(c_1)} \cdot \twofrac{I_{k+1}(c_1)}{I_{k+1}(c_1) \setminus \{c_1\}}{a}.
\end{aligned}
$$

For $i > 1$, for all $a: \var(c_{i+1}) \setminus X_{k+1} \rightarrow D$, either $\ck{c_{i+1}}{k+1}(a) = 0$ and there is nothing to prove, or by definition
$$ \ck{c_{i+1}}{k+1}(a) = \frac{\sum_{d \in D} \prod_{j \leq i+1} \ck{c_j}{k}((a \oplus_x d)|_{\var(\ck{c_j}{k})})}{\sum_{d \in D} \prod_{j \leq i} \ck{c_j}{k}((a \oplus_x d)|_{\var(\ck{c_j}{k})})}.$$
Applying the induction hypothesis and Lemma~\ref{lem:cspbigprod} on both the numerator and the denominator, by also remarking that $I_k(c_i) \setminus \{c_1, \ldots, c_i\}$ does not contain any constraint with the variable $x$
$$ \ck{c_{i+1}}{k+1}(a) = \frac{1}{\alpha_{k}(c_{i+1})} \cdot \twofrac{I_{k+1}(c_{i+1})}{I_{k+1}(c_i)}{a} \twofrac{I_{k+1}(c_i) \setminus \{c_1, \ldots, c_i\}}{I_{k+1}(c_{i+1}) \setminus \{c_1, \ldots, c_{i+1}\}}{a}. $$

We now apply Corollary~\ref{cor:cspcancel} with $W := \var(c_{i+1}) \setminus X_{k+1}$, $J_1 := I_{k+1}(c_i) \setminus \{c_1, \ldots, c_i\}$, $J_2 := I_{k+1}(c_{i+1})$, $J_3 := I_{k+1}(c_{i+1}) \setminus \{c_1, \ldots, c_{i+1}\}$ and $J_4 := I_{k+1}(c_i)$.
Note that this will yields the desired result: We have $(J_1 \setminus J_3) \cup J_2 = J_2 = I_{k+1}(c_{i+1})$ since $J_1 \subseteq J_3$ and $(J_3 \setminus J_1) \cup J_4 = I_{k+1}(c_{i+1}) \setminus \{c_{i+1}\}$, from combining Lemma~\ref{lem:cspik} and the fact that $\{c_1, \ldots, c_i\} \subseteq J_4$ and $c_{i+1} \notin J_4$.

We now check the conditions of Corollary~\ref{cor:cspcancel}.
\begin{enumerate}[label=(\roman{enumi})]
\item Since $J_1 \subseteq J_3$, we have $J_1 \cap J_2 \subseteq J_3$. Moreover, $J_3 \cap J_4 \subseteq J_1$ since $J_1 = J_4 \setminus \{c_1, \ldots, c_i\}$ and $J_3$ does not contain any of the $c_1, \ldots, c_i$.
\item This condition holds since $J_1 \setminus J_3 = \emptyset$.
\item This condition is a consequence of condition (v) since $J_1 \cap J_3 \subseteq J_4$.
\item This condition holds since $J_1 \setminus J_3 = \emptyset$.
\item Let $c \in J_3 \setminus J_1$ and $d \in J_4$. We have that $c \prec_{k+1} c_{i+1}$. Moreover, $c_i \preceq_k c_i \prec c_{i+1}$ and $x\in \var(c_i)\cap \var(c_{i+1})$ and consequently, by definition of $\prec_{k+1}$, we have $c_i \prec_{k+1} c_{i+1}$. By definition of $J_4$ we have $d \prec_{k+1} c_{i}$ thus by transitivity of $\prec_{k+1}$ we get $d\prec_{k+1} c_{i+1}$. Using Lemma~\ref{lem:orderprop} b), it follows that $\var(c) \setminus X_{k+1} \subseteq W$ and $\var(d) \setminus X_{k+1} \subseteq W$.

Note that $c \not \preceq_{k+1} c_i$, because $c \notin J_1$ and $c \notin \{c_1, \ldots, c_{i+1}\}$.

We now show that $c$ and $d$ are incomparable with respect to $\prec_{k+1}$.

By way of contradiction, assume first that $c \prec_{k+1} d$. Then as $d \prec_{k+1} c_i$, we get $c \prec_{k+1} c_i$ which is a contradiction. 

Now assume that $d \prec_{k+1} c$. With $d \prec_{k+1} c_i$ and the fact that $\prec$ is a total order we get from Lemma~\ref{lem:cspsup} that $c \prec_{k+1} c_i$ or $c_i \prec_{k+1} c$. But we know that  $c \not \prec_{k+1} c_i$, so it follows that $c_i \prec_{k+1} c \prec_{k+1} c_{i+1}$ and thus $c_i \prec c \prec c_{i+1}$. By definition of $c_i$, we have $x\in \var(c_i)$ and by Lemma~\ref{lem:orderprop} it follows that $x\in \var(c)$. But this contradicts the choice of $c_i$ as the maximal element in $I(x)$ with respect to $\prec$ that is less than $c_{i+1}$.

Consequently, $c$ and $d$ are in fact incomparable with respect to $\prec_{k+1}$. Now $(v)$ follows as in as in $(iii)$ in the proof of Lemma~\ref{lem:cspbigprod}.
\end{enumerate}

Having checked all conditions, we may apply Corollary~\ref{cor:cspcancel} which concludes the proof.
\end{proof}

Combining the results of Section~\ref{sec:cspnest} and Section~\ref{sec:analysis}, we now state the main tractability result of this paper.

\begin{theorem}
\label{thm:betaalgo}
There exists an algorithm that, given a $\beta$-acyclic instance $I$ of $\sCSPd$ on domain $D$, computes $w(I)$ in polynomial time. %\florent{ask Johann for accurate bounds}
\end{theorem}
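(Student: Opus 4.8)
The plan is to assemble the pieces developed in Sections~\ref{sec:cspnest} and~\ref{sec:analysis} into a single polynomial-time procedure. First I would invoke Theorem~\ref{thm:csparith}: on a $\beta$-acyclic instance $I$, iterating the elimination step of Theorem~\ref{thm:cspalgo} along a $\beta$-elimination order of $\calH(I)$ computes $w(I)$ using only $O(s(I)^2\|I\|)$ arithmetic operations (additions, multiplications, divisions on non-negative rationals). A $\beta$-elimination order can be computed in polynomial time, as noted after the definition of $\beta$-acyclicity. So the only thing left to check is that each arithmetic operation can itself be performed in polynomial time, i.e.\ that the rationals manipulated by the algorithm stay of polynomial bitsize.

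The key is Theorem~\ref{thm:explicit}: after $k$ elimination steps, for every constraint $c \in I$ and every assignment $a$ of $\var(c)\setminus X_k$, the weight $\ck{c}{k}(a)$ is either $0$ or equals $\frac{1}{\alpha_k(c)}\cdot\twofrac{I_k(c)}{I_k(c)\setminus\{c\}}{a}$ with $\alpha_k(c)\in\N\setminus\{0\}$ and $\alpha_k(c)\le |D|^k \le |D|^{s(I)}$. Both $w(I_k(c),a)$ and $w(I_k(c)\setminus\{c\},a)$ are partition functions of subinstances of $I$ evaluated under a partial assignment, hence sums of products of at most $s(I)$ input weights over at most $|D|^{s(I)}$ assignments; their numerators and denominators therefore have bitsize polynomial in $\|I\|$ and $|D|^{s(I)}$, which is polynomial in the input size. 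Since $\alpha_k(c)$ also has polynomial bitsize, every weight arising during the run — which, to be careful, one should observe includes the intermediate quantities $P_i(a,d)$ and the partial products appearing in the formula of Theorem~\ref{thm:cspalgo}, all of which are themselves of the same shape (ratios of partition functions of subinstances, up to a bounded denominator) — has numerator and denominator of polynomially bounded bitsize. Hence each of the $O(s(I)^2\|I\|)$ arithmetic operations runs in polynomial time, and the whole algorithm runs in polynomial time.

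Concretely I would structure the proof as: (1) state that the algorithm is ``run the elimination of Theorem~\ref{thm:cspalgo} along a $\beta$-elimination order, then multiply by $|D|^{|\var(I)|}$'' as in Theorem~\ref{thm:csparith}; (2) recall the arithmetic-operation bound $O(s(I)^2\|I\|)$ from that theorem; (3) use Theorem~\ref{thm:explicit} to bound the bitsize of every weight $\ck{c}{k}(a)$, observing that $w(J,a)$ for $J\subseteq I$ is a sum over $|D|^{|\var(J)|}$ assignments of products of at most $|J|$ input weights, so its bitsize is $\mathrm{poly}(\|I\|)$ when $|D|$ and the domain are part of the input given in unary/binary as usual — and that $\log\alpha_k(c)\le k\log|D|$ is polynomial too; (4) conclude that an operation on two such rationals costs polynomial time, so the total is polynomial. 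One subtlety worth flagging explicitly is that the running time depends polynomially on $|D|^{|\var(I)|}$ only through bitsizes of the form $|\var(I)|\log|D|$, not $|D|^{|\var(I)|}$ itself — the partition function $w(I)$ can genuinely be that large, but writing it down takes only polynomially many bits, so this is fine.

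The main obstacle is the bitsize control, and it is precisely what Theorem~\ref{thm:explicit} (together with the technical cancellation lemmas of Section~\ref{sec:techlem}) was set up to handle; given that theorem, the remaining argument is essentially bookkeeping. The one place requiring a little care beyond citing Theorem~\ref{thm:explicit} is making sure the \emph{intermediate} values computed inside a single elimination step (the $P_i$ and the sums $\sum_{d\in D}P_i(a,d)$) are also small — but each $P_i(a,d)$ is a product of at most $s(I)$ input weights, hence of polynomial bitsize, and the sums have at most $|D|$ terms, so they too are controlled; dividing two such quantities, as the formula for $f_i'$ does, then yields exactly the polynomial-bitsize rationals described by Theorem~\ref{thm:explicit}.
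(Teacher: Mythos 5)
Your proposal is correct and follows essentially the same route as the paper's own proof: compute a $\beta$-elimination order, apply Theorem~\ref{thm:csparith} for the $O(s(I)^2\|I\|)$ bound on arithmetic operations, and use Theorem~\ref{thm:explicit} to bound all intermediate weights by rationals of bitsize $O(|\var(I)|\log|D|)$ plus the input bitsize. The only points the paper makes explicit that you leave implicit are that the elimination must be performed respecting the order $\prec$ of Section~\ref{sec:analysis} (so that Theorem~\ref{thm:explicit} indeed describes the computed weights) and that recomputing the supports at each step is also polynomial; both are routine.
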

\begin{proof}
In a first step, one computes a $\beta$-elimination order for $\calH(I)$, which can be done naively in polynomial time, iteratively searching by brute force for a nest point. When it is found, we remove the nest point and iterate.

Then we can iterate the elimination procedure of Theorem~\ref{thm:cspalgo}, respecting the order $\prec$ of Section~\ref{sec:analysis} induced by the elimination order. We make $O(s(I)^2\|I\|)$ arithmetic operations to perform all the elimination steps. The other operations needed are the computation of the new supports of the constraints at each step, which can be done in polynomial time. 

Finally, Section~\ref{sec:analysis} provides a good upper bound on the size of the rationals on which we need to perform arithmetic operations. They are always of polynomial bitsize (of size $O(|\var(I)| \log |D|)$), thus each operation can be perform in polynomial time.
\end{proof}

Combining Theorem~\ref{thm:betaalgo} and Corollary~\ref{cor:sattocsp} we get the main tractability result for $\sSAT$.

\begin{corollary}
  $\sSAT$ on $\beta$-acyclic CNF-formulas can be solved in polynomial time.
\end{corollary}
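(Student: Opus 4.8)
The plan is to chain together the two main ingredients already established: the structure-preserving reduction of $\sSAT$ to $\sCSPd$, and the polynomial-time algorithm for $\beta$-acyclic $\sCSPd$.

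First I would take an arbitrary $\beta$-acyclic CNF-formula $F$ and apply Corollary~\ref{cor:sattocsp} (equivalently, the explicit construction of Lemma~\ref{lem:sattocsp}): in polynomial time one obtains a $\sCSPd$-instance $I$ on domain $\{0,1\}$ whose hypergraph satisfies $\calH(I) = \calH(F)$ and whose partition function $w(I)$ equals the number of satisfying assignments of $F$. The key point to check here is that $\beta$-acyclicity is genuinely transferred: since $\calH(I)=\calH(F)$ and $\calH(F)$ is $\beta$-acyclic by hypothesis, $I$ is a $\beta$-acyclic instance in the sense defined after Theorem~\ref{thm:chorbipbeta}. (As noted in the paper, one must be slightly careful that the incidence graph of the instance and that of its hypergraph are both chordal bipartite or both not; this is exactly the remark following Theorem~\ref{thm:chorbipbeta}, and in the construction of Lemma~\ref{lem:sattocsp} there is in any case at most one constraint per clause, so the two incidence graphs coincide up to the trivial case.)

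Second, I would feed $I$ into the algorithm of Theorem~\ref{thm:betaalgo}, which computes $w(I)$ in polynomial time for any $\beta$-acyclic $\sCSPd$-instance. Outputting this value gives the number of satisfying assignments of $F$, so the whole procedure runs in polynomial time: a polynomial-time reduction followed by a polynomial-time algorithm on an instance of size $\mathrm{poly}(|F|)$.

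Honestly, at this stage there is no remaining obstacle — the corollary is immediate once Theorem~\ref{thm:betaalgo} and Corollary~\ref{cor:sattocsp} are in hand. The substantive work that makes this corollary possible is entirely upstream: the elimination-based transformation of Theorem~\ref{thm:cspalgo}, the bound on the number of arithmetic operations in Theorem~\ref{thm:csparith}, and above all the runtime analysis of Section~\ref{sec:analysis} (Lemma~\ref{lem:cspbigprod} and Theorem~\ref{thm:explicit}), which shows via the nested cancellations encoded by Corollary~\ref{cor:cspcancel} that every intermediate weight $\ck{c}{k}(a)$ is a ratio of partition functions of subinstances and hence has bitsize $O(|\var(I)|\log|D|)$, keeping each arithmetic operation polynomial. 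Given all that, the proof of the corollary is just the composition described above.
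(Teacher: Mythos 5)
Your proof is correct and is exactly the paper's argument: reduce the $\beta$-acyclic CNF-formula to a $\beta$-acyclic $\sCSPd$-instance via Corollary~\ref{cor:sattocsp} (Lemma~\ref{lem:sattocsp}), then compute the partition function with the polynomial-time algorithm of Theorem~\ref{thm:betaalgo}. Your additional remarks on preservation of $\beta$-acyclicity and on where the real work lies match the paper's own discussion, so nothing is missing.
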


\section{Relation to the STV-framework}

In this section we compare our algorithmic result for $\sSAT$ on $\beta$-acyclic hypergraphs to the framework proposed by S{\ae}ther, Telle and Vatshelle in \cite{SaetherTV14} which we call short the STV-framework. We first show that the STV-framework gives a uniform explanation of all tractability results for $\sSAT$ in the literature, extending the results of \cite{SaetherTV14}. We see this as strong evidence that the STV-framework is indeed a good formalization of the intuitive notion of ``dynamic programming for $\sSAT$''.

Next we show that the STV-framework cannot give any subexponential time algorithms for $\beta$-acyclic $\sSAT$. To this end, we prove an exponential lower bound on the PS-width of $\beta$-acyclic CNF-formulas.

\subsection{Explaining old results by PS-width}

In this section we show that the STV-framework is indeed strong enough to explain all known results on structural $\sSAT$. Figure~\ref{FIG:hierarchy} shows the hierarchy for inclusion formed by the acyclicity notions and classes defined by bounding the width measures from the literature. Most proofs of inclusion can be found in \cite{Fagin83,Duris12,GottlobP01,PaulusmaSS13,CapelliDM14} and the references therein. The relation between disjoint branches and MIM-width and that between $\beta$-acyclicity and MIM-width are shown in this paper.

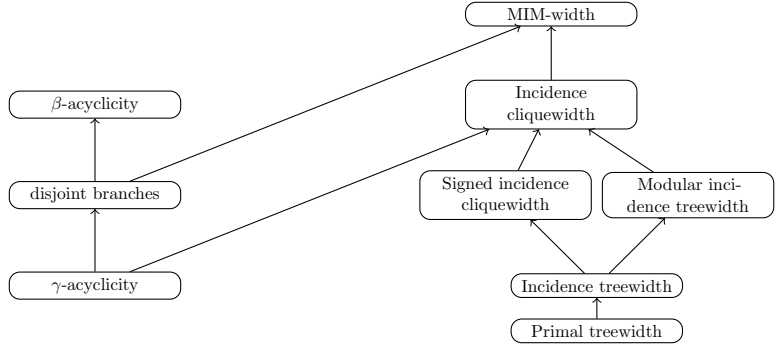
\begin{figure}[t]
\center
\begin{tikzpicture}[scale=0.6, every node/.style={transform shape}, auto]
  \definecolor{sp}{RGB}{100,180,255}
  \definecolor{p}{RGB}{255,255,180}

  \tikzstyle{FPT} = [rectangle,draw,rounded corners,text centered,text width=3.5cm];
  \tikzstyle{XP} = [rectangle,draw,rounded corners,text centered,text width=3.5cm];
  \tikzstyle{sP} = [rectangle,draw,rounded corners,text centered,text width=3.5cm];
  \tikzstyle{P} = [rectangle,draw,rounded corners,text centered,text width=3.5cm];
  \tikzstyle{uK} = [rectangle,draw,rounded corners,text centered,text width=3.5cm];

  \node[P]  (ga)  at (0,0)  {$\gamma$-acyclicity};
  \node[P]  (db) at  (0,2)  {disjoint branches};
  \node[uK] (ba) at  (0,4) {$\beta$-acyclicity};
 % \node[sP] (aa) at  (0,6) {$\alpha$-acyclicity};
 
  %\node[sP] (hw) at  (6,8) {Hypertree-width};

  \node[FPT] (sicw) at (9,2) {Signed incidence cliquewidth};
  \node[XP] (icw) at (10,4) {Incidence cliquewidth};
  \node[XP] (mimw) at (10,6) {MIM-width};
 % \node[sP] (bhw) at (10,6) {$\beta$-hypertree-width}; 
  \node[FPT] (mitw) at (13,2) {Modular incidence treewidth};
  \node[FPT] (itw) at (11,0) {Incidence treewidth};
  \node[FPT] (ptw) at (11,-1) {Primal treewidth};

  \draw[->] (ga) -- (db);
  \draw[->] (db) -- (ba);
%  \draw[->] (ba) -- (aa);
%  \draw[->] (aa) -- (hw);
  \draw[->] (ga) -- (icw);
  \draw[->] (ptw) -- (itw);
  \draw[->] (itw) -- (sicw);
  \draw[->] (itw) -- (mitw);
  \draw[->] (mitw) -- (icw);
  \draw[->] (sicw) -- (icw);
  \draw[->] (db) -- (mimw);
  \draw[->] (icw) -- (mimw);
 % \draw[->] (bhw) -- (hw);
 % \draw[->] (ba) -- (bhw);
\end{tikzpicture}
\caption{A hierarchy of inclusion of graph and hypergraph classes. Classes not connected by a directed path are incomparable. Note that we leave out PS-width because it is not a graph width measure.}~\label{FIG:hierarchy}
\end{figure}

Known complexity results for the restrictions of $\sSAT$ can be found in Table~\ref{table:results}; for definitions of the appearing complexity classes see e.g.~\cite{FlumG06}.

\begin{table}[t]
\begin{center}
\begin{tabular}{l|l|l}
class & lower bound & upper bound\\
\hline
primal treewidth &  & $\FPT$~\cite{Samer:2010wc}\\
incidence treewidth &  & $\FPT$~\cite{Samer:2010wc}\\
modular incidence treewidth & $\W{1}$-hard~\cite{PaulusmaSS13} & $\mathsf{XP}$~\cite{PaulusmaSS13}\\
signed incidence cliquewidth &  & $\FPT$~\cite{FMR-08}\\
incidence cliquewidth & $\W{1}$-hard~\cite{OrdyniakPS13} & $\mathsf{XP}$~\cite{SlivovskyS13}\\
MIM-width & & $\mathsf{XP}$~\cite{SaetherTV14}\\
$\gamma$-acyclic & & $\mathsf{FP}$~\cite{GottlobP01,SlivovskyS13}\\
disjoint branches & & $\mathsf{FP}$~\cite{CapelliDM14} \\
$\beta$-acyclic &  & $\mathsf{FP}$ (this paper) \\
\end{tabular}
\end{center}
\caption{Known complexity results for structural restrictions of $\sSAT$.}
\label{table:results}
\end{table}

In \cite{Vatshelle12} it is shown that MIM-width is bounded by cliquewidth, so nearly all tractability results of Table~\ref{table:results} follow from \cite{SaetherTV14}. To show that the missing results can also be explained in the STV-framework, we only have to recover the tractability results for formulas with disjoint branches decompositions and the fixed-parameter result for formulas of bounded signed incidence cliquewidth. We reprove these results in the following sections by giving upper bounds on the MIM-width and the PS-width, respectively.

\subsubsection{Hypergraphs with disjoint branches}

In this section we show how the tractability of $\sSAT$ on hypergraphs with a disjoint branches decomposition proved in \cite{CapelliDM14} can be explained by the STV-framework.

A \emph{join tree} $(T, \lambda)$ of a hypergraph $\calH = (V,E)$ consists of a rooted tree $T$ and a mapping $\lambda: V(T)\rightarrow E$ such that the following connectivity condition is satisfied: Let $t_1, t_2\in V(T)$ and $v\in \lambda(t_1) \cap \lambda(t_2)$, then $v\in \lambda(t)$ for every $t\in V(T)$ that lies on the path in $T$ connecting $t_1$ and $t_2$. A join tree is a \emph{disjoint branches decomposition} if whenever $t_1$ and $t_2$ lie on different branches of $T$, we have $\lambda(t_1)\cap \lambda(t_2) = \emptyset$. Hypergraphs with disjoint branches decompositions are a strict subclass of $\beta$-acyclic hypergraphs~\cite{Duris12}.

\begin{theorem}\cite{CapelliDM14}\label{thm:computeDB}
 There is an algorithm that, given a hypergraph $\calH$, in time polynomial in $\|\calH\|$ compute a disjoint branches decomposition of $\calH$ if one exists and rejects otherwise.
\end{theorem}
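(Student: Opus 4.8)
The plan is to adapt the decomposition technique of \cite{CapelliDM14} to the setting of computing disjoint branches decompositions, proceeding by a recursive/dynamic-programming search that mimics the recursive structure of join trees. First I would recall that a disjoint branches decomposition is in particular a join tree, and that by the connectivity condition the edge-sets occurring on distinct branches below a node $t$ are vertex-disjoint once we have ``fixed'' the vertices appearing in $\lambda(t)$. The key observation is that this means the problem splits: after guessing which edge sits at a node, the subhypergraph induced on the remaining vertices (those not already committed along the path to the root) breaks into connected components each of which must be handled by a separate subtree, and disjointness across branches is automatic as soon as each component is handled independently. So the algorithm will be: try all edges $e \in E(\calH)$ as a potential root label; delete the vertices of $e$ that are ``used up'' (more precisely, maintain a set $S$ of vertices already covered by ancestors and only recurse on $\calH \setminus S$ restricted to the relevant component); compute the connected components of what remains; and recurse on each component independently, accepting iff every recursive call accepts.

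The main steps, in order, are: (1) formalize the recursive subproblem as a pair $(\calH', S)$ where $\calH'$ is an induced subhypergraph and $S$ is the set of vertices required to appear in the root label, and prove a ``decomposition lemma'' stating that $(\calH', S)$ admits a valid partial disjoint branches decomposition iff there is an edge $e \supseteq S$ of $\calH$ (restricted appropriately) such that every connected component of $\calH' \setminus e$ admits one with its boundary set; (2) bound the number of distinct subproblems that can arise — this is where one must be careful, since a naive recursion could blow up, so I would argue that the relevant subproblems $(\calH', S)$ correspond to ``induced subhypergraph determined by an edge plus a connected component'', of which there are only polynomially many, and memoize; (3) bound the work per subproblem (iterating over candidate edges, computing components) by a polynomial in $\|\calH\|$; and (4) assemble these into a polynomial-time algorithm, and note that correctness of rejection follows from the decomposition lemma by contrapositive.

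The hard part will be step (2): controlling the number of subproblems and showing they are all of the form ``look at $\calH$ restricted to a vertex set obtained by removing one edge and keeping one connected component, possibly iterated''. One must show that the set $S$ (the forced vertices) is always exactly the intersection of the current component with the previously-removed edge, hence determined by the pair (previous edge, current component), so that the table has polynomial size; this requires invoking the connectivity condition of join trees carefully to see that no vertex outside this intersection can reappear. A secondary subtlety is handling subhypergraph vs.\ induced subhypergraph conventions consistently with the definitions in Section~2.2, and making sure the recursion's base case (empty vertex set, or a component covered by a single edge) is correctly identified. Once the subproblem structure is pinned down, the rest — correctness via the decomposition lemma and the polynomial running-time bound — is routine bookkeeping. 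I would also remark that essentially the same recursion, with a running decomposition tree constructed alongside, produces the decomposition itself and not merely a yes/no answer, as the statement requires.
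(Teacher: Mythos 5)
First, a point of comparison: the paper does not prove this statement at all --- it is imported wholesale from \cite{CapelliDM14} --- so there is no in-paper argument to measure your proposal against; it has to stand on its own. It does not, because the central claim of your first paragraph is false: it is \emph{not} true that ``disjointness across branches is automatic'' once you delete the vertices already covered by the chosen root and its ancestors and split into connected components of $\calH\setminus S$. The disjoint-branches condition constrains the \emph{full} labels of nodes on different branches, including vertices that also occur in common ancestors, and deleting the covered vertices before computing components and boundary sets discards exactly those constraints. Concretely, let $V=\{v_1,v_2,v_3,w\}$ and $E=\{h,p_1,p_2,p_3\}$ with $h=\{v_1,v_2,v_3\}$ and $p_i=\{v_i,w\}$. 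Every two edges intersect, so in any disjoint branches decomposition all four nodes are pairwise comparable and hence lie on a single root-to-leaf path; but then for some $i$ the node of $p_i$ is separated from the node of $h$ by the node of another petal $p_j$ with $v_i\notin p_j$, violating the join-tree connectivity condition for $v_i$. Hence this hypergraph admits no disjoint branches decomposition and must be rejected. Your recursion, however, tries the root $e=h$, deletes $v_1,v_2,v_3$, finds a \emph{single} component living on the vertex $w$ with empty boundary set, notes that the three restricted petals trivially admit a decomposition (a chain), and accepts. Notice that the defect is precisely the vertex deletion: if components and boundary sets are computed on the remaining edges with \emph{all} their vertices retained (so that two edges sharing any vertex, even one inside the chosen root, are forced onto the same branch, and the forced set of a component is its full intersection with the parent label), the same example is correctly rejected for every choice of root.

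Repairing this is not mere bookkeeping, and it undermines your step (2) as well. With full-vertex components, the ``only if'' direction of your decomposition lemma (extracting an independent decomposition for each component from a decomposition in which one subtree mixes several components) needs a genuine exchange/normalization argument that the proposal does not supply. More importantly, because the forced vertices are retained, a subproblem arising after several elimination steps is in general \emph{not} of the form ``connected component of $E\setminus\{e\}$ for a single edge $e$'': the retained forced vertices can re-glue a later component to previously removed edges, so the memoization key you propose does not capture the recursion, and the polynomial bound on the number of subproblems --- which you yourself identify as the hard part --- is exactly what is missing. Controlling this is the substance of the algorithm of \cite{CapelliDM14}; as it stands, your proposal both accepts NO-instances and leaves the running-time bound unproven.
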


 \begin{lemma}\label{lem:DB}
  Given a hypergraph $\calH$ and a disjoint branches decomposition of $\calH$, we can in polynomial time compute a branch decomposition of $I(G)$ of MIM-width at most~$2$.
 \end{lemma}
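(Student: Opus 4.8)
The plan is to build a branch decomposition of the incidence graph $I(\calH)$ directly from the disjoint branches decomposition $(T,\lambda)$, and to argue that every cut induced by this decomposition has a maximum induced matching of size at most $2$. First I would recall that $(T,\lambda)$ is a rooted tree whose nodes carry edges of $\calH$, so each vertex of $I(\calH)$ is either an edge $e\in E(\calH)$, which I associate to the (say, topmost) node $t$ of $T$ with $\lambda(t)=e$, or a vertex $v\in V(\calH)$, which by the connectivity condition lives on a connected subtree of $T$ and which I would attach as an extra leaf hanging off some node in that subtree (e.g.\ the one closest to the root among the $\lambda$-preimages containing $v$). Subdividing and rearranging $T$ so that all these objects sit at leaves, while keeping the tree subcubic, is routine and produces a branch decomposition $(T',\delta)$ of $I(\calH)$ whose leaf set is $V(\calH)\cup E(\calH)$.

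Next I would analyze an arbitrary cut $(A,\bar A)$ of $I(\calH)$ arising from an edge of $T'$. Because $T'$ is obtained from $T$ by local surgery, cutting an edge of $T'$ corresponds (up to the hanging leaves for vertices and edges) to cutting an edge of $T$, which splits $T$ into the subtree $T_x$ rooted at a node $x$ and its complement. The key structural fact to exploit is the disjoint branches property: two nodes on different branches have disjoint $\lambda$-images. I would use this to show that the set of hyperedges ``straddling'' the cut — i.e.\ those $e$ whose associated leaf is on one side but which contain a variable whose leaf is on the other side — is essentially governed by a single path in $T$, namely the path from $x$ up toward the root (or down from $x$), so that at most a bounded number of hyperedges are incident in the bipartite ``cut graph'' $I(\calH)[A,\bar A]$ to variables on the far side.

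Then I would pin down the induced matching bound. An induced matching in $I(\calH)[A,\bar A]$ is a set of vertex-variable pairs $\{v_i,e_i\}$ with $v_i\in e_i$, $v_i$ and $e_i$ on opposite sides of the cut, and no other edges among the $v_i$'s and $e_j$'s. Using the connectivity of each variable's occurrence set in $T$ together with the disjoint branches condition, I would argue that all the hyperedges $e_i$ crossing the cut must lie along one root-to-$x$ path, and moreover that being ``above'' or ``below'' along that path forces containment relations (a $\beta$-acyclic, indeed disjoint-branches, phenomenon: the relevant hyperedges along a path are comparable on the crossing variables). If three crossing pairs existed in an induced matching, two of the $e_i$ would be nested in a way that $v_j \in e_i$ for $i\ne j$, destroying inducedness; hence the induced matching has size at most $2$. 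The polynomial running time is immediate since $T'$ is obtained from $(T,\lambda)$ by linear-time surgery.

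The main obstacle I anticipate is the bookkeeping in the second and third steps: carefully tracking, for a cut coming from an edge of $T'$, exactly which edge- and vertex-leaves land on which side once the hanging leaves are taken into account, and then squeezing the disjoint branches condition to force the crossing hyperedges to be comparable on the crossing variables. Getting the constant right — showing it is $2$ and not, say, $3$ — will require pinning down precisely how a variable-leaf's attachment point interacts with the two edges of $T$ at the ends of the relevant path; I would handle this by a short case distinction on whether the crossing variable's subtree in $T$ is entirely inside $T_x$, entirely outside, or straddling, the last case being the only one that contributes to the cut graph.
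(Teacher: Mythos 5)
Your overall strategy is the paper's, but the one concrete choice you commit to -- hanging each vertex $v$ as a leaf at the node of the join tree \emph{closest to the root} among those whose label contains $v$ -- is exactly the wrong choice, and it breaks the bound. The paper attaches $v$ at the node \emph{farthest} from the root (well defined because, by the disjoint branches property, the nodes whose labels contain $v$ form a chain), and this is what guarantees that a vertex-leaf lying on the upper side of a cut can never belong to a hyperedge-leaf lying on the lower side. With your rule this fails badly. Take $E(\calH)=\{r,e_1,\dots,e_k\}$ with $r=\{u_1,\dots,u_k\}$ and $e_i=\{u_i,w_i\}$, and the disjoint branches decomposition with root $x$, $\lambda(x)=r$, and children $t_1,\dots,t_k$ with $\lambda(t_i)=e_i$. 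Your rule hangs every $u_i$ at the root $x$, while $e_i$ and $w_i$ hang at $t_i$. In the completion you describe (the ``routine'' subcubic surgery, e.g.\ a binary tree over the children subtrees of $x$), there is a cut whose lower side consists exactly of the leaves of the subtrees $t_i$ for $i\in S$ with $|S|=\Omega(k)$; the crossing edges of $I(\calH)$ at this cut are precisely $\{u_ie_i \mid i\in S\}$, and since the $e_i$ are pairwise disjoint (they sit on different branches) this is an \emph{induced} matching of size $|S|$. So the decomposition your construction produces has MIM-width $\Omega(k)$, not at most $2$.

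This counterexample also falsifies the structural claim your analysis rests on, namely that all hyperedges crossing a cut lie on a single root-to-$x$ path and are pairwise comparable on the crossing variables: with the shallow attachment, a single vertex above the cut can be contained in hyperedges scattered over many pairwise incomparable (and pairwise disjoint) branches below it, and nothing forces comparability. With the deep attachment the situation reverses and your intended argument goes through: a vertex-leaf below the cut can only be contained in crossing hyperedges whose nodes are ancestors of the cut, hence all lie on one root path, and the connectivity condition then makes any two such hyperedges comparable on the crossing vertices, so at most one vertex on the lower side can be matched. One then checks two cases, as the paper does: for a cut inside the gadget collecting the leaves attached at a single node $x$, the maximum induced matching is at most $2$ (one pair may use the hyperedge $\lambda(x)$ itself, and at most one pair matches a vertex attached at $x$ to an ancestor hyperedge, all of which contain the parent edge's vertices); for every other cut it is at most $1$, because with the deep attachment no vertex on the upper side lies in a hyperedge on the lower side. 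So the fix is to flip your attachment rule and redo the case distinction along these lines; the rest of your outline, including the polynomial-time claim, is fine.
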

\begin{proof}
 Let $(\calT, \lambda)$ be a disjoint branches decomposition of $\calH=(V,E)$. We construct a branch decomposition $(T,\delta)$ of $\calH$ as follows: The vertices of $\calT$ form the internal vertices of $T$. For every $v\in V$ we introduce a new leaf $u$ labeled by $\delta(u)=v$ connecting it to the vertex of $\calT$ that corresponds to the edge containing $v$ that is farthest from the root of $\calT$. Observe that this choice is unique because $\calT$ has disjoint branches and thus vertices $v\in V$ only appear along a path from the root to a leaf. Furthermore, we add a new leaf $u$ for each $e\in E$ labeled by $\delta(u)=e$, connecting it to the vertex $x$ of $\calT$ with $\lambda(x)=e$. 
 
 We now make $T$ subcubic: For any internal vertex $x$, we introduce a binary tree $T_x$ having as leaves the leaf children of $x$ and connect it to $x$. After that, for every vertex $x$ having more than two children, we introduce again a binary tree $T_x'$ having the children of $x$ as its leaves and connect it to $x$. The result is a branch decomposition $(T,\delta)$ of the incidence graph of $\calH$.
 
 We claim that $(T,\delta)$ has MIM-width at most $2$. So let $v$ be a cut vertex with cut $(X,\bar{X})$. First assume that $v$ lies in one of the $T_x$.
%  or is equal to a $x$. 
 Let $e=\lambda(x)$ be the single $e\in E$ that appears as label of a leaf of $T_x$. Observe that all $u\in V\cap X$ lie in $e$. Also, all $u\in V\cap X$ that lie in an edge different from $e$ must lie in a common edge $e'\in E$ that corresponds to the parent of $e$ in $\calT$. 
 Since $e'\notin X$ only one vertex in $X\cap V$ can contribute to an independent matching in $I(\calH)[X, \bar{X}]$. Furthermore, $e$ is the only edge in $E\cap X$, and it follows that the MIM-width of the cut $(X,\bar{X})$ is at most $2$.
 
 If $v$ does not lie in any $T_x$---that is $v$ lies in a $T'_y$ or is a vertex $y\in V(\calT)$---then the cut $(X,\bar{X})$ corresponds to cutting subtrees $\calT_1, \ldots, \calT_s$ from a vertex $x$ in $\calT$. Every vertex $u\in X\cap V$ lies in an edge $e\in X\cap E$ which is the label $\lambda(x')$ for some vertex $x'$ in a $\calT_i$. Now if $u$ is also in an edge $e'\in \bar{X}\cap E$, then $u\in \lambda(x) \in \bar{X}\cap E$. Consequently, only one vertex $u\in X\cap V$ can be an end vertex of an induced matching in $I(\calH)[X, \bar{X}]$. Furthermore, no vertex $u$ in $\bar{X}\cap V$ is in an edge $e\in X\cap E$, because we connected $u$ to the vertex $y$ farthest from the root in the construction of $T$ and thus cutting outside $T_x$ we cannot be in a situation where $u\notin X$. Consequently, the MIM-width of the cut $(X,\bar{X})$ is at most $1$.
\end{proof}

\begin{corollary}[\cite{CapelliDM14}]
 $\sSAT$ on hypergraphs with disjoint branches decompositions can be solved in polynomial time.
\end{corollary}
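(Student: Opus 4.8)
The plan is to chain together Theorem~\ref{thm:computeDB}, Lemma~\ref{lem:DB}, Theorem~\ref{thm:MIMvsPS} and Theorem~\ref{thm:dynamic}. Given a CNF-formula $F$ such that $\calH(F)$ admits a disjoint branches decomposition, I would first run the algorithm of Theorem~\ref{thm:computeDB} on $\calH(F)$ to produce such a decomposition in time polynomial in $\|\calH(F)\| \le |F|$, then feed it to Lemma~\ref{lem:DB} to obtain, again in polynomial time, a branch decomposition of the incidence graph of $\calH(F)$ of MIM-width at most~$2$. Theorem~\ref{thm:MIMvsPS} will then bound the PS-width, and Theorem~\ref{thm:dynamic} will do the counting.

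The one point that is not completely immediate is that the incidence graph of $\calH(F)$ handled by Lemma~\ref{lem:DB} may differ from $I(F)$, since $F$ can contain several clauses on the same variable set, whereas Theorems~\ref{thm:MIMvsPS} and~\ref{thm:dynamic} are stated for $I(F)$. I would bridge this exactly as in the remark after Theorem~\ref{thm:chorbipbeta}: in the branch decomposition of the incidence graph of $\calH(F)$, replace the leaf labelled by each edge $e$ of $\calH(F)$ by a subcubic binary tree whose leaves are relabelled by the clauses $C\in\cla(F)$ with $\var(C)=e$. Any cut internal to such a gadget splits off a set of clauses that all have the same neighbourhood in $I(F)$, so at most one of them can be an endpoint of an induced matching and the MIM-value of that cut is at most~$1$; a cut that previously had the leaf $e$ on one side now has the whole identically-adjacent bundle of clauses there, so its MIM-value does not change. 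Hence the resulting branch decomposition of $I(F)$ still has MIM-width at most~$2$. This bookkeeping, rather than anything deep, is where I expect the only (mild) friction to be; the substance is already in Lemma~\ref{lem:DB}.

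Finally, Theorem~\ref{thm:MIMvsPS} gives that this branch decomposition of $I(F)$ has PS-width at most $m^2$, where $m=|\cla(F)|$, and then Theorem~\ref{thm:dynamic} counts the satisfying assignments of $F$ in time $O((m^2)^3 s(m+n))$, which is polynomial in $|F|$. Collecting the above, every step runs in polynomial time, which proves the corollary.
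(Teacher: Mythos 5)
Your proof is correct and follows exactly the paper's own route: compute the disjoint branches decomposition via Theorem~\ref{thm:computeDB}, apply Lemma~\ref{lem:DB} to get MIM-width at most~$2$, then combine Theorem~\ref{thm:MIMvsPS} and Theorem~\ref{thm:dynamic}. The extra bookkeeping you do to pass from the incidence graph of $\calH(F)$ to $I(F)$ (duplicated clause variable sets) is a detail the paper silently glosses over, and your handling of it is sound.
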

\begin{proof}
 Given a CNF-Formula $F$, compute a disjoint branches decomposition with Theorem~\ref{thm:computeDB}. Then apply the construction of Lemma~\ref{lem:DB} to get a branch decomposition of MIM-width at most~$2$. Now combining Theorem \ref{thm:MIMvsPS} and Theorem \ref{thm:dynamic} yields the results.
\end{proof}

\subsubsection{Signed incidence cliquewidth}
 
In this section we use the STV-framework to reprove a result from~\cite{FMR-08} stating that $\sSAT$ is fixed-parameter tractable parameterized by signed cliquewidth. We first state the relevant definitions from~\cite{FMR-08}.

The \emph{signed incidence graph} $SI(F)$ of a CNF-formula is the incidence graph of $F$ where each edge $xC$ is signed positively or negatively depending on if the variable $x$ appears positively or negatively in the clause $C$. 
The set of CNF-formulas of signed cliquewidth at most $k$ is defined as the set of formulas whose signed incidence graph can be obtained by the following operations over graphs whose vertices are coloured by $\{1, \ldots , k\}$, starting from singleton graphs.
\begin{enumerate}
 \item Disjoint union.
 \item Recolouring: For a vertex-coloured signed bipartite graph $G$, we defined $\rho_{i,j}(G)$ to be the graph that results from recolouring with $j$ all vertices that were previously coloured with $i$.
 \item Positive edge creation: For a vertex-coloured signed bipartite graph $G$, we define $\eta_{i,j}^+(G)$ to be the graph that results from connecting all clause-vertices coloured $i$ to all variable-vertices coloured $j$, with edges signed positively. We do not add edges between variable vertices coloured $i$ and clause-vertices coloured $j$, or any other vertices.
 \item Negative edge creation: Similarly to above, we define $\eta_{i,j}^-(G)$ to be the graph resulting from connecting all clause-vertices coloured with $i$ to all variable-vertices coloured with $j$, with edges signed negatively.
 \end{enumerate}
The \emph{signed cliquewidth} of a CNF-formula is the minimum $k$ such that it has signed cliquewidth at most~$k$.

A \emph{parse tree} for the signed cliquewidth of a formula $F$ is the rooted tree whose leaves hold singleton graphs, whose internal vertices are coloured with the operations of the definitions above (so a vertex corresponding to a disjoint union has two children, and vertices corresponding to other operations have one child), and whose root holds the graph $SI(F)$ (with any vertex colouring).

Given a signed parse tree of a formula $F$, we construct iteratively a branch decomposition. We assume w.l.o.g.~that whenever we make a union, the graphs whose union we take have only disjoint colors in their vertex coloring. This can be easily achieved by at most doubling the number of colors used. Furthermore, we assume that in the end all vertices have the same color.
 
We construct the branch decomposition along the parse tree iteratively. To this end, we assign a tree $T_\tau$ to each sub-parse tree $\tau$. To a singleton $v$ representing a variable of $F$, we assign a singleton vertex labeled with $v$. For $\tau=\eta_{i,j}^+(\tau')$ and $\tau=\eta_{i,j}^-(\tau')$ we set $T_\tau:= T_{\tau'}$. For $\tau=\rho_{i,j}(\tau')$ we again let $T_\tau := T_{\tau'}$. Finally, for $\tau=\tau_1\cup \tau_2$ we introduce a new root and connect it to $T_{\tau_1}$ and $T_{\tau_2}$. Observe that $T_\tau$ is essentially the tree we get from $\tau$ by forgetting internal labels and contracting all paths to edges. Observe that the result $(T,\delta)$ is obviously a branch decomposition.
 
%  For $\tau=\tau_1\cup \tau_2$ we set $F_\tau := F_{\tau_1} \cup F_{\tau_2}$. Finally, when $\tau=\rho_{i,j}(\tau')$ we introduce a new internal vertex which gets as children the roots of the trees in $F_{\tau'}$ containing the vertices of color $i$ and color $j$. It is easy to see that each tree in $F_{\tau'}$ contains only vertices of one color in the graph defined by $\tau'$, so this last operation joins the two trees in $F_{\tau'}$ that contain all of the vertices of color $i$ and color $j$ in $\tau'$, respectively. Since in the end all vertices will have the same color, the construction obviously gives a branch decomposition $(T, \delta)$.
 
 \begin{lemma}
  $(T, \delta)$ has PS-width at most $2^{2k}$.
 \end{lemma}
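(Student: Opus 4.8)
The plan is to bound, for every cut of the branch decomposition $(T,\delta)$, the PS-value of the two formulas $F_{X,\bar\calC}$ and $F_{\bar X,\calC}$ induced by the cut, and to show each is at most $2^{2k}$. The key observation is that a cut of $(T,\delta)$ corresponds to an edge of the original parse tree, hence to a sub-parse-tree $\tau'$ together with the rest of the tree, so on one side of the cut all vertices carry a coloring with at most $k$ colours (after the doubling trick), and likewise on the other side. I would first make precise which vertices of $SI(F)$ end up on each side: because the $\eta$- and $\rho$-operations leave $T_\tau$ unchanged and only $\cup$ creates new internal vertices, removing a cut vertex of $T$ splits the variable-leaves exactly according to the disjoint-union structure of the parse tree at the corresponding edge.

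Next I would analyse the adjacency across the cut. Consider the cut $(X,\bar X)$ and, say, the formula $F_{X,\bar\calC}$ obtained by keeping the clauses in $\bar\calC$ and the variables in $X$. Two variables $x,x'\in X$ that have the same colour \emph{on their side} at the moment of the cut behave identically with respect to every later edge-creation operation, hence have exactly the same neighbourhood (with signs) among the clauses of $\bar\calC$; similarly two clauses in $\bar\calC$ with the same colour on their side have the same set of $X$-variables attached to them with the same signs. So the ``signed neighbourhood type'' of a variable towards $\bar\calC$ is determined by its colour, of which there are at most $k$ choices — wait, one must be careful: a variable is connected to a $\bar\calC$-clause of colour $j$ by a \emph{signed} edge, and whether that edge is positive, negative, or absent is again fixed by the pair of colours via the sequence of $\eta^+_{\cdot,\cdot},\eta^-_{\cdot,\cdot}$ operations applied after the union. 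Thus the restriction of a satisfying assignment of $F$ to $X$ partitions the clauses of $\bar\calC$ into ``satisfied'' and ``not yet satisfied'' according only to, for each of the at most $k$ variable-colours, whether that colour is set to $0$ or to $1$; there are $2^k$ such patterns, and each determines, for each of the at most $k$ clause-colours, whether a clause of that colour is satisfied — so at most $2^k \le 2^{2k}$ precisely satisfiable subsets arise, and in fact the crude bound $2^{2k}$ (or $2^{k}\cdot 2^{k}$ if one tracks colours on both sides) comfortably covers it. The same argument applied symmetrically bounds the PS-value of $F_{\bar X,\calC}$.

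Concretely, the steps in order would be: (1) fix a cut vertex of $T$, identify the corresponding edge of the parse tree and hence the bipartition of variables and clauses, recording that each side uses at most $k$ colours; (2) prove the ``colour determines signed cross-neighbourhood'' lemma by induction along the operations performed after the relevant union — the only ones affecting cross edges are $\eta^{\pm}_{i,j}$, and these act uniformly on colour classes; (3) deduce that an assignment to the variables on one side satisfies a set of cross-clauses that depends only on the $\le k$-bit ``colour valuation'', giving $\le 2^k$ possible sets, hence PS-value $\le 2^k$; (4) observe the bound $2^{2k}$ is then immediate (and leaves room for the slightly weaker tracking that also counts clause-colours), and take the maximum over the two formulas and over all cuts to conclude $\psw$ of $(T,\delta)$ is at most $2^{2k}$.

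The main obstacle I anticipate is step (2): being careful about \emph{signs} and about the fact that several edge-creation and recolouring operations are interleaved, so that ``same colour now'' must be translated into ``same behaviour under all future operations''. One has to track that once two vertices share a colour they share it forever (recolourings are colour-class-uniform) and that edge creation only ever looks at current colours; the doubling-of-colours assumption guarantees that at the moment of the union the two sides have disjoint colour sets, so no cross edge is created ``by accident'' before the union — all cross edges between the two sides are created by $\eta^{\pm}$ operations strictly above the union in the parse tree, and these are exactly the ones governed by the at-most-$k$ colours on each side. Once this bookkeeping is set up cleanly the counting is routine.
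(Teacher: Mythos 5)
Your overall route is the same as the paper's: for each cut you pass to the sub-parse tree $\tau$ rooted at the corresponding union, observe that every edge crossing the cut is created by an $\eta^{\pm}$ operation strictly above that union and therefore treats a colour class of $\tau$ uniformly, and then count precisely satisfiable sets via colours. The gap is in the counting step for $F_{X,\bar\calC}$. You assert that the set of satisfied clauses of $\bar\calC$ depends only on, ``for each of the at most $k$ variable-colours, whether that colour is set to $0$ or to $1$'', giving $2^k$ patterns. But an assignment gives values to individual variables, not to colour classes: a class $X_i$ may be assigned in a mixed way, and then it simultaneously satisfies every clause of $\bar\calC$ in which it occurs positively \emph{and} every clause in which it occurs negatively, a situation captured by neither of your two per-class ``patterns''. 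The relevant per-class information is the pair of bits (is some variable of $X_i$ set to $1$, is some variable of $X_i$ set to $0$), i.e.\ up to three distinguishable states per class; this is exactly why the paper keeps \emph{two} representative variables per class, reducing to a formula on $2k$ variables, and obtains the bound $2^{2k}$ rather than $2^k$.

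Your fallback ``$2^k\cdot 2^k$ if one tracks colours on both sides'' does not repair this, because it relies on the claim that two clauses of $\bar\calC$ with the same colour have the same signed set of $X$-variables. Unlike $X$ and $\calC$, the side $\bar\calC$ is not the vertex set of a single sub-parse tree: its clauses are introduced at different nodes outside $\tau$, and an $\eta^{\pm}$ operation above the union can attach $X$-variables to a clause of $\bar\calC$ that is already present at that moment while another clause, introduced higher up and only later carrying the ``same'' colour, is not yet in the graph; so same colour does not determine the cross-neighbourhood on that side. Colour-uniformity of cross-neighbourhoods is only available for the vertices below the cut --- the paper uses it for the variables of $X$ when bounding the PS-value of $F_{X,\bar\calC}$ and for the clauses of $\calC$ (which give at most $k$ distinct clauses, hence at most $2^k$ sets) when bounding the PS-value of $F_{\bar X,\calC}$. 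Replacing your one-bit-per-colour bookkeeping by the two-bits-per-colour (two representatives per class) argument closes the gap and yields the stated bound $2^{2k}$.
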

\begin{proof}
Let $v$ be a cut vertex with the cut $(A,\bar{A})$. Let $X:= A\cap \var(F)$, $\bar{X}:= \bar{A}\cap \var(F)$, $C:=A\cap \cla(F)$ and $\bar{C} := \bar{A}\cap \cla(F)$. Let $\tau$ be the sub-parse tree which is rooted by the union that led to the introduction of $v$.

We first show that $|PS(F_{X,\bar{C}})|\le 2^{2k}$. Observe that when two variables $x, x'\in X$ have the same color in $\tau$, then they must always appear together in every clause in $\bar{C}$ and their sign must be the same. Call $X_i$ the set of variables in $X$ that are colored by $i$. Then for every assignment of $F_{X,\bar{C}}$ the set of satisfied clauses depends only on if there is a variable in $X_i$ that is set to true if $X_i$ appears positively or if there is a variable in $X_i$ set to false if $X_i$ appears negatively. So to get the same precise satisfiability set, we can delete all but two variables from $X_i$ from $F_{X, \bar{C}}$. It follows that $F_{X,\bar{C}}$ has the same precise satisfiability set as a formula with $2k$ variables. But there are only $2^{2k}$ assignments to $2k$ variables, so it follows that $|PS(F_{X,\bar{C}})|\le 2^{2k}$.

We now show that $|PS(F_{\bar{X},C})|\le 2^{2k}$. To this end observe that if two clauses $C,C'$ in $\tau$ have the same color $i$, then they will contain the same variables in $\bar{X}$ and moreover $C|_{\bar{X}} = C'|_{\bar{X}}$. Thus $F_{\bar{X},C}$ only has $k$ different clauses, so trivially $|PS(F_{\bar{X},C})|\le 2^{2k}$.
\end{proof}

\begin{corollary}[\cite{FMR-08}]\label{cor:FMR}
 $\sSAT$ on formulas of signed incidence cliquewidth $k$ can be solved in time $2^{O(k)} |F|^2$ assuming that we are provided a parse tree of width $k$.
\end{corollary}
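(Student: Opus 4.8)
The plan is to combine the PS-width bound established for the branch decomposition $(T,\delta)$ constructed from the signed parse tree with the general dynamic-programming algorithm of S{\ae}ther, Telle and Vatshelle. First I would observe that the preceding lemma gives us, from the parse tree of width $k$, a branch decomposition $(T,\delta)$ of the incidence graph $I(F)$ whose PS-width is at most $2^{2k}$. This is the only nontrivial structural ingredient; everything else is bookkeeping about how the branch decomposition is produced.

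Next I would invoke Theorem~\ref{thm:dynamic}: given a CNF-formula $F$ with $n$ variables, $m$ clauses, size $s$, and a branch decomposition of $I(F)$ of PS-width $\kappa$, one counts the satisfying assignments of $F$ in time $O(\kappa^3 s(m+n))$. Plugging in $\kappa \le 2^{2k}$ yields a running time of $O(2^{6k} s(m+n))$, which is $2^{O(k)} \cdot \mathrm{poly}(|F|)$; with the crude estimate $s, m, n \le |F|$ this is at most $2^{O(k)}|F|^2$, matching the claimed bound. I would also note that the construction of $(T,\delta)$ from the parse tree, as described just before the lemma, is carried out by a straightforward recursion on the parse tree and runs in polynomial time, so it contributes nothing to the exponential factor; the doubling of colors used to ensure disjoint color classes at each union only changes $k$ to $2k$, still $2^{O(k)}$.

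There is essentially no hard step here: the corollary is a direct consequence of the lemma and Theorem~\ref{thm:dynamic}, exactly as the disjoint-branches corollary was a consequence of Lemma~\ref{lem:DB} and Theorems~\ref{thm:MIMvsPS} and~\ref{thm:dynamic}. The only points that need a sentence of care are (i) confirming that the branch decomposition can be built in polynomial time from the parse tree so that the total time is dominated by the dynamic programming, and (ii) tracking the constant in the exponent through the color-doubling convention and the cube in Theorem~\ref{thm:dynamic}, so that one legitimately writes $2^{O(k)}$ rather than a specific base. I would therefore write the proof as: apply the lemma to obtain a branch decomposition of PS-width $\le 2^{2k}$ in polynomial time, then apply Theorem~\ref{thm:dynamic} to this decomposition, and simplify the resulting time bound to $2^{O(k)}|F|^2$.
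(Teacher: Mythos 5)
Your proposal is correct and matches the paper's intended argument exactly: the corollary is obtained by taking the branch decomposition built from the parse tree, bounding its PS-width by $2^{2k}$ via the preceding lemma, and then invoking Theorem~\ref{thm:dynamic}, which gives $O((2^{2k})^3 s(m+n)) = 2^{O(k)}|F|^2$. Your remarks about the polynomial-time construction of $(T,\delta)$ and the color-doubling only affecting the constant in the exponent are exactly the bookkeeping the paper implicitly relies on.
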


Note that the runtime bound in~\cite{FMR-08} cannot be easily compared, because the runtime in \cite{FMR-08} depends on the size of the parse tree directly and not on the formula. But both results are fixed-parameter results that singly exponentially depend on $k$, so they are at least very close.

\subsection{Lower bounds on MIM-width and PS-width}

In this section we will prove the promised lower bound on the PS-width of $\beta$-acyclic CNF-formulas.
We start off with a simple Lemma that can be seen as a partial reverse of Lemma~\ref{thm:MIMvsPS}. We remind the reader that a CNF-formula $F$ is called \emph{monotone} if all variables appear only positively in $F$.

\begin{lemma}\label{lem:logwidth}
 For every bipartite graph $G$ there is a monotone CNF-formula $F$ such that $F$ has the incidence graph $G$ and $\psw(F)\ge 2^{\mimw(G)/2}$.
\end{lemma}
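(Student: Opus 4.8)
The plan is to exploit the one direction of Theorem~\ref{thm:MIMvsPS} together with the fact that MIM-width and PS-width behave very differently for monotone formulas: for a monotone formula the ``precisely satisfiable'' subsets of a side of a cut are exactly the subsets of clauses that can be hit simultaneously, and an induced matching in the bipartite cut graph gives us many clauses that can be hit independently. First I would fix a bipartite graph $G=(U\cup W,E)$ and simply read $G$ as a monotone formula: let the elements of $W$ be variables, the elements of $U$ be clauses, and put variable $w$ into clause $u$ (positively) iff $uw\in E$. Then $I(F)=G$ by construction, and $F$ is monotone.

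Next I would take a branch decomposition $(T,\delta)$ of $G$ witnessing the MIM-width, and a cut vertex $v$ realizing the maximum, giving a cut $(A,\bar A)$ with an induced matching $M$ of size $\mimw(G)$ in $G[A,\bar A]$. Roughly half of the matching edges have their clause-endpoint on one fixed side, say $\calC\subseteq A\cap\cla(F)$, with the matched variables $X'\subseteq \bar A\cap\var(F)$; so $|X'|=|\calC|\ge \mimw(G)/2$, and by the induced-matching property each clause in $\calC$ contains exactly one of the variables of $X'$ and these are in bijection. Now consider the formula $F_{\bar X,\calC}$ appearing in the definition of $ps(A,\bar A)$, where $\bar X = \bar A\cap \var(F) \supseteq X'$. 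I would argue that for every subset $S\subseteq \calC$ there is an assignment to $\bar X$ that sets the $X'$-variable of each clause in $S$ to true and the $X'$-variable of each clause in $\calC\setminus S$ to false, and sets everything else to false; since $F$ is monotone, such an assignment satisfies exactly the clauses of $F_{\bar X,\calC}$ lying in $S$ (a clause not in $S$ still has its unique $X'$-variable false and all its other $\bar X$-variables false, hence is unsatisfied). Hence all $2^{|\calC|}\ge 2^{\mimw(G)/2}$ subsets of $\calC$ are precisely satisfiable in $F_{\bar X,\calC}$, so its PS-value, and therefore $ps(A,\bar A)$, and therefore $\psw(F)$, is at least $2^{\mimw(G)/2}$.

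The main obstacle is bookkeeping around the definition of $F_{\bar X,\calC}$: we delete clauses outside $\calC$ and variables outside $\bar X$, so I must check that after this deletion the clauses of $\calC$ still each contain their designated $X'$-variable (they do, since $X'\subseteq\bar X$) and that a clause of $\calC\setminus S$ is not accidentally satisfied by some other surviving variable — this is exactly where monotonicity is used, so the assignment that sets all of $\bar X\setminus\{\text{the true }X'\text{-variables}\}$ to false leaves such a clause unsatisfied. I should also make sure no clause of $\calC$ becomes empty after deleting variables outside $\bar X$, which again follows because each such clause already contains a variable of $X'\subseteq\bar X$. One subtlety: if the chosen half of the matching had its clause-endpoints on the $\bar A$ side instead, the same argument applies symmetrically with the roles of $A$ and $\bar A$ and the formula $F_{X,\bar\calC}$; since $ps(A,\bar A)$ is the \emph{maximum} of the two PS-values, either case gives the bound. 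Putting these together yields $\psw(F)\ge 2^{\mimw(G)/2}$ as claimed.
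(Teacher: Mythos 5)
Your construction and the combinatorial core of your argument are exactly the paper's: read one colour class of $G$ as clauses and the other as variables to obtain a monotone $F$ with $I(F)=G$; at a cut with a large induced matching $M$, keep the (at least) half of $M$ whose clause-endpoints lie on one fixed side; note that by the induced-matching property each such clause meets exactly one of the matched variables; and use monotonicity together with the all-false extension to realize every subset of those matched clauses as a precisely satisfiable set of $F_{\bar X,\calC}$ (or symmetrically $F_{X,\bar\calC}$). All of that is correct and matches the paper.

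There is, however, a quantifier slip that leaves a genuine gap as written. $\psw(F)$ is defined as the \emph{minimum} over all branch decompositions of $I(F)$ of the maximum $ps$-value over their cuts. You fix one particular decomposition --- the one witnessing $\mimw(G)$ --- and exhibit a cut of it with $ps$-value at least $2^{\mimw(G)/2}$. This lower-bounds the $ps$-width of that single decomposition, not the minimum over all decompositions; the decomposition achieving $\psw(F)$ could a priori be a different one, about which your argument says nothing. The fix is one sentence and is what the paper does implicitly: run the argument on an \emph{arbitrary} branch decomposition $(T,\delta)$. Since $\mimw(G)$ is the minimum over decompositions of the maximum induced-matching size over cuts, \emph{every} $(T,\delta)$ has some cut $(A,\bar A)$ with an induced matching of size at least $\mimw(G)$; applying your argument at that cut gives $ps(A,\bar A)\ge 2^{\mimw(G)/2}$, so every decomposition has $ps$-width at least $2^{\mimw(G)/2}$, and the bound on $\psw(F)$ follows. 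With that adjustment your proof coincides with the paper's.
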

\begin{proof}
 We construct $F$ by choosing arbitrarily one color class of $G$ to represent clauses and the other one to represent variables. This choice then uniquely yields a monotone formula where a clause $C$ contains a variable $x$ if and only if $x$ is connected to $C$ by an edge in $G$.
 
 Let $(T,\delta)$ be a branch decomposition of $G$ and $F$. Let $t$ be a vertex of $T$ with cut $(A,\bar{A})$. Set $X:=\var(F)\cap A$, $\bar{X}:= \var(F)\cap \bar{A}$, $\calC := \cla(F)\cap A$ and $\bar{\calC} := \cla(F)\cap \bar A$. Moreover, let $M$ be a maximum independent matching of $G[A,\bar A]$ and let $V_M$ be the end vertices of $M$.
 
 First assume that $|\calC \cap V_M| \ge |\bar{\calC} \cap V_M|$. Let $C_1, \ldots, C_k$ be the clauses in $\calC \cap V_M $ and let $x_1,\ldots, x_k$ be variables in $\bar{X} \cap V_M$. Note that $k\ge |M|/2$. Since $M$ is an independent matching, every clause $C_i$ contains exactly one of the variables $x_j$, and we assume w.l.o.g.\ that $C_i$ contains $x_i$. Let $a$ be an assignment to the $x_i$ and let $a'$ be the extended assignment of $\bar{X}$ that we get by assigning $0$ to all other variables. Then $a'$ satisfies in $F_{\bar{X},C}$ exactly the clauses $C_i$ for which $a(x_i)=1$ since the formula is monotone. Since there are $2^k$ assignments to the $x_i$, we have $|PS(F_{\bar{X},C})| \ge 2^k \ge 2^{|M|/2}$.
 
 For $|\calC \cap V_M| \le |\bar{\calC} \cap V_M|$ it follow symmetrically that $|PS(F_{X,\bar{C}})| \ge 2^{|M|/2}$. 
 
 Consequently, we have in either case that the PS-width of $F$ is at least $2^{|M|/2}$ and the claim follows.
 \end{proof}

To a graph $G=(V,E)$ we define a graph $G'=(V', E')$ as follows: 
\begin{itemize}
 \item for every $v\in V$ there are two vertices $x_v,y_v\in V'$,
 \item for every edge $e=uv\in E$ there are four vertices $p_{e,u},q_{e,u}, p_{e,v}, q_{e,v} \in V'$,
 \item every $u,v\in V$ we add the edge $x_v y_u$ to $E'$, and
 \item for every edge $e=uv\in E$ we add the edges $p_{e,u}q_{e,u}, p_{e,v}q_{e,v}, x_up_{e,u}, y_vq_{e,u}, x_vp_{e,v}, y_uq_{e,v}$. 
\end{itemize}
These are all vertices and edges of $G'$.

\begin{lemma}\label{lem:bipchordal}
 $G'$ is chordal bipartite.
\end{lemma}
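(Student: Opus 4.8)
The plan is to prove the claim directly by verifying the two defining properties of chordal bipartiteness for $G'$: that $G'$ is bipartite, and that every cycle of length at least $6$ has a chord. For bipartiteness, I would exhibit an explicit $2$-coloring. Looking at the construction, the natural classes are $\{x_v \mid v \in V\} \cup \{q_{e,u} \mid e = uv \in E\}$ on one side and $\{y_v \mid v \in V\} \cup \{p_{e,u} \mid e = uv \in E\}$ on the other; one then checks that each of the edge types listed ($x_v y_u$, $p_{e,u}q_{e,u}$, $x_u p_{e,u}$, $y_v q_{e,u}$, etc.) goes between the two classes. This is a routine case check over the six edge types plus the complete-bipartite-style edges $x_v y_u$.

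The substantive part is showing every cycle of length $\ge 6$ has a chord. The key structural observation I would exploit is that the ``gadget'' vertices $p_{e,u}, q_{e,u}$ attached to an edge $e = uv$ have very low degree: $p_{e,u}$ is adjacent only to $q_{e,u}$ and $x_u$, and $q_{e,u}$ is adjacent only to $p_{e,u}$ and $y_v$. So any cycle passing through $p_{e,u}$ must use the path $x_u - p_{e,u} - q_{e,u} - y_v$, and similarly $p_{e,v}, q_{e,v}$ force the path $x_v - p_{e,v} - q_{e,v} - y_u$. I would argue that a chordless cycle of length $\ge 6$ cannot pass through any of these degree-$2$ gadget vertices: if it passed through, say, $p_{e,u}$ and $q_{e,u}$, then it contains $x_u, y_v$ among its vertices, and since $x_u y_v$ is an edge of $G'$ (because the $x_v y_u$ edges are added for \emph{all} pairs $u, v$, in particular when they are equal or adjacent — here we need $x_u y_v \in E'$, which holds since we add $x_v y_u$ for all $u,v \in V$), this edge $x_u y_v$ is a chord of the cycle, contradiction. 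Hence a chordless cycle of length $\ge 6$ lives entirely in the induced subgraph on $\{x_v\} \cup \{y_v\}$, which is a complete bipartite graph $K_{|V|,|V|}$; but $K_{m,n}$ is chordal bipartite (every cycle of length $\ge 6$ there has a chord), so no such cycle exists. This finishes the proof.

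I would be careful about one subtlety in the argument above: when I say ``the cycle contains $x_u$ and $y_v$ and $x_u y_v$ is an edge,'' I must make sure that $x_u \neq y_v$ (true, they are in different color classes) and that this edge is not already one of the cycle's edges (it cannot be, since on the cycle $x_u$ and $y_v$ are separated by the internal path through $p_{e,u}, q_{e,u}$, and a cycle of length $\ge 6$ through $p_{e,u}, q_{e,u}$ has $x_u$ and $y_v$ at distance $3$ along that arc, hence not adjacent along the cycle). There is also a small point that a vertex like $x_u$ on the cycle might be entered/left via edges to several different gadget or $y$-vertices, but this does not affect the argument: as soon as both $p_{e,u}$ and $q_{e,u}$ appear on the cycle, we get the chord $x_u y_v$. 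So I expect the main obstacle to be purely bookkeeping: enumerating the handful of cases for the degrees of the gadget vertices and confirming that the only way a long cycle can avoid using a chord is to avoid all gadget vertices, reducing to the complete bipartite graph. The argument is short and clean once the degree-$2$ observation is made, so I anticipate no real difficulty beyond writing the case analysis carefully.
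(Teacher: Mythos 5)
Your proof is correct and follows essentially the same route as the paper: the paper also splits into the case where the cycle avoids all gadget vertices (and then lives in the complete bipartite graph on the $x_v$'s and $y_v$'s) and the case where it hits some $p_{e,v}$, forcing $x_v$ and $y_u$ onto the cycle and yielding the chord $x_vy_u$. Your additional checks (bipartition classes, the chord not being a cycle edge) are correct refinements of details the paper leaves implicit.
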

\begin{proof}
We have to show that every cycle $C$ in $G'$ of length at least $6$ has a chord. We consider two cases: Assume first that $C$ contains no vertex $p_{e,v}$ and consequently no $q_{e,v}$ either. Then all vertices of $C$ are $x_v$ or $y_v$ and so $C$ is a cycle in the complete bipartite graph induced by the $x_v$ and $y_v$. Clearly, $C$ has a chord then.
 
 Now assume that $C$ contains a vertex $p_{e,v}$ and consequently also $q_{e,v}$. Let $e=uv$. Then $C$ must also contain $x_v$ and $y_u$, so $x_vy_u\in E'$ is a chord.
\end{proof}

\begin{lemma}\label{lem:mimwidthhigh}
 Let $G$ be bipartite. Then $\tw(G)\le 6 \mimw(G')$.
\end{lemma}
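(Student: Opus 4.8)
To prove the lemma I would establish the inequality $\mmw(G)\le 2\mimw(G')$ and then invoke Lemma~\ref{lem:MMvsTW}, which yields $\tw(G)\le 3\mmw(G)-1\le 6\mimw(G')-1\le 6\mimw(G')$, so that the constant $6$ factors as $3\times 2$.

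To bound $\mmw(G)$, I would start from an optimal branch decomposition $(T',\delta')$ of $G'$, say of MIM-width $k$, and turn it into a branch decomposition $(T,\delta)$ of $G$: delete from $T'$ every leaf that is not some $x_v$ (iterating until all leaves are of the form $x_v$), suppress the resulting degree-two vertices, and identify $v\in V$ with the leaf $x_v$. It then suffices to show that every cut of $(T,\delta)$ has matching number at most $2k$ in $G$. A cut $(A,\bar A)$ of $(T,\delta)$ corresponds to an edge of $T$, hence to a path $\pi$ in $T'$; removing any edge of $\pi$ from $T'$ gives a cut $(B,\bar B)$ of $V(G')$ in which $x_v\in B$ exactly when $v\in A$ (the edges of $\pi$ differ only in which hanging subtrees, all consisting of $p$-, $q$- and $y$-leaves, fall on which side). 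So it is enough to show: if $M$ is a matching of size $m$ in $G[A,\bar A]$, then for a suitable edge of $\pi$ the resulting cut $(B,\bar B)$ has an induced matching of size at least $\lceil m/2\rceil$ in $G'$.

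The mechanism for producing such an induced matching uses that each $e=uv\in M$, with $u\in A$ so $x_u\in B$ and $v\in\bar A$ so $x_v\in\bar B$, carries in $G'$ its own gadget: the two paths $x_u-p_{e,u}-q_{e,u}-y_v$ and $x_v-p_{e,v}-q_{e,v}-y_u$, whose interior vertices $p_{e,u},q_{e,u},p_{e,v},q_{e,v}$ have degree two and belong to no other gadget. If for a gadget one of the edges $p_{e,u}q_{e,u}$ or $p_{e,v}q_{e,v}$ crosses $(B,\bar B)$, that crossing edge is ``private'': it uses no $x$- or $y$-vertex, and its endpoints are neither equal nor adjacent to any vertex of any other gadget (here one uses that $M$ is a matching, so all the $u_i,v_i$ are distinct). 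Hence taking one such edge per gadget yields an induced matching. A short case analysis of where the six vertices of a gadget lie relative to the cut shows that every gadget always has \emph{some} crossing edge, either among its six edges or among the two complete-bipartite edges $x_uy_u$, $x_vy_v$; the goal is to choose the edge of $\pi$ so that for at least half of the gadgets this crossing can be taken on a $p$-$q$ edge, or more generally chosen consistently.

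The main obstacle, and where essentially all of the work lies, is the complete bipartite subgraph of $G'$ on $\{x_v\}\cup\{y_v\}$: since every $x$-vertex is adjacent to every $y$-vertex, a crossing edge touching an $x$-vertex can never coexist in an induced matching with a crossing edge touching a $y$-vertex, and the ``bad'' gadgets are precisely those whose only crossing edges are $x$-$p$, $q$-$y$, or complete-bipartite $x$-$y$ edges. One must argue that the decomposition $(T',\delta')$ cannot force too many gadgets into mutually incompatible such situations; the intuition is that the degree-two vertices $p_{e,u},q_{e,u}$ can be relocated to convenient positions without changing the MIM-width by more than a constant, and that choosing the edge of $\pi$ well controls on which side the two arms of a gadget fall. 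The factor $2$ in $\mmw(G)\le 2\mimw(G')$ is exactly the slack needed to discard, in the worst case, up to half of the gadgets of $M$ in order to keep the remaining crossing edges pairwise non-adjacent; the trivial cases $|V|\le 1$ are handled separately.
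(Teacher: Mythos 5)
Your overall strategy (turn a branch decomposition of $G'$ into one of $G$, show that a matching of size $m$ across a cut of $G$ forces an induced matching of size about $m/2$ across the corresponding cut of $G'$, then apply Lemma~\ref{lem:MMvsTW}) is the right one and matches the paper. But there is a genuine gap at the decisive step, and you name it yourself: with your choice of keeping only the leaves $x_v$, a matched edge $e=uv$ puts $x_u$ and $x_v$ on opposite sides of the cut, and these two vertices are neither adjacent nor joined by a private gadget path. Whether either of the paths $x_u p_{e,u} q_{e,u} y_v$ or $x_v p_{e,v} q_{e,v} y_u$ crosses the cut depends on the uncontrolled positions of $y_u$ and $y_v$, and in the bad case the only crossing edge available is a complete-bipartite edge $x_w y_{w'}$, which is adjacent to every other $x$- and $y$-vertex and hence useless for building an induced matching. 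The remedies you sketch --- choosing the edge of the contracted path $\pi$ cleverly, or ``relocating'' the degree-two vertices at a constant cost in MIM-width --- are not carried out, and the second would in any case perturb the constant $6$ in the statement. As written, the key inequality $\mmw(G)\le 2\mimw(G')$ is not established.

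The missing idea is to use the hypothesis that $G$ is bipartite already when choosing which leaf of $G'$ represents a vertex $v$ of $G$: keep $x_v$ for $v$ in one colour class of $G$ and $y_v$ for $v$ in the other. Then for every matched edge $e=uv$ the two surviving representatives are one $x$-vertex and one $y$-vertex, joined by the length-three path $x_u p_{e,u} q_{e,u} y_v$ whose interior is private to $e$; since its endpoints lie on opposite sides of the cut, one of its three edges crosses, and none of these three is a complete-bipartite edge, so the obstacle you identify never arises. The chosen crossing edges form a matching because $M$ does, and the only obstruction to independence is that a crossing edge touching some $x_u$ and one touching some $y_{v'}$ are always adjacent through the complete bipartite part; discarding the smaller of the two groups (edges avoiding $y$-vertices versus edges avoiding $x$-vertices) loses at most half the edges and leaves an induced matching. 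That is exactly where the factor $2$ comes from, with no case analysis on gadget positions and no modification of the decomposition needed.
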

\begin{proof}
 Let $(T',\delta')$ be a branch decomposition of $G'$. Let $A,B\subseteq V(G)$ be the two colour classes of $G$. We construct a branch decomposition $(T,\delta)$ of $G$ by deleting the leaves labeled with $p_{e,u},q_{e,u}, p_{e,v}, q_{e,v}$, and those labeled $x_v$ for $v\in A$ or with $y_v$ for $v\in B$. Then we delete all internal vertices of of $T'$ that have become leaves by these deletions until we get a branch decomposition $T$ with the leaves $x_v$ for $v\in B$ and $y_v$ for $v\in A$. For the leaves of $T$ we define $\delta(t):=v$ where $v\in V$ is such that $\delta'(t)= x_v$ or $\delta'(t)=y_v$. The result $(T,\delta)$ is a branch decomposition of $G$.
 
 Let $t$ be a vertex of $T$ with the corresponding cut $(X,\bar X)$. Let $M\subseteq E$ be a matching in $G[X,\bar{X}]$. Let $(X',\bar{X}')$ be the cut of $t$ in $(T',\delta')$. Let $e=uv\in M$, then $x_u$ and $y_v$ are on different sides of the cut $X'$ and they are connected by the path $x_u p_{e,u}q_{e,u} y_v$. Consequently, there is at least one edge along this path in $G'[X',\bar{X}']$. Choose one such edge arbitrarily. 
 
 Let $M'$ be the set of edges we have chosen for the different edges in $M$. Let $M_x'$ be the set of edges in $M'$ that do not have an end vertex $y_v$ and let $M_y'$ be the set of edges in $M'$ that do not have an end vertex $x_v$.  Let $M''$ be the bigger of these two sets. Since $e'\in M'$ can only have an end vertex $x_v$ or $y_u$ but not both, we have $|M_x'| + |M_y'| \ge |M'|$ and thus $|M''| \ge |M'|/2$.
 
 We claim that $M''$ is an independent matching in $G'$. Clearly, $M'$ is a matching because $M$ is one. Consequently, $M''\subseteq M'$ is also a matching. We now show that $M''$ is also independent. By way of contradiction, assume this were not true. Then there must be two adjacent vertices $u,v\in V'$ that are end vertices of edges in $M''$ but not in the same edge in $M''$. If $u=p_{e',w}$ for some $e'\in E$ and $w\in V$, then $v$ must be $x_w$. But then by construction of $M'$, the vertex $w$ must be incident to two edges in $M$ which contradicts $M$ being a matching. Similarly, we can rule out that $v$ is $q_{e,w}$. Thus, $u$ must be $x_w$ or $y_w$ and $v$ must be $x_{w'}$ or $y_{w'}$. Since $x_w$ and $x_{w'}$ are in the same colour class of $G'$, they are not adjacent. Similarly $y_w$ and $y_{w'}$ are not adjacent. Consequently, we may assume that $u=x_w$ and $v= y_{w'}$. But then they cannot both be an endpoint of an edge in $M''$ by construction of $M''$. Thus $M''$ is independent.
 
 By Lemma~\ref{lem:MMvsTW} we know that there is a $t\in T$ with cut $(X,\bar{X})$ such that  we can find a matching $M$ of size at least $\frac{\tw(G)}{3}$ in $G[X,\bar{X}]$. By the construction above the corresponding cut $(X',\bar X')$ yields an independent matching of size $\frac{\tw(G)}{6}$ in $G'[X', \bar{X}']$. This completes the proof.
\end{proof}

Using the connection between vertex expansion and treewidth (see \cite{GroheM09}) the following lemma is easy to show.

\begin{lemma}\label{lem:expander}
 There is a family $\mathcal{G}$ of graphs and constants $c>0$ and $d\in \mathbb{N}$ such that for every $G\in \mathcal G$ the graph $G$ has maximum degree $d$ and we have $\tw(G)\ge c |E(G)|$.
\end{lemma}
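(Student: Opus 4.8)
The plan is to take $\mathcal{G}$ to be a family of bounded-degree \emph{expander} graphs and to show that such graphs have treewidth linear in their number of vertices. This suffices: a graph $G$ of maximum degree $d$ on $n$ vertices has $|E(G)| \le dn/2$, so a bound of the form $\tw(G)\ge c'n$ for a constant $c'>0$ immediately gives $\tw(G)\ge (2c'/d)\,|E(G)| =: c\,|E(G)|$. Concretely, I would fix $d=3$ and let $\mathcal{G}$ contain, for every sufficiently large $n$, a single $3$-regular graph $G_n$ on $n$ vertices having vertex expansion bounded below by a universal constant $\beta>0$, i.e.\ every $S\subseteq V(G_n)$ with $|S|\le n/2$ satisfies $|N(S)\setminus S|\ge \beta|S|$. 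Such families exist with a uniform expansion constant: for instance a uniformly random $3$-regular graph on $n$ vertices has this property with high probability, and explicit constructions are also available.

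The heart of the matter is the implication ``good vertex expansion $\Rightarrow$ large treewidth'', which is precisely the connection between treewidth and expansion studied in \cite{GroheM09}; let me recall the short argument. Suppose $G\in\mathcal{G}$ has $n$ vertices and $\tw(G)=k$. By the standard fact that a tree decomposition of width $k$ yields a balanced separator, there is a set $X\subseteq V(G)$ with $|X|\le k+1$ such that every connected component of $G-X$ has at most $n/2$ vertices. If $|X|\ge n/8$ we are already done, since then $\tw(G)\ge n/8-1=\Omega(n)$; so assume $|X|<n/8$. Distribute the components of $G-X$ greedily into two parts $A$ and $B$, always putting the next component into the currently smaller part. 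Since each component has at most $n/2$ vertices, the two parts end up differing in size by at most $n/2$, so the smaller one, say $B$, satisfies $3n/16\le |B|\le n/2$. There are no edges of $G$ between $B$ and $A$, hence $N(B)\setminus B\subseteq X$, and vertex expansion gives $|X|\ge |N(B)\setminus B|\ge \beta|B|\ge 3\beta n/16$. Thus $\tw(G)=|X|-1=\Omega(n)$ in this case as well.

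Combining the two cases, every $G\in\mathcal{G}$ of order $n$ satisfies $\tw(G)\ge c'n$ for a constant $c'>0$ depending only on $\beta$; since we are free to include in $\mathcal{G}$ only graphs of sufficiently large order, and $|E(G)|\le 3n/2$, this yields $\tw(G)\ge c\,|E(G)|$ with $c:=2c'/3$. The argument is essentially routine: the one point that needs a little care is the greedy balancing of the components (together with the harmless case split on whether $|X|$ is already linear in $n$), and one should check that the expander family can indeed be chosen with a single degree bound and a single expansion constant. Alternatively, the whole step ``bounded-degree expander $\Rightarrow$ treewidth $\Omega(n)$'' can be cited directly from \cite{GroheM09}, which is presumably what the remark preceding the lemma intends.
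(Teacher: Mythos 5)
Your proof is correct and follows exactly the route the paper intends: the paper gives no proof at all for this lemma, merely remarking that it follows from the vertex-expansion/treewidth connection of \cite{GroheM09}, and your argument via bounded-degree expanders and the balanced-separator property of tree decompositions is the standard way to make that precise (the degree bound then converts the $\Omega(n)$ vertex bound into the stated $\Omega(|E(G)|)$ bound). The only nitpick is that ``$\tw(G)=|X|-1$'' near the end should read $\tw(G)\ge |X|-1$, which does not affect the conclusion.
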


\begin{corollary}\label{cor:chordalbipmim}
 There is a family $\mathcal G'$ of chordal bipartite graphs and a constant $c$ such that for every graph $G\in \mathcal G$ we have $\mimw(G)\ge c |V(G)|$.
\end{corollary}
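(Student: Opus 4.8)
The plan is to instantiate the general gadget construction $G \mapsto G'$ (defined just before Lemma~\ref{lem:bipchordal}) on a family of bounded-degree bipartite expanders, and then turn the treewidth lower bound coming from Lemma~\ref{lem:mimwidthhigh} into a lower bound expressed in terms of $|V(G')|$, which is possible precisely because bounded degree forces $|E(G)| = \Theta(|V(G)|)$.

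First I would take the family $\mathcal{G}$ together with the constants $c>0$ and $d\in\mathbb{N}$ from Lemma~\ref{lem:expander}. We may assume without loss of generality that every $G\in\mathcal{G}$ is bipartite and has no isolated vertex: subdividing each edge of $G$ once makes the graph bipartite, does not decrease the treewidth (the original graph is a minor of the subdivision, obtained by contracting one edge of each subdivided path), only doubles the number of edges, and keeps the maximum degree bounded; isolated vertices can simply be removed without changing $\tw$ or $|E|$. Alternatively one just invokes the existence of bounded-degree bipartite expanders directly. After this normalization, $\tw(G)\ge c|E(G)|$ still holds (possibly after shrinking $c$ by a constant factor), and since every vertex has degree at least $1$ we also have $|V(G)|\le 2|E(G)|$.

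Next, for each $G\in\mathcal{G}$ I would form $G'$. By Lemma~\ref{lem:bipchordal}, $G'$ is chordal bipartite, and by Lemma~\ref{lem:mimwidthhigh} (which applies since $G$ is bipartite) we obtain
$$\mimw(G') \ge \frac{\tw(G)}{6} \ge \frac{c}{6}\,|E(G)|.$$
Then I would count the vertices of $G'$: by construction it has exactly $2|V(G)|$ vertices $x_v,y_v$ and $4|E(G)|$ vertices $p_{e,u},q_{e,u},p_{e,v},q_{e,v}$, so
$$|V(G')| = 2|V(G)| + 4|E(G)| \le 4|E(G)| + 4|E(G)| = 8|E(G)|,$$
using $|V(G)|\le 2|E(G)|$. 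Combining the two displays gives $\mimw(G') \ge \frac{c}{6}|E(G)| \ge \frac{c}{48}|V(G')|$. Hence setting $\mathcal{G}' := \{\,G' \mid G\in\mathcal{G}\,\}$ and $c' := c/48$ proves Corollary~\ref{cor:chordalbipmim}.

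I do not expect any genuine obstacle: the work is carried by Lemma~\ref{lem:bipchordal} and Lemma~\ref{lem:mimwidthhigh}, and the only points that need care are (i) arranging that the expander family can be taken bipartite without losing the linear treewidth bound, and (ii) the elementary bookkeeping that converts $\mimw(G') = \Omega(|E(G)|)$ into $\mimw(G') = \Omega(|V(G')|)$, where bounded degree of $G$ is exactly the hypothesis that makes $|E(G)|$, $|V(G)|$ and $|V(G')|$ all linearly comparable.
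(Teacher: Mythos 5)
Your proposal is correct and follows essentially the same route as the paper: take the bounded-degree family of Lemma~\ref{lem:expander}, make it bipartite by subdividing edges (which preserves linear treewidth in the number of edges), and apply the $G\mapsto G'$ construction together with Lemma~\ref{lem:bipchordal} and Lemma~\ref{lem:mimwidthhigh}. You are in fact slightly more careful than the paper's own proof, which leaves implicit the final bookkeeping step converting the bound $\mimw(G')=\Omega(|E(G)|)$ into $\mimw(G')=\Omega(|V(G')|)$ via the bounded-degree assumption.
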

\begin{proof}
 Let $\mathcal G$ be the class of Lemma~\ref{lem:expander}. We first transform every graph $G \in \mathcal G$ into a bipartite one $G_1$ by subdividing every edge, i.e.~by introducing for each edge $e = uv$ a new vertex $w_e$ and by replacing $e$ by $u w_e$ and $w_e v$. It is well-known that subdividing edges does not decrease the treewidth of a graph (see e.g.~\cite{Diestel05}), and thus $\tw(G)\le \tw(G_1)$. Moreover, $|E(G_1)| = 2|E(G)|$, and thus $\tw(G_1) \geq \frac{1}{2}c|E(G_1)|$. Now let $\mathcal G' = \{G_1' \mid G \in \mathcal G\}$. Then the graphs in $\mathcal G'$ are chordal bipartite by Lemma~\ref{lem:bipchordal} and the bound on the MIM-width follows by combining Lemma~\ref{lem:expander} and Lemma~\ref{lem:mimwidthhigh}.
\end{proof}

We can now easily prove the main result of this section.

\begin{corollary}
 There is a family of monotone $\beta$-acyclic CNF-formulas of PS-width~$2^{\Omega(n)}$ where $n$ is the number of variables in the formulas.
\end{corollary}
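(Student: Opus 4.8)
The plan is to assemble the corollary from three ingredients already established: the family of chordal bipartite graphs of linear MIM-width from Corollary~\ref{cor:chordalbipmim}, the passage from a MIM-width lower bound to a PS-width lower bound for monotone formulas from Lemma~\ref{lem:logwidth}, and the characterization of $\beta$-acyclicity by chordal bipartiteness of the incidence graph from Theorem~\ref{thm:chorbipbeta}.

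Concretely, I would take the family $\mathcal{G}'$ of chordal bipartite graphs and the constant $c > 0$ furnished by Corollary~\ref{cor:chordalbipmim}, so that $\mimw(G) \ge c\,|V(G)|$ for every $G \in \mathcal{G}'$. For each such $G$, Lemma~\ref{lem:logwidth} produces a monotone CNF-formula $F_G$ with incidence graph $G$ satisfying $\psw(F_G) \ge 2^{\mimw(G)/2}$; when applying the lemma I would let the larger of the two colour classes of $G$ play the role of the variables, so that the number $n$ of variables of $F_G$ obeys $|V(G)|/2 \le n \le |V(G)|$, which in particular makes $n$ grow with $|V(G)|$ along the family. Since the incidence graph of $F_G$ is $G$, which is chordal bipartite --- and for a monotone formula the incidence graph and the incidence graph of its hypergraph coincide, so Theorem~\ref{thm:chorbipbeta} applies directly --- the hypergraph $\calH(F_G)$ is $\beta$-acyclic, i.e.~$F_G$ is a $\beta$-acyclic formula.

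Finally I would chain the bounds: $\psw(F_G) \ge 2^{\mimw(G)/2} \ge 2^{c|V(G)|/2} \ge 2^{cn/2}$, using $n \le |V(G)|$ for the last inequality. Hence $\{F_G \mid G \in \mathcal{G}'\}$ is a family of monotone $\beta$-acyclic CNF-formulas of PS-width $2^{\Omega(n)}$. There is no genuinely hard step here: the corollary is essentially a bookkeeping combination of the preceding results, and the only points needing a little attention are choosing the colour class so that $n$ is not bounded, and observing that Theorem~\ref{thm:chorbipbeta} may be invoked for the formula's incidence graph directly.
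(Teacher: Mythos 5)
Your proposal is correct and follows essentially the same route as the paper: take the chordal bipartite family of Corollary~\ref{cor:chordalbipmim}, realize each graph as the incidence graph of a monotone formula via Lemma~\ref{lem:logwidth}, invoke Theorem~\ref{thm:chorbipbeta} for $\beta$-acyclicity, and chain the bounds using $n \le |V(G)|$. The extra care about which colour class becomes the variables is a reasonable (if not strictly necessary) refinement, but the argument is the same as the paper's.
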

\begin{proof}
 Let $\mathcal F$ be the class of monotone CNF-formulas having the class $\mathcal G'$ of Corollary \ref{cor:chordalbipmim} as its incidence graphs. By Theorem~\ref{thm:chorbipbeta} the formulas in $\mathcal F$ are $\beta$-acyclic. Combining the bound on the MIM-width of $G'$ with Lemma~\ref{lem:logwidth} then directly yields the result.
\end{proof}

It follows that the STV-framework cannot prove subexponential runtime bounds for $\sSAT$ on $\beta$-acyclic formulas.

 \section{Conclusion}

We have shown that $\beta$-acyclic $\sSAT$ can be solved in polynomial time, a question left open in \cite{CapelliDM14}. Our algorithm does not follow the dynamic programming approach that was used in all other structural tractability results that were known before, and as we have seen this is no coincidence. Instead, $\beta$-acyclic $\sSAT$ lies outside the STV-framework of \cite{SaetherTV14} that explains all old results in a uniform way.

We close this paper with several open problems that we feel should be explored in the future. First, our algorithm for $\sSAT$ is specifically designed for the case of $\beta$-acyclic formulas, but we feel that the techniques developed, in particular those of Section~\ref{sec:analysis}, might possibly be extended to other classes of hypergraphs that one can characterize by elimination orders. In this direction, it would be interesting to see if hypergraphs of bounded $\beta$-hypertree width, a width measure generalizing $\beta$-acyclicity proposed in~\cite{GottlobP01}, can be characterized by elimination orders and if such a characterization can be used to solve $\sSAT$ on the respective instances. Note that this case lies outside of the STV-framework, therefore dynamic programming without new ingredients is unlikely to work. Also, even the complexity of deciding $\SAT$ on instances of bounded $\beta$-hypertree width is an open problem~\cite{OrdyniakPS13}. 

It might also be interesting to generalize our algorithm to solve cases for which we already have polynomial time algorithms. For example, is there any uniform explanation for tractability of bounded cliquewidth $\sSAT$ and $\beta$-acyclic $\sSAT$, similarly to the way in which the framework of \cite{SaetherTV14} explains tractability for all previously known results?

Finally, we feel that, although we have shown that the STV-framework does not explain all tractability results for $\sSAT$, it is still a framework that should be studied in the future. We believe that there are still many classes to be captured by it in the future and thus we see a better understanding of the framework as an important goal for future research. One question is the complexity of computing branch decompositions of (approximately) minimal MIM-width or PS-width. Alternatively, one could try to find more classes of bipartite graphs for which one can efficiently compute branch decompositions of small MIM-width. This would then directly extend the knowledge on structural classes of CNF-formulas for which dynamic programming can efficiently solve $\sSAT$.

\bibliography{betabiblio}

\newcommand{\etalchar}[1]{$^{#1}$}
\begin{thebibliography}{BDG{\etalchar{+}}12}

\bibitem[ADM86]{AusielloDM86}
G.~Ausiello, A.~D'Atri, and M.~Moscarini.
\newblock Chordality properties on graphs and minimal conceptual connections in
  semantic data models.
\newblock {\em J. Comput. Syst. Sci.}, 33(2):179--202, 1986.

\bibitem[BDG{\etalchar{+}}12]{bulatov}
A.~Bulatov, M.~Dyer, L.A. Goldberg, M.~Jalsenius, M.~Jerrum, and D.~Richerby.
\newblock The complexity of weighted and unweighted {\#csp}.
\newblock {\em Journal of Computer and System Sciences}, 78(2):681--688, March
  2012.

\bibitem[BLS99]{BrandstaedtLS99}
A.~Brandst\"{a}dt, V.B. Le, and J.P. Spinrad.
\newblock {\em Graph Classes: A Survey}.
\newblock Society for Industrial and Applied Mathematics, Philadelphia, PA,
  USA, 1999.

\bibitem[Bod93]{Bodlaender93}
H.L. Bodlaender.
\newblock A tourist guide through treewidth.
\newblock {\em Acta Cybern.}, 11(1-2):1--21, 1993.

\bibitem[{Bra}14]{JBB14}
J.~{Brault-Baron}.
\newblock {Hypergraph Acyclicity Revisited}.
\newblock {\em ArXiv e-prints}, March 2014.

\bibitem[CC12]{chen}
J.-Y. Cai and X.~Chen.
\newblock Complexity of counting {CSP} with complex weights.
\newblock In {\em Proceedings of the Forty-fourth Annual {ACM} Symposium on
  Theory of Computing}, {STOC} '12, page 909–920, New York, {NY}, {USA},
  2012. {ACM}.

\bibitem[CDM14]{CapelliDM14}
F.~Capelli, A.~Durand, and S.~Mengel.
\newblock Hypergraph acyclicity and propositional model counting.
\newblock {\em CoRR}, abs/1401.6307, 2014.

\bibitem[CGH09]{CohenGH09}
D.A. Cohen, M.J. Green, and C.~Houghton.
\newblock Constraint representations and structural tractability.
\newblock In {\em {Principles and Practice of Constraint Programming - CP
  2009}}, pages 289--303, 2009.

\bibitem[Die05]{Diestel05}
R.~Diestel.
\newblock {\em Graph Theory (Graduate Texts in Mathematics)}.
\newblock Springer, August 2005.

\bibitem[DJ04]{DalmauJ04}
V.~Dalmau and P.~Jonsson.
\newblock The complexity of counting homomorphisms seen from the other side.
\newblock {\em Theor. Comput. Sci.}, 329(1-3):315--323, 2004.

\bibitem[Dur12]{Duris12}
D.~Duris.
\newblock Some characterizations of $\gamma$ and $\beta$-acyclicity of
  hypergraphs.
\newblock {\em Inf. Process. Lett.}, 112(16):617--620, 2012.

\bibitem[Fag83]{Fagin83}
R.~Fagin.
\newblock Degrees of acyclicity for hypergraphs and relational database
  schemes.
\newblock {\em Journal of the ACM}, 30(3):514--550, 1983.

\bibitem[FG06]{FlumG06}
J.~Flum and M.~Grohe.
\newblock {\em {Parameterized Complexity Theory}}.
\newblock Springer-Verlag New York Inc, 2006.

\bibitem[FMR08]{FMR-08}
E.~Fischer, J.A. Makowsky, and E.V. Ravve.
\newblock Counting truth assignments of formulas of bounded tree-width or
  clique-width.
\newblock {\em Discrete Applied Mathematics}, 156(4):511--529, 2008.

\bibitem[GLS00]{GottlobLS00}
G.~Gottlob, N.~Leone, and F.~Scarcello.
\newblock A comparison of structural csp decomposition methods.
\newblock {\em Artif. Intell.}, 124(2):243--282, 2000.

\bibitem[GM09]{GroheM09}
M.~Grohe and D.~Marx.
\newblock On tree width, bramble size, and expansion.
\newblock {\em J. Comb. Theory, Ser. B}, 99(1):218--228, 2009.

\bibitem[GP04]{GottlobP01}
G.~Gottlob and R.~Pichler.
\newblock {Hypergraphs in Model Checking: Acyclicity and Hypertree-Width versus
  Clique-Width}.
\newblock {\em SIAM Journal on Computing}, 33(2), 2004.

\bibitem[OPS13]{OrdyniakPS13}
S.~Ordyniak, D.~Paulusma, and S.~Szeider.
\newblock {Satisfiability of acyclic and almost acyclic CNF formulas}.
\newblock {\em Theoretical Computer Science}, 481:85--99, 2013.

\bibitem[PSS13]{PaulusmaSS13}
D.~Paulusma, F.~Slivovsky, and S.~Szeider.
\newblock {Model Counting for CNF Formulas of Bounded Modular Treewidth}.
\newblock In {\em 30th International Symposium on Theoretical Aspects of
  Computer Science, STACS 2013}, pages 55--66, 2013.

\bibitem[Rot96]{Roth96}
D.~Roth.
\newblock On the hardness of approximate reasoning.
\newblock {\em Artificial Intelligence}, 82(1–2):273 -- 302, 1996.

\bibitem[SS10]{Samer:2010wc}
M.~Samer and S.~Szeider.
\newblock {Algorithms for propositional model counting}.
\newblock {\em Journal of Discrete Algorithms}, 8(1):50--64, 2010.

\bibitem[SS13]{SlivovskyS13}
F.~Slivovsky and S.~Szeider.
\newblock {Model Counting for Formulas of Bounded Clique-Width}.
\newblock In {\em Algorithms and Computation - 24th International Symposium,
  ISAAC 2013}, pages 677--687, 2013.

\bibitem[STV14]{SaetherTV14}
S.~Hortemo S{\ae}ther, J.A. Telle, and M.~Vatshelle.
\newblock {Solving MaxSAT and \#SAT on structured CNF formulas}.
\newblock {\em CoRR}, abs/1402.6485, 2014.

\bibitem[Vat12]{Vatshelle12}
M.~Vatshelle.
\newblock {\em {New Width Parameters of Graphs}}.
\newblock PhD thesis, University of Bergen, 2012.

\end{thebibliography}

\begin{appendix}
\section{Extension to $\mSAT$}

The algorithm described in this paper can also be turned into an algorithm for $\mCSPd$---the problem of computing, given a set of weighted constraints $I$, the value $m(I) = \max \{ \prod_{c \in I} c(a|_{\var(c)}) \mid a \in D^{\var(I)}\}$. We first show that we can use $\mCSPd$ to solve $\mSAT$, the problem of computing the maximum number of clauses of a CNF-formula $F$ that can be satisfied simultaneously. 

\begin{lemma}
\label{lem:maxtocsp}
  Given a CNF-formula $F$, one can compute in polynomial time a set $I$ of weighted constraints with default values on variables $\var(F)$ and domain $\{0,1\}$ such that
  \begin{itemize}
  \item $\calH(F) = \calH(I)$,
  \item for all $a \in \{0,1\}^{\var(F)}$, $m(I,a) = 2^{s}$ where $s = |\{C \in F \mid a \models C\}|$, and
  \item $s(I) = \|I\| = |F|$
  \end{itemize}
\end{lemma}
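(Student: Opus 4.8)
The plan is to mirror the construction of Lemma~\ref{lem:sattocsp}, but to shift the two weight values from $\{0,1\}$ to $\{1,2\}$ so that every \emph{satisfied} clause contributes a multiplicative factor of $2$ and every falsified clause a factor of $1$. Concretely, for each clause $C \in \cla(F)$ I would introduce a single weighted constraint $c_C$ with $\var(c_C) := \var(C)$, default value $\d(c_C) := 2$, support $\supp(c_C) := \{a_C\}$ where $a_C \in \{0,1\}^{\var(C)}$ is the unique assignment of $\var(C)$ that does \emph{not} satisfy $C$, and value $c_C(a_C) := 1$. Set $I := \{c_C \mid C \in \cla(F)\}$; this is clearly constructible in polynomial time, since $a_C$ just sets each variable of $C$ so as to falsify its literal.

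Verifying the three bullet points is then routine. First, $\calH(F) = \calH(I)$ is immediate, because the edge set of $\calH(I)$ is $\{\var(c_C) \mid C \in \cla(F)\} = \{\var(C) \mid C \in \cla(F)\}$, the edge set of $\calH(F)$. Second, fix a total assignment $a \in \{0,1\}^{\var(F)}$. Since $\var(I) = \var(F)$, the maximum defining $m(I,a)$ ranges over the single completion $a$ itself, so $m(I,a) = \prod_{C \in \cla(F)} c_C(a|_{\var(C)})$. By construction $c_C(a|_{\var(C)}) = 1$ exactly when $a|_{\var(C)} = a_C$, i.e.\ exactly when $a \not\models C$, and $c_C(a|_{\var(C)}) = \d(c_C) = 2$ otherwise, i.e.\ exactly when $a \models C$. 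Hence the product equals $2^{s}$ with $s = |\{C \in \cla(F) \mid a \models C\}|$. Third, each $c_C$ has $|\supp(c_C)| = 1$, so $|c_C| = |\var(C)|$, and therefore $\|I\| = s(I) = \sum_{C \in \cla(F)} |\var(C)| = |F|$.

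There is essentially no obstacle in the construction itself; the only points worth a line of care are the same two that underlie Lemma~\ref{lem:sattocsp}: a clause of a $\{0,1\}$-CNF-formula has a \emph{unique} falsifying assignment on its own variables, which is what lets a size-one support encode it, and $m(I,a)$ for a \emph{total} assignment $a$ is just the corresponding product rather than a genuine maximum, so the factor-$2$ bookkeeping behaves exactly as intended. The real work of the appendix lies downstream of this lemma: feeding $I$ into a $\mCSPd$-analogue of the elimination algorithm of Section~\ref{sec:cspnest}, where each $\sum_{d \in D}$ is replaced by $\max_{d \in D}$ and non-negativity of the weights is again what makes the cancellation arguments go through.
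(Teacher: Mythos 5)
Your construction is exactly the one the paper uses (default value $2$, support consisting of the unique falsifying assignment with value $1$), and your verification of the three properties is correct. No issues.
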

\begin{proof}
  For each clause $C$ of $F$, we define a constraint $c$ with default value $2$ whose variables are the variables of $C$ and such that $\supp(c) = \{a\}$ and $c(a) = 1$, where $a$ is the only assignment of $\var(C)$ that is not a satisfying assignment to $C$. It is easy to check that this construction has the above properties.
\end{proof}

\begin{corollary}
  $\mSAT$ is polynomial time reducible to $\mCSPd$. Moreover, $\mSAT$ restricted to $\beta$-acyclic formulas is polynomial time reducible to $\mCSPd$ restricted to $\beta$-acyclic instances.
\end{corollary}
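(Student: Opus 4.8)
The plan is to follow the proof of Corollary~\ref{cor:sattocsp} almost verbatim, replacing the counting encoding of Lemma~\ref{lem:sattocsp} by the optimization encoding of Lemma~\ref{lem:maxtocsp}. First I would apply Lemma~\ref{lem:maxtocsp} to the input CNF-formula $F$, obtaining in polynomial time a set $I$ of weighted constraints on variables $\var(F)$ and domain $\{0,1\}$ with $\calH(I) = \calH(F)$ and with $m(I,a) = 2^{s(a)}$ for every $a \in \{0,1\}^{\var(F)}$, where $s(a) := |\{C \in F \mid a \models C\}|$. Since $m(I) = \max_{a} m(I,a)$ and $t \mapsto 2^t$ is strictly increasing, this gives $m(I) = 2^{\mSAT(F)}$, hence $\mSAT(F) = \log_2 m(I)$. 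As $m(I)$ is guaranteed to be a power of two, its binary logarithm is a nonnegative integer that can be extracted in polynomial time (it is the bitlength of $m(I)$ minus one), so a polynomial-time algorithm for $\mCSPd$ yields one for $\mSAT$. This establishes the first sentence.

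For the second sentence, the only additional point is that the reduction stays inside the $\beta$-acyclic world. But Lemma~\ref{lem:maxtocsp} produces $I$ with $\calH(I) = \calH(F)$ on the nose, so if $\calH(F)$ is $\beta$-acyclic then so is $\calH(I)$, and hence $I$ is a $\beta$-acyclic $\mCSPd$-instance by our convention. Together with the polynomial-time construction and the size bounds $\|I\| = s(I) = |F|$ already provided by Lemma~\ref{lem:maxtocsp}, nothing else needs to be checked.

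I do not expect any real obstacle here: the statement is an immediate corollary of Lemma~\ref{lem:maxtocsp}, and the single observation worth spelling out is the passage from the multiplicative quantity $m(I) = 2^{\mSAT(F)}$ to the additive optimum $\mSAT(F)$, i.e.\ that reading off the exponent of a power of two is a polynomial-time operation. The genuinely new work in the appendix is elsewhere, namely adapting the weight-update procedure of Theorem~\ref{thm:cspalgo} and the runtime analysis of Section~\ref{sec:analysis} from sums to maxima in order to solve $\beta$-acyclic $\mCSPd$ itself.
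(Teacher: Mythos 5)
Your proof is correct and follows essentially the same route as the paper's: apply Lemma~\ref{lem:maxtocsp}, observe that $m(I) = 2^{\mSAT(F)}$, recover $\mSAT(F)$ by taking the base-$2$ logarithm, and note that the hypergraph is preserved so $\beta$-acyclicity carries over. Your extra remark that the logarithm of a power of two is extractable in polynomial time is a harmless elaboration of what the paper leaves implicit.
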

\begin{proof}
  We transform a CNF-formula $F$ into an instance $I$ of $\mCSPd$ using Lemma~\ref{lem:maxtocsp}. We have $\mCSPd(I) = 2^s$ where $s = \mSAT(F)$, so it just remains to take the logarithm in base~$2$.
\end{proof}

We now show how to adapt our algorithm for $\sCSPd$ to $\mCSPd$.
\begin{theorem}
\label{thm:maxcspalgo}
  Let $I$ be a set of weighted constraints on domain $D$ and $x$ a nest point of $\calH(I)$. Let $I(x) = \{c_1, \ldots, c_p\}$ with $\var(c_1) \subseteq \ldots \subseteq \var(c_p)$. Let $I' = \{c' \mid c \in I\}$ where
  \begin{itemize}
  \item if $c \notin I(x)$ then $c' := c$
  \item if $c = c_i$, then $c_i' := (f_i',\d)$ is the weighted constraint on variables $\var(c) \setminus \{x\}$, with default value $\d(c_i)$ and $\supp(c_i') := \{a \in D^{Y \setminus \{x\}} \mid \exists d \in D, (a \oplus_x d) \in \supp(c)\}$. Moreover, for all $a \in \supp(c_i')$, let $P_i(a,d) := \prod_{j=1}^i c_j((a \oplus_x d)|_{\var(c_j)})$ and $P_0(a,d) = 1$. We define:
$$ f_i'(a) := \frac{\max_{d \in D} P_{i}(a,d)}{\max_{d \in D} P_{i-1}(a,d)}$$
if $\max_{d \in D} P_{i-1}(a,d) \neq 0$ and $f_i'(a) := 0$ otherwise.
\end{itemize}
Then $\calH(I') = \calH(I) \setminus x$, $\|I'\| \leq \|I\|$ and $m(I) = m(I')$. Moreover, one can compute $I'$ with a $O(p\|I(x)\|)$ arithmetic operations.
\end{theorem}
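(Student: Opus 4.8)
The plan is to mirror almost verbatim the proof of Theorem~\ref{thm:cspalgo}, replacing every occurrence of $\sum_{d\in D}$ by $\max_{d\in D}$ and correspondingly replacing the multiplicative factor $|D|$ by $1$. All the purely structural claims---that $\calH(I')=\calH(I)\setminus x$, that $\|I'\|\le\|I\|$, that $P_i(a,d)$ is well-defined because $x$ is a nest point and hence $\var(c_1)\subseteq\dots\subseteq\var(c_p)$, and that the supports only shrink---are word-for-word identical, since none of them involves the aggregation operator. So the only thing that really needs redoing is the correctness identity and the arithmetic-operation count.

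For correctness I would prove by induction on $i$ the analogue
$$\prod_{j=1}^i c_j'(a)=\max_{d\in D}P_i(a,d)$$
for all $a\in D^{\var(c_i')}$, using that $\max$ distributes over a product with a factor that does not depend on $d$ (here $\prod_{j=1}^i c_j'(a)$ is a constant in $d$, and $\max_{d}(\lambda\cdot P_{i-1}(a,d))=\lambda\cdot\max_d P_{i-1}(a,d)$ for $\lambda=\d_i\ge 0$, which is exactly where non-negativity of the weights is used). The base case $i=1$ splits into $a\in\supp(c_1')$, where the definition of $f_1'$ gives it directly since $P_0\equiv 1$, and $a\notin\supp(c_1')$, where $P_1(a,d)=\d_1$ for all $d$ so $\max_d P_1(a,d)=\d_1=c_1'(a)$. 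The induction step splits into the cases $\max_d P_i(a,d)=0$ (then every $P_i(a,d)=0$ by non-negativity, hence every $P_{i+1}(a,d)=0$), $a\in\supp(c_{i+1}')$ (direct from the definition of $f_{i+1}'$), and $a\notin\supp(c_{i+1})$ (then $P_{i+1}(a,d)=\d_{i+1}P_i(a,d)$ and $\max$ pulls the constant out). Applying this with $i=p$ and then noting, exactly as before, that for $c\notin I(x)$ we have $c((a\oplus_x d)|_{\var(c)})=c'(a|_{\var(c)})$ independently of $d$, one gets
$$m(I')=\max_{a\in D^{\var(I)\setminus\{x\}}}\max_{d\in D}\prod_{c\in I'}c((a\oplus_x d)|_{\var(c)})=m(I),$$
with no spurious $|D|$ factor, precisely because $\max$ of a constant over $D$ is that constant rather than $|D|$ times it.

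The runtime/arithmetic-operation analysis is copied unchanged: the same bound on the number of distinct values $P_i(a,d)$ can take (at most $1+\sum_{j=1}^i|c_j|$, since off-support the value is just $\d_i$ times a previously counted value and there are at most $|c_i|$ on-support values), the same $O(p\|I(x)\|)$ multiplications to compute all the $P_i$'s, the same $O(\|I(x)\|)$ per ``max'' aggregation with $p$ of them, and $p$ final divisions, giving $O(p\|I(x)\|)$ in total. I do not expect any genuine obstacle here; if anything, the only point demanding a moment's care is making sure non-negativity of the default values and weights is invoked explicitly wherever a constant is pulled out of a $\max$ (this fails for negative multipliers), and that the ``$\max$ of an empty-looking set'' never arises---it does not, since $D\ne\emptyset$ and the domain of each $P_i(a,\cdot)$ is all of $D$. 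One could, if desired, even phrase the whole proof as an instance of the general fact that the argument of Theorem~\ref{thm:cspalgo} goes through for any commutative semiring-like aggregation $(\oplus,\cdot)$ in which $\cdot$ distributes over $\oplus$ and which has the property that $\bigoplus_{d\in D}\lambda=\lambda$ for the neutral-scaling case, but I think the cleaner exposition is simply to reproduce the earlier proof with $\max$ in place of $\sum$.
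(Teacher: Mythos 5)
Your proposal is correct and takes exactly the route the paper does: the paper's own proof of Theorem~\ref{thm:maxcspalgo} is a one-line remark that the argument is analogous to that of Theorem~\ref{thm:cspalgo}, with the only point singled out being the special case $\max_{d\in D}P_{i-1}(a,d)=0$ (which forces all $P_{i-1}(a,d)=0$ by non-negativity), and your writeup is a faithful, more explicit execution of precisely that analogy, including the correct observation that the $|D|$ factor disappears because the $\max$ of a constant over $D$ is the constant itself.
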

\begin{proof}
  The proof is analogous to that of Theorem~\ref{thm:cspalgo}. Remark that the special case where $\max_{d \in D} P_{i-1}(a,d) = 0$ follows similarly to there since $\max_{d \in D} P_{i-1}(a,d) = 0$ implies that for all $d$ we have $P_{i-1}(a,d) = 0$.
\end{proof}

Now remark that $\max$ is commutative, associative, that is $\max(a, \max(b,c)) = \max (\max(a, b), c)$ and that  distributes with multiplication since all numbers are positive, that is $\max (ab,ac) = a \max(b,c)$. Moreover, we have 
$$\max(\frac{a}{b}, \frac{c}{d}) = \frac{\max(ad,cb)}{bd}.$$

Thus, the results of Section~\ref{sec:techlem} can be adapted in a straightforward fashion and the results of \ref{sec:partorder} still hold. We can now adapt Theorem~\ref{thm:explicit} (we use adapted notations for $m(J,a)$ for $J \subseteq I$ and $a$ a partial assignment).
\begin{theorem}
  For all $c \in I$ and $k \geq 0$, for all $a : \var(c) \setminus X_k \rightarrow D$, either
$$ \ck{c}{k}(a) = 0$$
or
$$ \ck{c}{k}(a) = \twofracm{I_k(c)}{ I_k(c) \setminus \{c\}}{a}.$$
\end{theorem}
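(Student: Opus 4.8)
The plan is to mirror the proof of Theorem~\ref{thm:explicit} almost verbatim, dropping the multiplicative normalization constants $\alpha_k(c)$ since division by $|D|$ is replaced by taking a maximum, which introduces no denominator growth. First I would set up the inductive statement: for all $c \in I$ and $k \ge 0$, for all $a : \var(c) \setminus X_k \to D$, either $\ck{c}{k}(a) = 0$ or $\ck{c}{k}(a) = \twofracm{I_k(c)}{I_k(c) \setminus \{c\}}{a}$, using the $\max$-version of the partition function $m(J,a)$ together with the convention $m(\emptyset, a) = 1$. The base case $k=0$ is immediate since $\prec_0 = \emptyset$ and so $I_0(c) = \{c\}$, giving $\ck{c}{0}(a) = c(a) = m(\{c\},a)/m(\emptyset,a)$.

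For the induction step I would first dispose of the easy case $x_{k+1} \notin \var(c)$: then $\ck{c}{k+1} = \ck{c}{k}$ and $I_{k+1}(c) = I_k(c)$ by Lemma~\ref{lem:cspik}, so nothing changes. For the constraints $I(x_{k+1}) = \{c_1, \ldots, c_m\}$ with $c_1 \prec \cdots \prec c_m$ I would argue by a nested induction on $i$, exactly as in Theorem~\ref{thm:explicit}. The key auxiliary fact I need is a $\max$-analogue of Lemma~\ref{lem:cspbigprod}, namely that $\prod_{i=1}^{j} \twofracm{I_k(c_i)}{I_k(c_i) \setminus \{c_i\}}{a} = \twofracm{I_{k+1}(c_j)}{I_{k+1}(c_j) \setminus \{c_1, \ldots, c_j\}}{a}$; its proof is identical since the cancellation lemmas of Section~\ref{sec:techlem} go through once one replaces sums by maxima, using that $\max$ is commutative, associative, distributes over multiplication of non-negative numbers, and satisfies $\max(a/b, c/d) = \max(ad, cb)/(bd)$, as noted in the appendix. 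Then for $c_1$ I plug in the defining formula $\ck{c_1}{k+1}(a) = \max_{d} \ck{c_1}{k}(a \oplus_x d)$, apply the induction hypothesis on $k$, note that the denominator $m(I_k(c_1) \setminus \{c_1\}, a \oplus_x d)$ does not depend on $d$ because no constraint there contains $x$, pull the $\max$ through the numerator, and invoke $I_{k+1}(c_1) = I_k(c_1)$; the vanishing case is handled because if $d$ is not admissible then $\ck{c_1}{k}(a\oplus_x d) = 0$ forces $m(I_k(c_1), a \oplus_x d) = 0$. For $c_{i+1}$ I substitute the defining quotient of maxima, apply the $\max$-version of Lemma~\ref{lem:cspbigprod} to numerator and denominator, and then apply the $\max$-analogue of Corollary~\ref{cor:cspcancel} with exactly the same choices $J_1 = I_{k+1}(c_i) \setminus \{c_1, \ldots, c_i\}$, $J_2 = I_{k+1}(c_{i+1})$, $J_3 = I_{k+1}(c_{i+1}) \setminus \{c_1, \ldots, c_{i+1}\}$, $J_4 = I_{k+1}(c_i)$, $W = \var(c_{i+1}) \setminus X_{k+1}$; the verification of conditions \ref{condi}--\ref{condv} is word-for-word identical to the proof of Theorem~\ref{thm:explicit} since it is purely about the partial orders $\prec_k$ and not about the arithmetic.

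I expect the only real work is checking that the technical lemmas of Section~\ref{sec:techlem}---Lemma~\ref{lem:sub0}, Lemma~\ref{lem:cspmult}, Corollary~\ref{cor:cspdiv}, Corollary~\ref{cor:cspcancel}---survive the replacement of $\sum$ by $\max$. Lemma~\ref{lem:sub0} (if $m(J,a) = 0$ then $m(I,a) = 0$) still holds because a maximum of non-negative terms is $0$ iff every term is $0$. Lemma~\ref{lem:cspmult} ($m(J_1 \cup J_2, a) = m(J_1,a) \, m(J_2,a)$ when the shared variables are fixed by $a$ and $J_1, J_2$ are disjoint) holds because $\max$ distributes over products of non-negatives, so the double maximum factors. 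The two corollaries then follow formally as before by cancellation of the common factor $m(J_1 \cap J_2, a)$, which is legitimate once we know it is non-zero via Lemma~\ref{lem:sub0}. The main obstacle, such as it is, is purely bookkeeping: making sure the vanishing cases ($\ck{c}{k}(a) = 0$) are threaded through correctly everywhere, since unlike the $\sSAT$ case there is no $\alpha_k(c)$ to carry but also no reason the intermediate maxima cannot be zero; this is handled, as in the original, by observing that a zero weight of a subinstance propagates upward by Lemma~\ref{lem:sub0}. Since no normalization constant appears and all numbers remain non-negative rationals obtained as quotients $m(I_k(c), a)/m(I_k(c) \setminus \{c\}, a)$ of partition functions of subinstances, the bitsize bound and hence the polynomial-time runtime follow exactly as in Theorem~\ref{thm:betaalgo}.
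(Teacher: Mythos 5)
Your proposal is correct and follows exactly the route the paper intends: the paper itself states this theorem with no written proof, merely remarking that the lemmas of Sections~\ref{sec:techlem} and~\ref{sec:partorder} adapt once sums are replaced by maxima, and your write-up fills in precisely those details (dropping the $\alpha_k(c)$ normalization because $\max_{d\in D} P_0(a,d)=1$ rather than $|D|$, and re-checking Lemmas~\ref{lem:sub0} and~\ref{lem:cspmult} for $\max$). No substantive deviation and no gap.
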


Now the tractability results for $\mCSPd$ and $\mSAT$ follow directly.

\begin{theorem}
There is an algorithm that, given a $\beta$-acyclic instance $I$  of $\mCSPd$, computes $m(I)$ in polynomial time.
\end{theorem}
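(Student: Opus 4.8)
The plan is to mirror exactly the structure of the proof of Theorem~\ref{thm:betaalgo}, replacing each sum over $D$ by a maximum over $D$ and tracking the fact that this time there is no normalizing factor $\alpha_k(c)$. First I would invoke the $\max$-analogue of the single-step elimination procedure, Theorem~\ref{thm:maxcspalgo}, which already gives $\calH(I')=\calH(I)\setminus x$, $\|I'\|\le\|I\|$ and, crucially, $m(I)=m(I')$ (with no extra factor, in contrast to the $|D|$ appearing in Theorem~\ref{thm:cspalgo}). Iterating this along a $\beta$-elimination order of $\calH(I)$ — which can be computed in polynomial time as noted after the definition of $\beta$-elimination order — transforms $I$ into an instance $I^\ast$ all of whose constraints have empty variable set, so $m(I)=m(I^\ast)=\prod_{c\in I^\ast}c(\epsilon)$. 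This yields a polynomial number of arithmetic operations exactly as in Theorem~\ref{thm:csparith}.

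The remaining point, and the only nontrivial one, is the bitsize of the intermediate weights, i.e.\ the $\max$-analogue of Section~\ref{sec:analysis}. Here I would re-run the development of Section~\ref{sec:techlem} verbatim: Lemma~\ref{lem:cspmult}, Corollary~\ref{cor:cspdiv} and Corollary~\ref{cor:cspcancel} only use commutativity, associativity, distributivity over multiplication and the fact that $w(J,a)=0$ forces all terms to vanish — all of which hold for $\max$ over $\Qp$ by the identities $\max(a,\max(b,c))=\max(\max(a,b),c)$, $\max(ab,ac)=a\max(b,c)$, and $\max(a/b,c/d)=\max(ad,cb)/(bd)$ already recorded in the excerpt. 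Define $m(J,a):=\max_{b\in D^{\var(J)\setminus W},\,a\sim b}\prod_{c\in J}c(b|_{\var(c)})$ with the convention $m(\emptyset,a)=1$; then the partial orders $\prec_k$ and the subinstances $I_k(c)$ of Section~\ref{sec:partorder} are defined purely combinatorially and carry over unchanged, as do Lemmas~\ref{lem:totalorder}--\ref{lem:cspik} and Lemma~\ref{lem:cspbigprod}. Finally, the $\max$-version of Theorem~\ref{thm:explicit} states that for all $c\in I$, $k\ge0$ and $a:\var(c)\setminus X_k\to D$, either $\ck{c}{k}(a)=0$ or $\ck{c}{k}(a)=\twofracm{I_k(c)}{I_k(c)\setminus\{c\}}{a}$, with \emph{no} factor $\alpha_k(c)$ because Theorem~\ref{thm:maxcspalgo} produces the quotient $\max_d P_i/\max_d P_{i-1}$ directly rather than an averaged quantity; the inductive proof is the word-for-word translation of the proof of Theorem~\ref{thm:explicit}, the case analyses being identical since they only manipulate the orders $\prec_k$ and apply Corollary~\ref{cor:cspcancel}.

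Consequently every intermediate weight is a ratio $m(I_k(c),a)/m(I_k(c)\setminus\{c\},a)$ of two products of at most $\|I\|$ original weights, hence a rational of bitsize polynomial in $\|I\|$; each of the polynomially many arithmetic operations therefore runs in polynomial time, and the algorithm computes $m(I)$ in polynomial time overall.

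The main obstacle is purely bookkeeping: one must check that none of the proofs in Sections~\ref{sec:techlem}--\ref{sec:analysis} secretly used additivity (as opposed to the idempotent-semiring structure of $(\Qp,\max,\cdot)$) or the exact constant $|D|$; the place to be careful is the disappearance of the factors $\alpha_k(c)$, which is what makes the $\max$-statement cleaner than its $\sCSPd$ counterpart rather than harder, so in fact no genuine difficulty arises.
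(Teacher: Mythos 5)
Your proposal is correct and follows essentially the same route as the paper, which likewise obtains the result by iterating the max-analogue of the elimination step (Theorem~\ref{thm:maxcspalgo}) along a $\beta$-elimination order and controlling bitsizes via the max-version of Theorem~\ref{thm:explicit}, where the factor $\alpha_k(c)$ indeed disappears. Your write-up is in fact more explicit than the paper's, which simply asserts that the adaptations of Sections~\ref{sec:techlem}--\ref{sec:analysis} are straightforward and that the tractability results ``follow directly.''
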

\begin{theorem}
There is an algorithm that solves $\mSAT$ on $\beta$-acyclic CNF-formulas in polynomial time.
\end{theorem}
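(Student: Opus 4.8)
The plan is to follow verbatim the template that was used for $\sSAT$: reduce $\mSAT$ to the weighted maximization variant $\mCSPd$ by a reduction that leaves the hypergraph untouched, solve $\beta$-acyclic $\mCSPd$ in polynomial time, and translate the answer back. The reduction is exactly Lemma~\ref{lem:maxtocsp}: each clause $C$ becomes a weighted constraint with default value $2$ whose support contains only the unique assignment of $\var(C)$ falsifying $C$, with value $1$ there, so that $m(I,a)=2^{s}$ where $s$ is the number of clauses of $F$ satisfied by $a$; since $\calH(F)=\calH(I)$, the instance $I$ is $\beta$-acyclic whenever $F$ is. Hence $\mSAT(F)=\log_2 m(I)$, and it only remains to solve $\beta$-acyclic $\mCSPd$ in polynomial time.

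For that, I would re-run Sections~\ref{sec:cspnest} and~\ref{sec:analysis} with $\max$ replacing $\sum$ throughout. The elimination step is Theorem~\ref{thm:maxcspalgo}, whose correctness proof is the same induction on $i$ as in Theorem~\ref{thm:cspalgo}, using only that $\max$ is commutative, associative, distributes over multiplication of non-negative rationals, and satisfies $\max(a/b,c/d)=\max(ad,cb)/(bd)$; the degenerate case $\max_d P_{i-1}(a,d)=0$ is again handled by non-negativity. The one genuine change is that iterating the step gives $m(I)=m(I^{*})$ with \emph{no} $|D|^{|\var(I)|}$ factor, since $\max$ contributes no multiplicative constant. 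The technical lemmas of Section~\ref{sec:techlem} (multiplicativity of $m(J_1\cup J_2,a)$ over variable-disjoint-modulo-$W$ subinstances, and the cancellation corollaries) transfer directly, as they only use non-negativity and distributivity; the entire partial-order apparatus of Section~\ref{sec:partorder} ($\prec$, $\prec_k$, $I_k(c)$ and Lemmas~\ref{lem:totalorder}--\ref{lem:cspik}) is purely combinatorial and is reused unchanged. With these in place, the $\max$-analogue of Theorem~\ref{thm:explicit} asserts that after $k$ elimination steps $\ck{c}{k}(a)$ is either $0$ or equals $m(I_k(c),a)/m(I_k(c)\setminus\{c\},a)$ — again with no $\alpha_k(c)$ correction — and its proof is the same induction through the cancellation corollary.

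The step I expect to require the most care is the polynomial-time-on-a-RAM bound, i.e.\ the $\max$-version of the discussion opening Section~\ref{sec:analysis} and of Theorem~\ref{thm:betaalgo}. The key observation is that for any subinstance $J\subseteq I$ and partial assignment $a$, $m(J,a)$ is by definition the product $\prod_{c\in J}c(b|_{\var(c)})$ for a \emph{single} optimal completion $b$, hence a product of at most $|J|\le|I|$ weights taken from the input; therefore $\ck{c}{k}(a)=m(I_k(c),a)/m(I_k(c)\setminus\{c\},a)$ is a ratio of two such products and has bitsize polynomial in $\|I\|$. Combined with the $O(s(I)^2\|I\|)$ bound on the number of arithmetic operations, which carries over from Theorem~\ref{thm:csparith} with $\sum$ replaced by $\max$, and with the polynomial cost of recomputing the supports at each step, this yields a polynomial-time algorithm for $\beta$-acyclic $\mCSPd$, and composing with the reduction above (and the final $\log_2$) gives the claimed polynomial-time algorithm for $\mSAT$ on $\beta$-acyclic CNF-formulas.
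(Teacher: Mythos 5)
Your proposal is correct and follows essentially the same route as the paper: reduce via Lemma~\ref{lem:maxtocsp} (taking $\log_2$ at the end), replace $\sum$ by $\max$ in the elimination step and in the analysis, reuse the purely combinatorial order apparatus unchanged, and note that the $\max$-analogue of Theorem~\ref{thm:explicit} needs no $\alpha_k(c)$ correction. Your explicit observation that $m(J,a)$ is a single product of at most $|J|$ input weights, which immediately bounds the bitsize, is a clean way to finish the RAM argument that the paper leaves implicit.
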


\end{appendix}

\end{document}